\acrodef{AEP}{Asymptotic Equipartition Property}
\acrodef{AoA}{Angle of Arrival}
\acrodef{AWGN}{Additive White Gaussian Noise}
\acrodef{BER}{Bit-Error-Rate}
\acrodef{BEC}{Binary Erasure Channel}
\acrodef{BPSK}{Binary Phase-Shift Keying}
\acrodef{BSC}{Binary Symmetric Channel}
\acrodef{CDF}[CDF]{Cumulative Distribution Function}
\acrodef{CLT}[CLT]{Central Limit Theorem}
\acrodef{CSI}[CSI]{Channel State Information}
\acrodef{DMC}[DMC]{Discrete Memoryless Channel}
\acrodef{DMS}[DMS]{Discrete Memoryless Source}
\acrodef{iid}[i.i.d.]{independent and identically distributed}
\acrodef{lhs}[l.h.s.]{left-hand-side}
\acrodef{rhs}[r.h.s.]{right-hand-side}
\acrodef{LPD}[LPD]{Low Probability of Detection}
\acrodef{LDPC}[LDPC]{Low-Density Parity-Check}
\acrodef{MAC}[MAC]{multiple-access channel}
\acrodef{MIMO}[MIMO]{Multiple-Input Multiple-Output}
\acrodef{MISO}{Multiple-Input Single-Output}
\acrodef{PDF}[PDF]{Probability Distribution Function}
\acrodef{PMF}[PMF]{Probability Mass Function}
\acrodef{PPM}[PPM]{Pulse Position Modulation}
\acrodef{PSD}{Power Spectral Density}
\acrodef{QPSK}{Quadrature Phase-Shift Keying}
\acrodef{SIMO}{Single-Input Multiple-Output}
\acrodef{SNR}{Signal-to-Noise Ratio}
\acrodef{wrt}[w.r.t.]{with respect to}
\acrodef{WSS}{Wide Sense Stationary}
\DeclareMathAlphabet{\eurm}{U}{eur}{m}{n}
\DeclareMathAlphabet{\mathbsf}{OT1}{cmss}{bx}{n}
\DeclareMathAlphabet{\mathssf}{OT1}{cmss}{m}{sl}
\DeclareMathAlphabet{\mathcsf}{OT1}{cmss}{sbc}{n}
\DeclareSymbolFont{bsfletters}{OT1}{cmss}{bx}{n}  
\DeclareSymbolFont{ssfletters}{OT1}{cmss}{m}{n}
\DeclareMathSymbol{\bsfGamma}{0}{bsfletters}{'000}
\DeclareMathSymbol{\ssfGamma}{0}{ssfletters}{'000}
\DeclareMathSymbol{\bsfDelta}{0}{bsfletters}{'001}
\DeclareMathSymbol{\ssfDelta}{0}{ssfletters}{'001}
\DeclareMathSymbol{\bsfTheta}{0}{bsfletters}{'002}
\DeclareMathSymbol{\ssfTheta}{0}{ssfletters}{'002}
\DeclareMathSymbol{\bsfLambda}{0}{bsfletters}{'003}
\DeclareMathSymbol{\ssfLambda}{0}{ssfletters}{'003}
\DeclareMathSymbol{\bsfXi}{0}{bsfletters}{'004}
\DeclareMathSymbol{\ssfXi}{0}{ssfletters}{'004}
\DeclareMathSymbol{\bsfPi}{0}{bsfletters}{'005}
\DeclareMathSymbol{\ssfPi}{0}{ssfletters}{'005}
\DeclareMathSymbol{\bsfSigma}{0}{bsfletters}{'006}
\DeclareMathSymbol{\ssfSigma}{0}{ssfletters}{'006}
\DeclareMathSymbol{\bsfUpsilon}{0}{bsfletters}{'007}
\DeclareMathSymbol{\ssfUpsilon}{0}{ssfletters}{'007}
\DeclareMathSymbol{\bsfPhi}{0}{bsfletters}{'010}
\DeclareMathSymbol{\ssfPhi}{0}{ssfletters}{'010}
\DeclareMathSymbol{\bsfPsi}{0}{bsfletters}{'011}
\DeclareMathSymbol{\ssfPsi}{0}{ssfletters}{'011}
\DeclareMathSymbol{\bsfOmega}{0}{bsfletters}{'012}
\DeclareMathSymbol{\ssfOmega}{0}{ssfletters}{'012}
\newcommand{\calA}{{\mathcal{A}}}
\newcommand{\calB}{{\mathcal{B}}}
\newcommand{\calC}{{\mathcal{C}}}
\newcommand{\calE}{{\mathcal{E}}}
\newcommand{\calL}{{\mathcal{L}}}
\newcommand{\calP}{{\mathcal{P}}}
\newcommand{\calQ}{{\mathcal{Q}}}
\newcommand{\calT}{{\mathcal{T}}}
\newcommand{\calS}{{\mathcal{S}}}
\newcommand{\calX}{{\mathcal{X}}}
\newcommand{\calY}{{\mathcal{Y}}}
\newcommand{\calZ}{{\mathcal{Z}}}
\newcommand{\E}[2][]{{\mathbb{E}_{#1}}{\left(#2\right)}}       
\renewcommand{\P}[2][]{{\mathbb{P}_{#1}}{\left(#2\right)}}
\newcommand{\D}[2]{{{\mathbb{D}}\!\left({#1\Vert#2}\right)}}
\newcommand{\V}[1]{{{\mathbb{V}}\!\left(#1\right)}}
\newcommand{\avgI}[1]{{{\mathbb{I}}\!\left(#1\right)}}
\newcommand{\avgH}[1]{{\mathbb{H}}\!\left(#1\right)}
\newcommand{\wt}[1]{\ensuremath{\textnormal{wt}(#1)}}
\newcommand{\card}[1]{\ensuremath{\left|{#1}\right|}}           
\newcommand{\eqdef}{\ensuremath{\triangleq}}                    
\newcommand{\intseq}[2]{\ensuremath{\llbracket{#1},{#2}\rrbracket}}  
\newcommand{\indic}[1]{\ensuremath{\mathds{1}\!\left\{#1\right\}}}
\renewcommand{\leq}{\leqslant}
\renewcommand{\geq}{\geqslant}
\newcommand{\proddist}{%
  \mathchoice{\raisebox{1pt}{$\displaystyle\otimes$}}
             {\raisebox{1pt}{$\otimes$}}
             {\raisebox{0.5pt}{\scalebox{0.7}{$\scriptstyle\otimes$}}}
             {\raisebox{0.4pt}{\scalebox{0.6}{$\scriptscriptstyle\otimes$}}}}
\newcommand{\pn}{{\proddist n}}
\newtheorem{theorem}{Theorem}
\newtheorem{remark}{Remark}
\newtheorem{definition}{Definition}
\newtheorem{lemma}{Lemma}
\newtheorem{corollary}{Corollary}
\acrodef{ROC}[ROC]{Receiver Operation Characteristic}
\acrodef{PPM}[PPM]{Pulse-Position Modulation}
\newcommand{\pr}[1]{{\left(#1\right)}}
\newcommand{\modfirst}[1]{{\color{black}#1}}
\acrodef{AVC}{arbitrary varying channel}
\begin{document}
\title{Learning an Adversary's Actions for Secret Communication}
\author{Mehrdad Tahmasbi, Matthieu R. Bloch, and Aylin Yener\thanks{This work was presented in part at the 2017 IEEE International Symposium on Information Theory~\cite{tahmasbi2017learning}. This work was supported in part by NSF  award CCF 1527074.}}
\maketitle

\begin{abstract}
Secure communication over a wiretap channel is investigated, in which an active adversary modifies the state of the channel and the legitimate transmitter has the opportunity to sense and learn the adversary's actions. The adversary has the ability to switch the channel state and observe the corresponding output at every channel use while the encoder has \emph{causal} access to observations that depend on the adversary's actions. A joint learning/transmission scheme is developed in which the legitimate users learn and adapt to the adversary's actions. For some channel models, it is shown that the achievable rates, defined precisely for the problem, are arbitrarily close to those obtained \emph{with hindsight}, had the transmitter known the actions ahead of time. This initial study suggests that there is much to exploit and gain in physical-layer security by learning the adversary, e.g., monitoring the environment.
\end{abstract}

\section{Introduction}
\label{sec:introduction}

The seminal papers of Wyner~\cite{Wyner1975} and Csisz\'ar and K\"orner~\cite{Csiszar1978} on the wiretap channel have provided the foundation for advances in information-theoretic secrecy. In more recent past, primarily motivated by the advent of wireless networks, a number of new secure communication models have been studied leading to new design insights~\cite{yener2015wireless}. An example is the realization that secrecy capacity in wireless channels with fading could be positive even with an eavesdropper obtaining a higher \ac{SNR} than the legitimate receiver~\cite{Bloch2008c,Liang2008a}. Another example is the merits of introducing judicious interference, i.e., cooperative jamming~\cite{Tekin2008}, and its ability to increase secrecy rates in multi-terminal settings~\cite{Tekin2008, he2014providing, He2008a,Pierrot2011a,ElGamal2013}. Yet another example is the ability to network with entities even if they are untrusted~\cite{He2010b}. In all these initial studies, one critical assumption is that the eavesdropper's channel is completely or partially known to the legitimate parties. Another crucial assumption is that it is a purely passive observer, is unable to make strategic decisions, or tamper with the channels in any way. More recently, several studies have aimed at removing these assumptions or introducing new models that account for more powerful attacks.  

To this end, a first direction has addressed the wiretap model in which the eavesdropper's channel is completely unknown and can be varying in each channel use. In this set up, it has been shown in~\cite{He2014a} that utilizing multiple antennas are useful in providing secrecy, albeit with reduction in degrees of freedom as compared to the other extreme of completely known channels.  A model in which the adversary can modify its channel based on overheard signals has been addressed in the specific setting of a two-way wiretap channel, utilizing cooperative jamming to counter the attack~\cite{He2011}. Additionally, the extensions of models to active adversaries that can also jam the channel has been captured with arbitrarily-varying wiretap channel models, in which both main and eavesdropper's channels depend on states under complete control of the adversary. Secrecy rates for arbitrarily-varying wiretap channels have been studied for point-to-point channels~\cite{MolavianJazi2009, Schaefer2015, Goldfeld2016a} and multiple-access channels~\cite{Chou2017}, leading to characterizations of situations in which secure communication is possible.

Another line of work has been towards addressing passive adversaries with strategic capabilities in their monitoring of signals. These efforts build on the model known as wiretap channel Type II~\cite{ozarow1984wire}, in which the main channel is noiseless and the eavesdropper has the ability to observe only a subset of the transmitted codeword bits with known size, but can choose the subset it taps. This model imposes a more stringent secrecy constraint, requiring a universal guarantee againt any choice of observed subset.  Like~\cite{ozarow1984wire}, follow-up works that have generalized the wiretap II model beyond a noiseless main channel~\cite{Nafea2015,Goldfeld2016c} have demonstrated that, the impact of the adversary's strategic ability to choose the observed subset is no worse than random erasures of a subset of the same size. More recently, the secrecy capacity of a model that unifies the wiretap and wiretap II models has been established in~\cite{nafea2016new}. Some multi-terminal extensions of this model have been also been studied~\cite{nafea2016newmult,nafea2017newmod,Nafea2019}.

Despite these successes, application of informa\-tion-theoretic security in practical systems is yet to take place. This is in large part because the models to date include assumptions that are either extremely optimistic or perhaps overly pessimistic in regards to what can be known about the adversary. The rationale behind the present paper is that there might be a middle ground to develop adversarial yet realistic models. More precisely, we suggest that, although an adversary may potentially control communication channels, its actions are likely to come at a cost, i.e., the modification may induce some physical effect in the environment that can be detected by other parties. Thus, legitimate parties may have the ability to \emph{learn} the adversary's actions and accordingly adapt their coding scheme. This idea is also motivated by several studies that have investigated the role of feedback for reliable communication over arbitrary-varying channels~\cite{shulman2003communication, lomnitz2013universal, eswaran2010zero} and have shown that the empirical capacity of an arbitrary-varying channel can be achieved with negligible feedback.


As a first step towards integrating learning into information-theoretic secrecy models, we study here a wiretap channel model in which an active adversary is able to attack the signals on the main channel by selecting one of two main channels at each channel use. Unlike previously studied models, we allow the legitimate transmitter to monitor the channel and \emph{causally} receive a signal correlated with the adversary's observations. This consequently allows the legitimate parties to simultaneously ``explore'' the adversary's behavior and ``exploit'' it for providing secrecy. More concretely, our coding scheme chains the transmission of successive sub-blocks to learn the adversary's actions in past sub-blocks and causally generate secret keys from past observations. Secure communication is achieved by superposing the transmission of uniform random bits and one-time-padded message bits through a suitably generalized layered-secrecy coding scheme~\cite{zou2013layered}. To ensure reliability in the presence of an attacker actively tampering with the main channel, which was not considered in our preliminary results~\cite{tahmasbi2017learning}, our coding scheme also leverages a universal list-decoder chained over sub-blocks. We emphasize that key generation is the crucial building block that enables learning for secrecy, by allowing our coding scheme to \emph{defer} the decisions as to which bits are secret until \emph{after} the adversary's actions have been learned. We also point out that the greater flexibility offered by key generation compared to direct wiretap coding can be traced back to earlier works but for completely different models and without any connection to learning. For instance, key generation has been used~\cite{Bloch2008c,Gungor2013} in fading channels as a means to buffer secret keys and \emph{provision} secrecy.

Perhaps surprisingly, we show that the legitimate parties achieve the secrecy rates that they would have obtained \emph{with hindsight}, had they known the attacker's actions non-causally. This result is conceptually similar to those that exist in the context of multi-arm bandit problems~\cite{Bubeck2012}: without knowing the adversary's actions a priori, one can simultaneously exploit and explore to develop an asymptotically optimal strategy. This connection is not fortuitous, as our proof extends ideas laid out in the context of universal channel and source coding~\cite{merhav1998universal,lomnitz2013universal} that make explicit use of reinforcement learning. In particular, our definition of rate is similar to~\cite{lomnitz2013universal}, in which the number of bits required to be decoded correctly is only specified at the end of the transmission, and our converse proof follows the same approach as in~\cite{lomnitz2013universal}.  

\modfirst{A direct comparison of our result with previous works is not entirely fair since our model is different. Nevertheless, it is perhaps useful to explicitly highlight the similarities and differences with the most related prior works. In particular, two major characteristics of our model are to consider some form of channel state information and an adversarial state. The usefulness of channel state information at the transmitter to increase secrecy rates over wiretap channels has been extensively analyzed both in the non-causal~\cite{Chen2008,Khisti2011} and causal case~\cite{Chia2012} but only under the assumption that the channel state follows a known \ac{iid} distribution. In contrast, in our model, the state is adversarial and need not have a well-defined distribution. Adversarial arbitrarily-varying wiretap channel models have also been explored~\cite{MolavianJazi2009, Schaefer2015, Goldfeld2016a} but in situations without state information. Consequently, by nature, known results establish worst case rates that are only positive under certain assumptions, such as the type constraint analyzed in~\cite{Goldfeld2016a}.}

The remainder of the paper is organized as follows. In Section~\ref{sec:probl-form-main}, we introduce the model under investigation. In Section~\ref{sec:achievability-proof}, which constitutes the core of our contribution, we develop an achievability proof based on the coding scheme outlined earlier. In Section~\ref{sec:converse-proof}, we establish a converse that matches our achievability. In Section~\ref{sec:discussion}, we conclude with a discussion of natural extensions of our model. 

\section{Problem Formulation and Main Results}
\label{sec:probl-form-main}

\subsection{Notation}
We denote random variables by uppercase letters (e.g., $X$), their realizations by lowercase letters (e.g., $x$), sets by calligraphic letters (e.g., $\calX$), and vectors by bold face letters (e.g., $\mathbf{x}$). For $\mathbf{x} = (x_1, \cdots, x_n) \in \calX^n$ and $a\in \calX$,  let $N(\mathbf{x}|a)\eqdef |\{i: x_i = a\}|$. For $\mathbf{x}\in\{0, 1\}^n$, let $\wt{\mathbf{x}} \eqdef N(\mathbf{x}|1)$ and $\alpha(\mathbf{x})\eqdef \frac{\wt{\mathbf{x}}}{n}$. If $P_X$ is a \ac{PMF} over $\calX$,  let $\calT_{P_X} \eqdef \{\mathbf{x} \in\calX^n: \text{for all } a \in \calX:~N(\mathbf{x}|a) = P(a) n\}$. \footnote{\modfirst{To be precise, $\calT_{P_X}$ depends on $n$, but we drop $n$ from our notation for simplicity.}} We denote by $\calP_n(\calX)$ the set of all \acp{PMF} $P_X$ for which $\calT_{P_X} \neq \emptyset$ and by $\calP_n(\calX|\calY)$ the set of all conditional \acp{PMF} $P_{X|Y}$  for which there exists a joint \ac{PMF} $P_{XY}$ such that $P_{X|Y}= \frac{P_{XY}}{P_Y}$ and  $\calT_{P_{XY}} \neq 0$. For $\mathbf{x} \in \calX^n$ and a conditional \ac{PMF} $P_{Y|X}$, we also define $\calT_{P_{Y|X}}(\mathbf{x})\eqdef \{\mathbf{y}\in \calY^n:\text{for all }a\in \calX, b\in \calY:  N(\mathbf{x}, \mathbf{y}|a, b) = P_{Y|X}(b|a) N(\mathbf{x}|a)\}$. For three discrete random variables $(X, Y, Z)$ with joint \ac{PMF} $P_{XYZ}$, we define
\begin{align}
P_{X|YZ} \circ P_{Z} &\eqdef \sum_z P_{X|YZ=z}P_Z(z) \eqdef P_{X|Y}\\
P_{Z|Y} \times P_{X|YZ} &\eqdef P_{Z|Y} P_{X|YZ} \eqdef P_{XZ|Y}\\
I(P_X, P_{Y|X}) \eqdef I(P_{XY}) &\eqdef \avgI{X;Y}.
\end{align}
For two sequences $\mathbf{x} \in \calX^n$ and $\mathbf{y} \in \calY^n$ such that  $(\mathbf{x}, \mathbf{y}) \in \calT_{P_{XY}}$, we define 
\begin{align}
I(\mathbf{x}\wedge \mathbf{y}) \eqdef I(P_{XY}).
\end{align}
For two integers $a$ and $b$ such that $a\leq b$, we denote the set $\{a, a+1, \cdots, b - 1, b\}$ by $\intseq{a}{b}$. If $a>b$, then $\intseq{a}{b}\eqdef \emptyset$. Throughout the paper, we measure the information in bits and $\log(\cdot)$ should be understood to be base $2$; we use $\ln(\cdot)$ for the logarithm base $e$. 
\label{sec:problem_form}
\begin{figure}
\centering
 \includegraphics[scale=1.1]{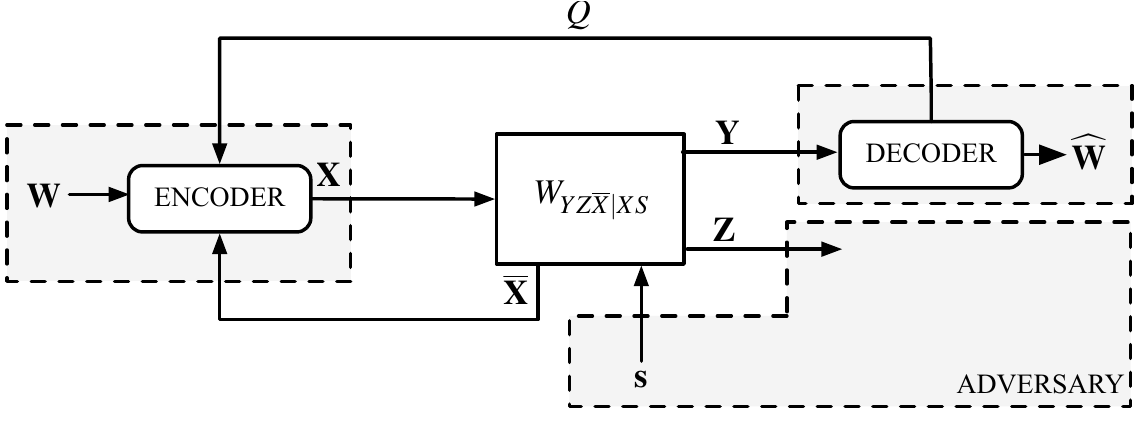}
\caption{Problem setup}
\label{fig:problem-setup}
\end{figure}
\subsection{Problem Formulation and Main Results}
We consider the channel model illustrated in Figure~\ref{fig:problem-setup}, in which a transmitter wishes to communicate securely to a receiver over $N$ channel uses of an arbitrarily varying wiretap \ac{DMC} $(\calX \times \calS, W_{YZ|XS}, \calY, \calZ)$. We assume that the capacity \emph{with} common randomness of the arbitrarily varying channel  $(\calX\times \calS, W_{Y|XS}, \calY)$ is positive, otherwise it will be impossible to send any information reliably over this channel. The terminals corresponding to $Y$ and $Z$ are in control of the legitimate receiver and the adversary, respectively. The adversary is allowed to actively and arbitrarily choose the state of the channel $S$ at each channel use. The transmitter and the receiver may share  secret randomness prior to the transmission whose amount will be made precise. For simplicity, we suppose $\calS=\{0, 1\}$; however, generalizing our results to arbitrary finite $\calS$ is fairly direct, and we discuss this extension in Section~\ref{sec:discussion}. We assume that the transmitter is able to monitor the effect of the adversary's actions, which we model as \emph{strictly causal} observations at the output of an arbitrarily varying \ac{DMC} $(\mathcal{X}\times \calS, W_{\overline{X}|XS}, \overline{\mathcal{X}})$ controlled by the same states as the main wiretap channel. Our only assumption for this channel is that $W_{\overline{X}|XS=0}\neq W_{\overline{X}|XS=1}$, as otherwise  the model reduces to an arbitrarily varying wiretap channel~\cite{Bjelakovic2012}. We assume that all channel outputs are conditionally independent given the input and that the statistical description of the channels are known to all parties. As discussed in Remark~\ref{rm:assumptions} below, we rely on these assumptions for our converse proof, but they play no crucial role in the achievability. The transmitted sequence is denoted by $\mathbf{X} \eqdef (X_1, \cdots, X_N)$, while the corresponding observations of the receiver and the adversary are denoted by $\mathbf{Y} \eqdef (Y_1, \cdots, Y_N)$ and $\mathbf{Z} \eqdef (Z_1, \cdots, Z_N)$, respectively. The monitored sequence is $\mathbf{\overline{X}} \eqdef (\overline{X}_1, \cdots, \overline{X}_N)$.

Formally, a code for this channel model operates as follows. Unlike traditional wiretap channel models, the number of message bits is unknown at the beginning of transmission and potentially depends on the adversary's actions. Therefore, it is convenient to assume that the transmitter has access to  $K$ uniformly distributed bits $\mathbf{W} \eqdef (W_1, \cdots, W_K)$, and that only the first $\psi$ bits will be transmitted\footnote{Despite conceptual similarities with layered secrecy coding~\cite{zou2013layered}, our problem formulation is different for technical reasons.}. Both the encoder and the decoder also have access to a secret common randomness source $Q$ distributed according to $P_{Q}$ over $\calQ$. The encoder consists of $N$ possibly stochastic functions  $\mathbf{f} = (f_1, \cdots, f_N)$ where $f_i:\overline{\mathcal{X}}^{i-1} \times\{0,1\}^K \times \calQ\to \mathcal{X}$ outputs a symbol for the transmission over the channel. The total number of transmitted bits $\psi:\overline{\mathcal{X}}^N\times \calQ \to\intseq{0}{K}$ is a function of the transmitter's observations and is determined after the $N^{\text{th}}$ transmission. The decoder is a function $\phi: \mathcal{Y}^N\times \calQ \to \{0,1\}^K$, which allows the receiver to form an estimate $(\widehat{W}_1, \cdots, \widehat{W}_K) \eqdef \phi(\mathbf{Y}, Q)$ of the transmitted bits. Since the channel is varying according to the adversary's actions, the receiver is \emph{not} required to reliably decode all bits. We assume that there exists a function $\widehat{\psi}: \mathcal{Y}^N\times \calQ\to \intseq{0}{K}$ that estimates the number of bits actually transmitted. The quintuple $(\mathbf{f}, \phi, \psi, \widehat{\psi}, Q)$ defines an $(N, K)$ code $\calC$, and the functions $\mathbf{f}$, $\mathbf{\phi}$, $\psi$, and $\widehat{\psi}$ are assumed to be publicly known. For all $\mathbf{s}\in\calS^N$, reliability is measured with a probability of error defined as 
\begin{align}
P_e(\calC|\mathbf{s})\eqdef \P{\widehat{\psi}(\mathbf{Y}, Q)\neq \psi(\overline{\mathbf{X}}, Q)\text{ or }\exists k \in \intseq{1}{\widehat{\psi}(\mathbf{Y}, Q)}:~\widehat{W}_k  \neq W_k  \big |\mathbf{s}}.
\end{align}
Secrecy is measured in terms of the average mutual information between the message bits $\mathbf{W}$ and the observations $\mathbf{Z}$ given $\mathbf{s}$ as
\begin{align}
\mathtt{S}(\calC|\mathbf{s})\eqdef \avgI{\mathbf{W};\mathbf{Z}|\mathbf{s}}.
\end{align}
The rate of the code is a function of the adversary's actions and is a random variable defined as $ \frac{\psi(\overline{\mathbf{X}}, Q)}{n}$. Furthermore, for a fixed value of common randomness $q\in\calQ$, we define $P_e(\calC|\mathbf{s}, q)$ and $\mathtt{S}(\calC|\mathbf{s}, q)$ analogously using probability distributions conditioned on $Q=q$. 

\begin{definition}
For a fixed sequence $\{\mathbf{s}_N\in\calS^N\}_{N\geq 1}$ of adversarial actions,  a sequence of $(N, K_N)$ codes $\{\mathcal{C}_N = (\mathbf{f}_N, \phi_N, \psi_N, \widehat{\psi}_N, Q_N)\}_{N\geq 1}$ achieves a rate $R$, if and only if,
\begin{align}
\lim_{N\to\infty} \frac{\avgH{Q_N}}{N} &= 0,\\
\lim_{N\to \infty}\mathtt{S}(\calC_N|\mathbf{s}_N)&= 0,\displaybreak[0]\\
\lim_{N\to \infty} P_e(\calC_N|\mathbf{s}_N) &= 0,
\end{align}
and 
\begin{align}
&\lim_{N\to \infty} \P{\frac{\psi_N(\mathbf{\overline{X}}, Q_N)}{N} \leq R} = 0.\label{eq:rate-constraint}
\end{align}
\end{definition}

\begin{remark}
  \label{rm:rate-details}
  The number of secret bits transmitted depends on the noisy observations $\overline{\mathbf{X}}$ of the adversary's actions and on the common randomness $Q_N$, both of which are random variables. Consequently, the number of secret bits is itself a random variable and our notion of achievable rate in~\eqref{eq:rate-constraint} only requires a rate $R$ to be achieved with high probability. 
\end{remark}
\begin{remark}
  \label{rm:assumptions}
  The technical assumptions behind our model have concrete operational significance. Since $\psi$ is publicly known, no secrecy is conveyed through the \emph{number} of secret bits. Since only the transmitter monitors the environment, the receiver does not benefit from another channel observation that could potentially increase its reliability. Finally, since channel outputs are conditionally independent given the input, the transmitter only obtains information about the adversary's actions and not about the receiver or adversary's \modfirst{observations}. These assumptions are not crucial in our achievability proof but they are needed in the converse. 
\end{remark}

For $\alpha\in[0, 1]$, let $P_S$ be  Bernoulli$(\alpha)$ over $\calS$ and $C(\alpha) \eqdef \sup_{P_{XU}} \pr{\avgI{U;Y} -   \avgI{U;Z|S}}$ where the random variables $(U,  X,S, Y, Z)$ have joint \ac{PMF} $P_{UX}P_SW_{YZ|XS}$ . \modfirst{Since $\sup_{P_{XU}} \pr{\avgI{U;Y} -   \avgI{U;Z|S}}$ is a special case of the right hand side of \cite[Eq. (50)]{Goldfeld2016a}, by \cite[Theorem 1]{Goldfeld2016a}, it is enough to consider $U$ with the support of size at most $|\calX|$.} Our main results are as follows.

\begin{theorem}[Achievability]
\label{thm:achv}
For any $\zeta>0$, there exists a sequence of $(N, K_N) $ codes $\{\mathcal{C}_N\}_{N\geq1}$ that  achieves the rate $C(\alpha) - \zeta$ for all $\alpha\in]0,1[$ and every sequence  $\{\mathbf{s}_N \in  \{0, 1\}^N\}_{N\geq 1}$ with $\lim_{N\to\infty}\alpha(\mathbf{s}_N) = \alpha$.
\end{theorem}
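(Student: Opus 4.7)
The plan is to build a block-Markov coding scheme in which the $N$ channel uses are divided into $L$ sub-blocks of length $n=N/L$, with $L\to\infty$ and $n\to\infty$. In each sub-block the transmitter sends a codeword drawn from a universal superposition codebook built from a fixed input distribution $P_{UX}$ chosen to approximately attain the supremum defining $C(\alpha)$, with the codebook realization supplied by the sublinear common randomness $Q_N$. The index into the $U$-codebook of each sub-block is drawn uniformly at random, so that the adversary's view looks marginally independent of the messages.

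The key ingredient is the strictly causal feedback. After sub-block $b$ the transmitter forms an estimate $\widehat\alpha_b$ of the empirical state fraction from the joint empirical type of $(\mathbf X_b,\overline{\mathbf X}_b)$; since $W_{\overline X|X,S=0}\neq W_{\overline X|X,S=1}$, the map $\alpha\mapsto \alpha W_{\overline X|X,S=1}+(1-\alpha)W_{\overline X|X,S=0}$ is injective, so inverting the empirical mixture yields $|\widehat\alpha_b-\alpha(\mathbf s_b)|\to 0$ in probability at a type-class rate. From $\widehat\alpha_b$ the transmitter computes $\ell_b\approx n[I(U;Y)-I(U;Z|S)]$ evaluated under $S\sim\mathrm{Bern}(\widehat\alpha_b)$ and, via a two-universal hash applied to sub-block $b$'s index, extracts a secret key $K_b$ of length $\ell_b$. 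Fresh message bits $\mathbf M_{b+1}$ are one-time-padded by $K_b$ and placed as the prefix of sub-block $b+1$'s index, the remaining coordinates being uniform randomness so that sub-block $b+1$ in turn produces fresh key material. This is the layered-secrecy/key-chaining structure in which the decision as to \emph{which} bits are secret is deferred until after the adversary has acted.

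At the receiver I would run a universal maximum-mutual-information list decoder on each sub-block, and use the chain for disambiguation. Standard method-of-types bounds imply that for any state sequence with $\alpha(\mathbf s_N)\to\alpha$ and any codebook rate strictly below $I(U;Y)$ under $S\sim\mathrm{Bern}(\alpha)$, the decoded list $\mathcal L_b$ has subexponential size and contains the true index with exponentially small failure probability; a short hash transmitted across successive sub-blocks, carried by $Q_N$ at sublinear rate, rules out wrong candidates in $\mathcal L_b$ because they fail to produce a consistent message/key pairing with the next sub-block's decoded list. Secrecy is handled sub-block by sub-block via the standard soft-covering/privacy-amplification argument applied to the wiretap channel $W_{Z|XS}$ with $S$ revealed to the adversary: since $\ell_b$ is taken strictly below $n[I(U;Y)-I(U;Z|S)]$, the hashed key $K_b$ has exponentially small mutual information with $\mathbf Z_b$, and the one-time-pad transfers this to $I(\mathbf M;\mathbf Z|\mathbf s_N)\to 0$ after a union bound over $b$.

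Summing, the total rate $\tfrac{1}{N}\sum_b \ell_b$ converges in probability to $I(U;Y)-I(U;Z|S)$ under $S\sim\mathrm{Bern}(\alpha)$, which by choice of $P_{UX}$ (possibly obtained by time-sharing over a finite $\epsilon$-net of candidate distributions, itself selected by $Q_N$) can be made within $\zeta$ of $C(\alpha)$. The main obstacle I expect is marrying learning to universal decoding: because $\ell_b$ depends on the transmitter's private observations $\overline{\mathbf X}_b$, the receiver does not know a priori which coordinates of each decoded sub-block index carry messages versus fresh randomness, so the list decoder must produce a state-agnostic list whose semantics can only be reconstructed after the full chain has been processed. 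Keeping the list size controllable as the chain is unrolled, and propagating the learning error $|\widehat\alpha_b-\alpha(\mathbf s_b)|$ through both the reliability and the secrecy bounds uniformly in the adversarial state sequence, will be the technically delicate parts of the argument.
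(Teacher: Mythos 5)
Your high-level architecture is genuinely the same as the paper's: sub-blocks, per-block estimation of the state fraction from the monitored channel, deferring the ``which bits are secret'' decision by turning each block's uniform payload into key material that one-time-pads the next block's message, universal maximum-mutual-information list decoding, and chained disambiguation across blocks. However, two gaps in your proposal are substantive. First, the input distribution: you posit ``a fixed $P_{UX}$ chosen to approximately attain the supremum defining $C(\alpha)$,'' but the theorem demands a single code sequence that works simultaneously for every $\alpha\in]0,1[$, and the optimizing $P_{UX}$ varies with $\alpha$. Your fallback --- time-sharing over a finite $\epsilon$-net selected by $Q_N$ --- only delivers an average of $I(U;Y)-I(U;Z|S)$ over the net, not the maximum, and since this functional is not concave in $P_{UX}$ no fixed mixture attains $C(\alpha)$ uniformly in $\alpha$. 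The paper resolves this by treating the net as arms of an adversarial multi-armed bandit: the transmitter picks an arm per sub-block based on the past estimates $\widehat{\alpha}^1,\dots,\widehat{\alpha}^{b-1}$, a high-probability sublinear-regret guarantee plus convexity of the per-arm reward in $\alpha$ yields $C(\alpha)-O(\zeta)$, and the chosen arm indices are communicated to the receiver at the end. Some explicit exploration/exploitation mechanism of this kind (or an equivalent online-selection argument) is indispensable; it cannot be replaced by a fixed or time-shared choice.

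Second, your reliability accounting is internally inconsistent. Because the secret prefix length is only fixed after the block's state fraction is learned, each sub-block must carry a raw payload of rate roughly $\max_{\alpha}$ of the relevant mutual informations plus slack, i.e., \emph{above} the realized $I(U;Y)$ for that block's states; consequently the MMI list is exponentially large (its log-size is $\Theta(n)$, adapted to the learned $\alpha(\mathbf{s}^b)$), not subexponential, and a ``short hash at sublinear rate'' cannot prune it. The paper instead repeats $\Theta(n)$ bits of block $b$ inside block $b+1$, with the repetition length tuned to the estimated state fraction, and decodes the chain backwards. Moreover this disambiguation data depends on the realized indices, so it cannot be ``carried by $Q_N$''; it, together with the estimates $\widehat{\alpha}^b$, the sampling positions, and the arm indices, must be sent over the adversarial main channel (the paper uses a terminal AVC code with common randomness), which is also what allows the receiver to compute $\widehat{\psi}=\psi$ --- the very point you flag as ``the main obstacle'' but leave unresolved. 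Finally, random per-block codebooks are not describable by sublinear common randomness as you assume; an Ahlswede-type robustification/elimination step is needed to bring $H(Q_N)$ down to $O(\log N)$.
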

\begin{theorem}[Converse]
\label{thm:converse}
If a sequence of $(N, K_N)$ codes $\{\mathcal{C}_N\}_{N\geq 1}$ achieves a  rate $R$ for \emph{all} sequences $\{\mathbf{s}_N\}_{N\geq 1}$ with $\lim_{n\to \infty}\alpha(\mathbf{s}_N) = \alpha$, then $R \leq C(\alpha)$. 
\end{theorem}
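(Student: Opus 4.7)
The plan is to adapt the standard Csisz\'ar--K\"orner wiretap converse to the present universal, variable-rate setting. First, since the rate $R$ must be achieved for \emph{every} sequence $\{\mathbf{s}_N\}$ with $\alpha(\mathbf{s}_N) \to \alpha$, we consider instead an i.i.d.\ $\text{Bernoulli}(\alpha)$ random vector $\mathbf{S}$; by the strong law of large numbers, $\alpha(\mathbf{S}) \to \alpha$ almost surely, so the reliability, secrecy, and rate constraints must hold in expectation over $\mathbf{S}$. Second, because the number of transmitted bits $\psi_N$ is itself random, we fix a deterministic truncation $\mathbf{W}' \eqdef W_{[1:\lfloor NR \rfloor]}$: the rate guarantee~\eqref{eq:rate-constraint} gives $\P{\psi_N < \lfloor NR \rfloor} \to 0$, and combined with $P_e(\calC_N|\mathbf{s}) \to 0$, Fano's inequality yields $\avgH{\mathbf{W}'|\mathbf{Y}, Q} = o(N)$. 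The secrecy constraint $\avgI{\mathbf{W};\mathbf{Z}|\mathbf{S}} = o(N)$ combined with $\avgH{Q_N} = o(N)$ also gives $\avgI{\mathbf{W}';\mathbf{Z}|\mathbf{S}, Q} = o(N)$.

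Since $\mathbf{W}$ is independent of $(\mathbf{S}, Q)$, the secrecy requirement is equivalent to near-independence of $\mathbf{W}$ from the effective eavesdropper observation $\tilde{\mathbf{Z}} \eqdef (\mathbf{Z}, \mathbf{S})$, so
\begin{align}
NR \leq \avgI{\mathbf{W}';\mathbf{Y}|Q} - \avgI{\mathbf{W}';\tilde{\mathbf{Z}}|Q} + o(N).
\end{align}
Applying the Csisz\'ar sum identity to the right-hand side and introducing a time-sharing variable $T$ uniform on $\intseq{1}{N}$ yields single-letter variables $(U, X, S, Y, Z) = (U_T, X_T, S_T, Y_T, Z_T)$. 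The key structural point is that the strict causality of the encoder's observations of $\overline{\mathbf{X}}$ makes $X_i$ measurable with respect to $(W, Q, S^{i-1}, \text{past channel randomness})$, hence independent of $S_i$. With a suitable choice of $U_i$, the single-letter variables inherit the joint distribution $P_{UX}P_S W_{YZ|XS}$ with $(U, X) \perp S$ and Markov chain $U - (X, S) - (Y, Z)$, so the bound reduces to $R \leq I(U; Y) - I(U; Z|S) + o(1) \leq C(\alpha) + o(1)$, and letting $N \to \infty$ completes the argument.

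The main obstacle is the careful construction of the auxiliary $U_i$ to ensure that both the Csisz\'ar sum identity produces the desired single-letter form \emph{and} the independence $(U_i, X_i) \perp S_i$ holds. A direct choice such as $U_i = (\mathbf{W}', Y^{i-1}, \tilde{Z}_{i+1}^N, Q)$ is problematic because the entries $\tilde{Z}_{i+1}^N$ depend on $S_i$ through the feedback path $S_i \to \overline{X}_i \to X_{i+1} \to Z_{i+1}$, which breaks the needed independence. Resolving this requires augmenting the auxiliary with appropriate past state information and exploiting the fact that, for i.i.d.\ state, strictly causal observations of past states are statistically useless for predicting the current state, so they cannot meaningfully increase the secrecy capacity beyond the no-feedback case.
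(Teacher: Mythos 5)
Your skeleton does track the paper's converse: randomize the state sequence (the paper uses a uniformly random permutation $\Pi_N$ of the \emph{given} $\mathbf{s}_N$, which preserves its weight exactly, whereas you use i.i.d.\ Bernoulli$(\alpha)$ states), truncate to $\widetilde{\mathbf{W}}=(W_1,\dots,W_{\lfloor NR\rfloor})$, combine the rate guarantee with a Fano-type bound and $\avgH{Q_N}=o(N)$, hand the state to the eavesdropper via $\widetilde{Z}_i=(S_i,Z_i)$, apply the Csisz\'ar sum identity, single-letterize with a time-sharing index, and finish with continuity of $C(\cdot)$. The problem is that your proposal stops exactly where the content of the converse lies. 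The paper adopts precisely the ``direct choice'' you dismiss, namely $U_N=(J_N,Y_1,\dots,Y_{J_N-1},\widetilde{Z}_{J_N+1},\dots,\widetilde{Z}_N)$ and $V_N=(\widetilde{\mathbf{W}},U_N)$, and then verifies the factorization $P_{U_NV_NX_{J_N}Y_{J_N}\widetilde{Z}_{J_N}}=P_{U_N|V_N}P_{V_NX_{J_N}}W_{YZ|XS}P_{S_{J_N}}$ (in particular the independence of $X_{J_N}$ and $S_{J_N}$, which is where strict causality of the encoder enters), after which $\avgI{V_N;Y_{J_N}|U_N}-\avgI{V_N;\widetilde{Z}_{J_N}|U_N}\leq C\pr{\alpha(\mathbf{s}_N)}$ follows directly from the definition of $C$. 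You instead label the construction of the auxiliary ``the main obstacle'' and leave it unresolved; without an explicit auxiliary and a verified joint-distribution structure of the form $P_{UX}P_SW_{YZ|XS}$, the chain of inequalities never closes, so the proof is incomplete at its central step.

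Moreover, the remedy you sketch---augmenting the auxiliary with past state information and invoking the uselessness of strictly causal state knowledge under i.i.d.\ states---is neither carried out nor obviously responsive to the difficulty you raise: the dependence you worry about runs through the \emph{future} terms $\widetilde{Z}_{i+1},\dots,\widetilde{Z}_N$, which remain inside $U_i$, via the path $S_i\to\overline{X}_i\to X_{i+1}\to Z_{i+1}$, and adjoining $S^{i-1}$ to $U_i$ removes neither those terms nor that path. So you must either justify the direct choice (as the paper asserts one can) or exhibit a different auxiliary together with the required independence and Markov structure; at present you do neither. Two smaller gaps: passing from ``every deterministic sequence with $\alpha(\mathbf{s}_N)\to\alpha$'' to expectations over random i.i.d.\ states needs a uniformity/diagonalization argument plus a concentration-and-continuity step, since $\alpha(\mathbf{S})$ only approximates $\alpha$ (the paper avoids this by permuting $\mathbf{s}_N$); and the step from the secrecy and reliability hypotheses to the bound $NR\leq\avgI{\widetilde{\mathbf{W}};\mathbf{Y}|Q}-\avgI{\widetilde{\mathbf{W}};\widetilde{\mathbf{Z}}|Q}+o(N)$ should be written out as in the paper's $E_N$ bookkeeping rather than asserted. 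These are repairable, but the missing single-letterization is a genuine gap.
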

\modfirst{Before we prove Theorem~\ref{thm:achv} in Section~\ref{sec:achievability-proof} and Theorem~\ref{thm:converse} in Section~\ref{sec:converse-proof}, we first make a few important remarks regarding our results.
\begin{remark}
Note that the main contribution of Theorem~\ref{thm:achv} is in guaranteeing the existence of codes having good performance for \emph{all} choices of $\mathbf{s}$. If the weight of the actions sequence $\wt{\mathbf{s}_N}$ were known, our model would reduce to a channel with type-constrained states as studied in~\cite{Goldfeld2016a, nafea2016new}. In this case, the transmitter and the receiver would know ahead of time the optimal number of secret bits that could be transmitted, and proofs would follow from more standard techniques. For instance, reliability in~\cite{Goldfeld2016a} is established using the random coding technique for \ac{AVC} channels (e.g., \cite[Chapter 12]{csiszar2011information}), which guarantees low probability of error at the decoder for a class of channel states. Similarly, secrecy is derived in~\cite{Goldfeld2016a,nafea2016new} by proving the existence of a universal scheme for all adversary's actions with a certain type. While \cite{Goldfeld2016a} and \cite{nafea2016new} have different approaches to prove universal secrecy, both papers show that a random code fails to provide secrecy for a specific state with doubly-exponentially decreasing probability, which then allows on to use the union bound and show the existence of a code secure for all states with a fixed type.
\end{remark}
}
\begin{remark}
Without constraints on the adversary's actions, one might wonder why the adversary would not always choose the ``best" action over all $N$ channel uses, so that the problem would reduce to a traditional wiretap channel. Our modeling allows us to remove all assumptions regarding the rationality or possible limitations of the adversary that the legitimate parties could be \emph{unaware} of. Our scheme performs optimally as if these constraints were known a priori.
\end{remark}
\begin{remark}
In both achievability and converse, if  $\lim_{N\to \infty}\alpha(\mathbf{s}_N)$ does not exist, we need to consider the sub-sequence of $\{\mathbf{s}_N\}_{N\geq1}$ such as $\{\mathbf{s}_{N_k}\}_{k\geq1}$ so that $\alpha({\mathbf{s}_{N_k}})$ is convergent, which we know always exists, and
\begin{align}
C\left(\lim_{k\to \infty}{\alpha({\mathbf{s}_{N_k}})}\right)
\end{align}
is minimized. This subtlety is a consequence of our asymptotic formulation of the rate, and one should note that our achievability scheme provides guarantees at finite length.
\end{remark}

\modfirst{
\begin{remark}
Although we assume that the action sequence $\mathbf{s}$ is arbitrary, it cannot depend on the adversary's observations $\mathbf{Z}$. In such case, the adversary could potentially choose strategies that introduce memory in the channel, which would require different coding schemes.
\end{remark}
}
\modfirst{
We conclude this section by illustrating the results for an example. Suppose that when $s=0$, the main channel and the adversary's channel are   \acp{BSC} with flipping probabilities $0.1$ and $0.4$, respectively, and when $s=1$, the main and the adversary's channels are \acp{BSC} with flipping probabilities $0.4$ and $0.1$. Fig.~\ref{fig:example} is the plot of $C(\alpha)$ in terms of $\alpha$ for this channel. We next compare our results with \cite{Goldfeld2016a} in Fig.~\ref{fig:example2} when there is a constraint $\alpha \leq 0.2$. Using the scheme in \cite{Goldfeld2016a}, we have to operate at the worst case, which corresponds to $\alpha = 0.2$. However, in our proposed scheme, we can operate at $C(\alpha)$ for all $\alpha$.  
\begin{figure}
\centering
 \includegraphics[scale=.7]{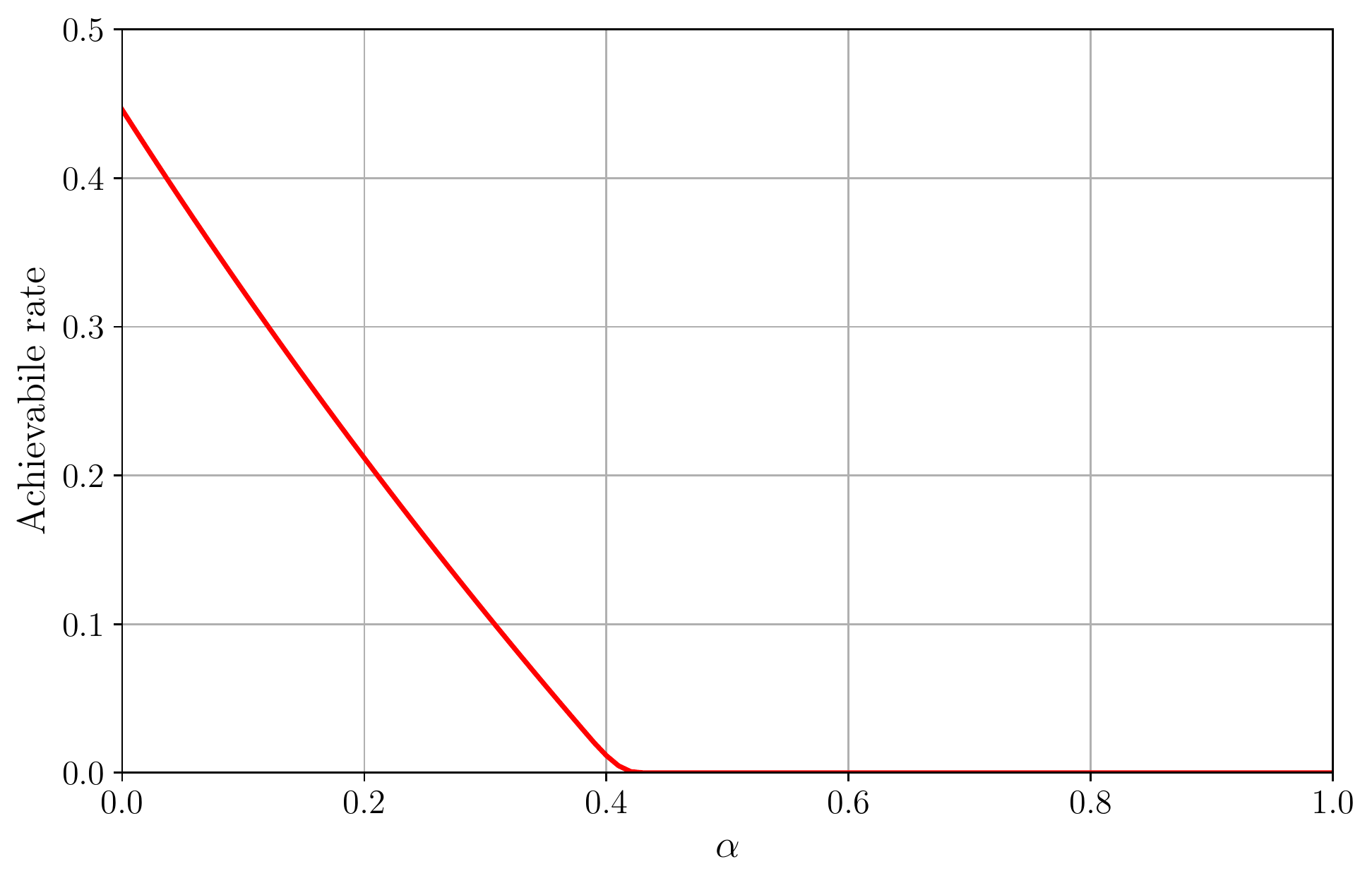}
\caption{$C(\alpha)$ in terms of $\alpha$ for \ac{BSC}}
\label{fig:example}
\end{figure}
\begin{figure}
\centering
 \includegraphics[scale=.7]{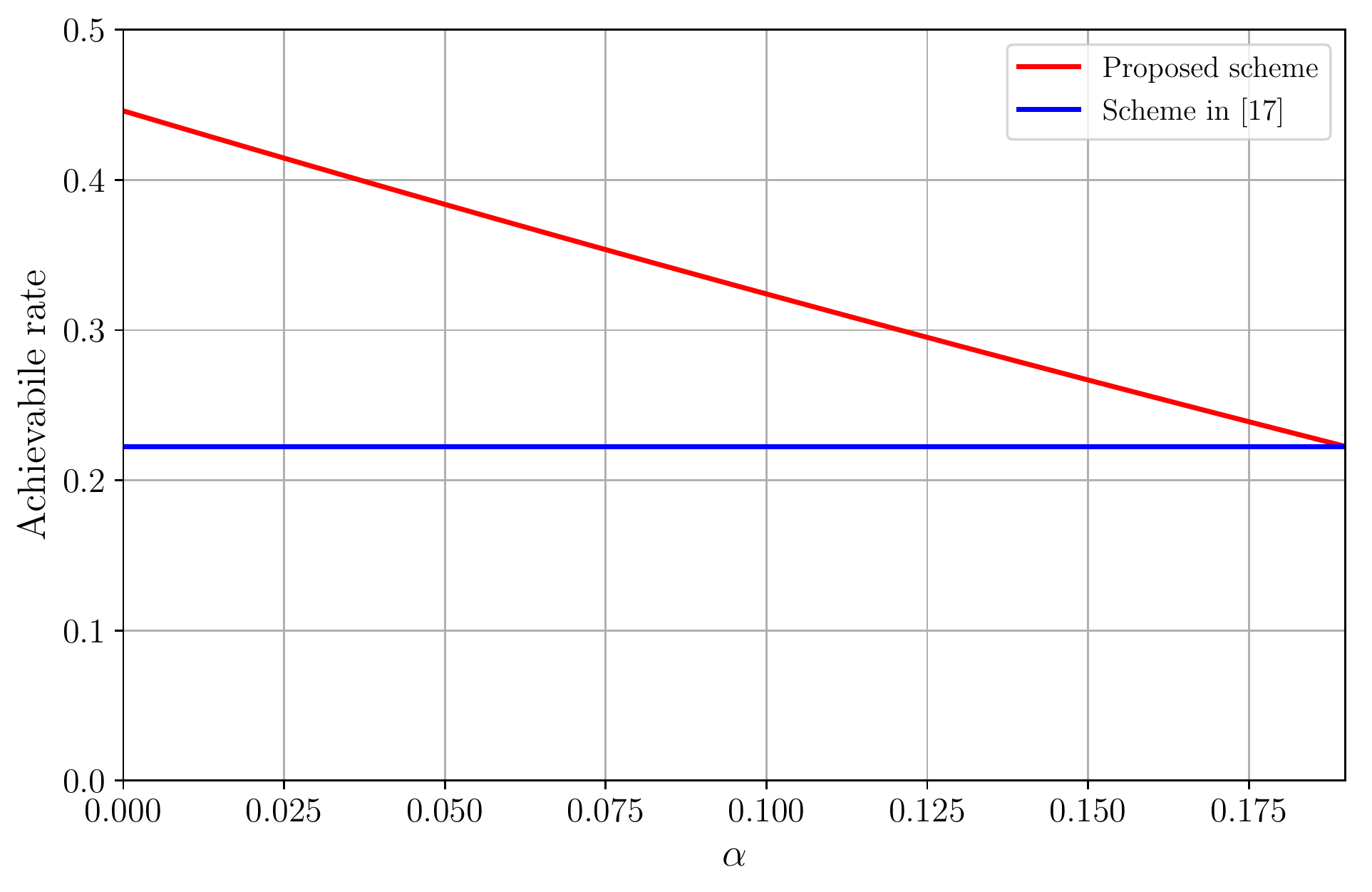}
\caption{ Comparing our results with those of \cite{Goldfeld2016a} for \ac{BSC}}
\label{fig:example2}
\end{figure}}

\section{Proof of Theorem~\ref{thm:achv}: Achievability Scheme}
\label{sec:achievability-proof}
\begin{figure}[h]
\centering
\includegraphics[width=0.6\linewidth]{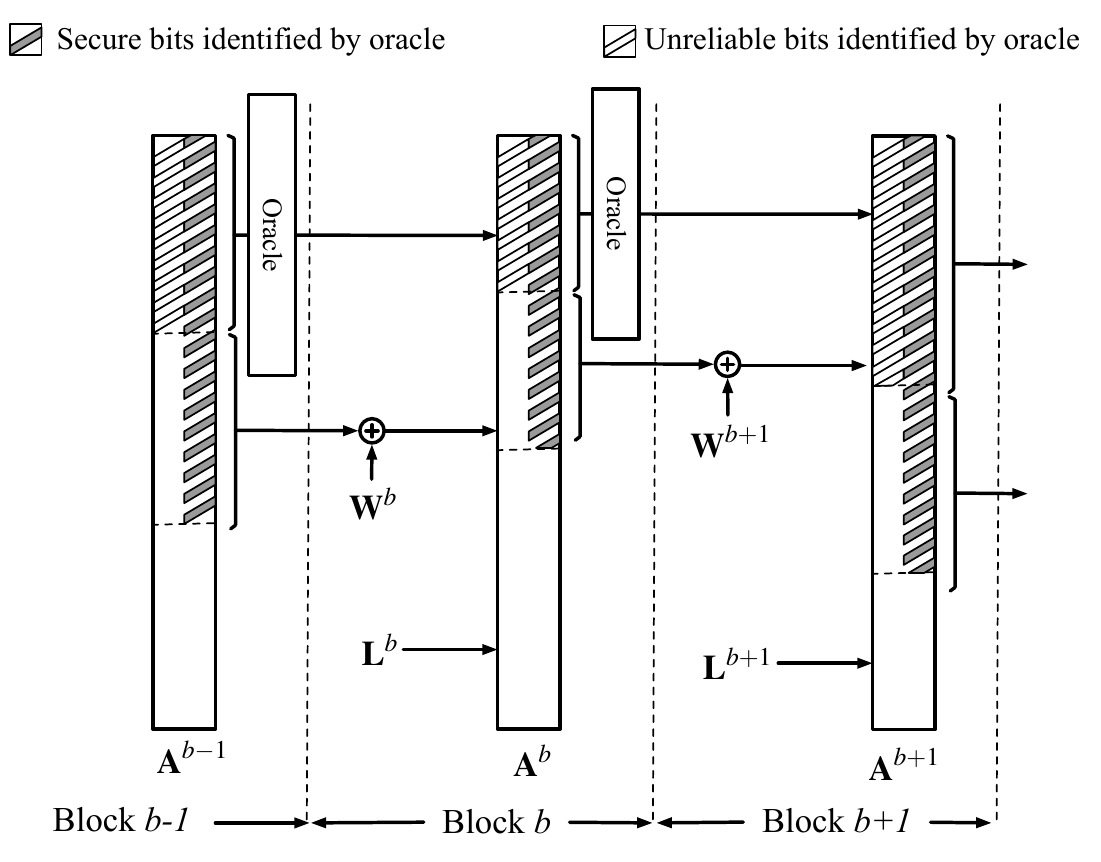}
\caption{Illustration of oracle-assisted coding scheme. In every sub-block $b$, uniformly random bits $\mathbf{A}^b$ (represented as vertical bars) are transmitted using a layered secrecy code. At the end of sub-block $b$ transmission, an oracle indicates to the legitimate parties which bits were unreliable (white stripes) and secret (gray stripes). Unreliable bits are repeated in the next sub-block while reliable secret bits are used to one-time pad a message $\mathbf{W}^{b+1}$ in sub-block $b+1$. Additional uniformly random bits $\mathbf{L}^{b+1}$ are appended to form $\mathbf{A}^{b+1}$.}
\label{fig:coding-scheme}
\end{figure}
\modfirst{
We begin the achievability proof by sketching our proposed coding scheme. The transmitter and the receiver first agree on an input distribution $P_X$ before the transmission. Our coding scheme, which is illustrated in Figure~\ref{fig:coding-scheme}, is an $(N, K)$  coding scheme that consists of a transmission over $B$ sub-blocks of length $n$ and one terminal sub-block whose length will be specified later. We denote all quantities corresponding to the sub-block $b$ by a superscript $b$; specifically, $\mathbf{s}^b$ denotes the channel states sequence in sub-block $b$. We initially assume that ``an oracle'' shall causally provide partial information $\alpha(\mathbf{s}^b)$ of the channel states $\mathbf{s}^b$  to both the transmitter and the receiver at the \emph{end} of sub-block $b$. In every sub-block $b$, the transmitter prepares $k$ uniform random bits $\mathbf{A}^b = (A_1^b, \cdots, A_k^b)$ that do not convey information on their own. It encodes $\mathbf{A}^b$ to a sequence $\mathbf{X}^b$ through a random encoder generated using $P_X$ from the common randomness shared between the transmitter and the receiver. By layer secrecy results~\cite{zou2013layered}, the adversary obtains negligible information about the first $\approx k - (1-\alpha(\mathbf{s}^b))I(P_X, W_{Z|XS=0}) + \alpha(\mathbf{s}^b) I(P_X, W_{Z|XS=1})$ bits of $\mathbf{A}^b$. In the next sub-block, the number of secure bits is evaluated using the oracle information, and those bits are used as a key to one-time-pad information bits in the next sub-block $b+1$. The use of uniform random bits and a layered  transmission scheme is crucial to enable the extraction of secrecy \emph{with hindsight}.  Since the capacity of the main channel depends on the channel states, the decoder outputs a list of possible values for $\mathbf{A}^b$  with a list size determined by $\alpha({\mathbf{s}^b})$. To be able to decode the messages uniquely, the transmitter adds further structure to the transmitted bits by repeating some of the bits of the sub-block $b$ in the sub-block $b+1$. We call such a scheme an $(N, K, n)$ oracle assisted scheme. We then need the following three steps to complete the proof.

\begin{itemize}
\item In Section~\ref{sec:estimation}, we show how to replace the oracle by an actual noisy estimator of the attacker's past action sequence without affecting asymptotic performance (Corollary~\ref{cor:no_oracle_scheme}). In particular, if the adversary's actions change the statistics of any feedback symbol, based on the frequency of the observation of that symbol, we can construct an estimator of $\alpha(\mathbf{s}^b)$.
\item In Section~\ref{sec:input-dist}, we show how the input distribution underlying the construction of the layered-secrecy code may be adapted from one block to another. In more details, the optimal input distribution depends on the adversary's action. By resorting to results for adverserial multi-arm bandits, we prove that the transmitter and the receiver could adaptively choose the input distribution for random coding such that its effect on the rate is negligible.
\item Finally, in Section~\ref{sec:red_com_rand}, we finally show how to reduce the randomness in the randomized scheme (Lemma~\ref{lm:cr-reduction}) to achieve similar performance with a negligible rate of common randomness (Lemma~\ref{lm:cr-reduction}). We use the standard Ahlswede's robustification technique to do so.
\end{itemize}



\subsection{Formal Description and Analysis of the Oracle-Assisted Coding Scheme with Arbitrary Common Randomness}
\label{sec:simp_coding_scheme}
In this section, we first formally describe our \emph{oracle-assisted} coding scheme and then separately analyze its reliability, secrecy, and rate.  For all $b\in \intseq{1}{B}$ and $\kappa\in \intseq{1}{k}$, let $W_\kappa^b$ and $L_\kappa^b$  be random bits uniformly distributed on $\{0, 1\}$; all these random variables are mutually independent. Furthermore, for all $(\kappa, b) \in \intseq{1}{k} \times \intseq{1}{B}$, we assume that $W_{\kappa}^b$ contains useful information but $L_{\kappa}^b$ is an auxiliary bit.

\subsubsection{Encoding}
  Fix a distribution over $\calX$, $P_X$, and $\zeta > 0$. For $\alpha \in [0, 1]$, define 
\begin{align}
\mathbb{I}_Y^\alpha &=I\pr{P_X,  (1-\alpha)W_{Y|XS=0} + \alpha W_{Y|XS=1}} \\
\mathbb{I}_Z^\alpha &= (1-\alpha)I(P_X, W_{Z|XS=0}) + \alpha I(P_X, W_{Z|XS=1})\\
\mathbb{I}_Y^{\max} &= \max_{ \alpha \in[0, 1]}\mathbb{I}_Y^{\alpha}\\
\mathbb{I}_Z^{\max} &= \max(\mathbb{I}_Z^0, \mathbb{I}_Z^1)\\
\mathbb{I}^{\max} &= \max(\mathbb{I}^{\max}_Y, \mathbb{I}^{\max}_Z).
\end{align}
Let both the encoder and the decoder have access to a common sequence of mutually independent random functions $F^1, \cdots, F^B$ from $\{0,1\}^k$ to $\calX^n$ where for all $\mathbf{a}\in\{0,1\}^k$ and $b\in\intseq{1}{B}$, $F^b(\mathbf{a})$  is distributed according to $P_X^\pn$. In the sub-block $b$, the transmitter prepares $k \eqdef \lceil \pr{\mathbb{I}^{\max} + 2\zeta}n + 1\rceil$ bits $\mathbf{A}^b = (A_1^b, \cdots, A_k^b)$ defined as follows. Let $A_1^0 = A_2^0 = \cdots = A_k^0 \eqdef 0$, $\alpha^b\eqdef \alpha(\mathbf{s}^b)$, and 
\begin{align}
m^{b} &\eqdef\begin{cases}0\quad&b = 1\\ k - \left\lceil\pr{\mathbb{I}_Z^{\alpha^{b-1}}+\zeta}n\right\rceil\quad &b\in\intseq{2}{B}\end{cases}\label{eq:mb-def}\\
u^{b} &\eqdef  
\begin{cases}0\quad&b=1\\k  - \left\lfloor\pr{\mathbb{I}_Y^{\alpha^{b-1}}-2\zeta}n\right\rfloor\quad &b\in\intseq{2}{B+1}
\end{cases}.
\end{align}
By Lemma~\ref{lm:universal_secrecy_weight}, $m^b$ represents the number of secure bits in the sub-block $b-1$ that can be used as a key in sub-block $b$; by Lemma~\ref{lm:universal_list_decode}, $2^{u^b}$ is almost the size of the list required for correct list-decoding in the sub-block $b-1$. We repeat $u^b$ bits of $\mathbf{A}^{b-1}$ in $\mathbf{A}^b$ to form $\mathbf{A}^b$ as
\begin{align}
\label{eq:def_s}
A_\kappa^b \eqdef 
\begin{cases}
	A_\kappa^{b-1} \quad& \kappa \in \intseq{1}{u^b}\\
	A_\kappa^{b-1}\oplus W_\kappa^b\quad& \kappa \in \intseq{u^b+1}{m^b}\\
	 L_\kappa^b\quad & \kappa\in\intseq{\max(m^b, u^b)+1}{k}
\end{cases}.
\end{align}
Note that if $m^b<u^b +1$, the set $\intseq{u^b+1}{m^b}$ is empty and no message bit is transmitted in the sub-block $b$. Finally, after transmission of sub-block $B$, we use an \ac{AVC} code to transmit $\mathbf{A}^{B+1} \eqdef (A_{1}^{B}, \cdots, A_{u^{B+1}}^B)$. The existence of \ac{AVC} codes is established in the following lemma.
 \begin{lemma}
 \label{lm:avc-code}
 Suppose for the \ac{AVC} $(\calX\times \cal S, W_{Y|XS}, \calY)$, we have $\min_{\alpha\in[0, 1]}\mathbb{I}_Y^{\alpha} > 0$. Then, there exists $R>0$, $\xi>0$, and a sequence of codes with common randomness $\{\calC_m^{\text{AVC}}\}_{m\geq 1}$ such that $\calC_m$ transmits a message in $\intseq{1}{\lfloor 2^{mR}\rfloor}$ over $m$ channel uses with vanishing probability of error $\epsilon_m^{\text{AVC}} \leq 2^{-\xi m}$.
 \end{lemma}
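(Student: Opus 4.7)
The claim is essentially the standard random-coding capacity theorem for AVCs coupled with a quantitative (exponential) error bound. Since $\calS = \{0,1\}$, every distribution on $\calS$ is parametrized by a Bernoulli parameter $\alpha\in[0,1]$, so the random-coding capacity of the AVC $(\calX\times\calS, W_{Y|XS},\calY)$ with the fixed input distribution $P_X$ equals $\min_{\alpha\in[0,1]}\mathbb{I}_Y^\alpha$, which by hypothesis is strictly positive. The plan is to fix any $R \in (0,\min_\alpha \mathbb{I}_Y^\alpha)$ and build a random code with this rate, the only question being the magnitude of the error exponent.

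I would instantiate the classical random coding scheme for AVCs using the available common randomness. Conditioned on the common randomness, draw $\lfloor 2^{mR}\rfloor$ codewords $\mathbf{X}(1),\ldots,\mathbf{X}(\lfloor 2^{mR}\rfloor)$ independently according to $P_X^{\otimes m}$. The decoder is the universal maximum mutual information (MMI) decoder that, given $\mathbf{y}$, outputs the index $j$ maximizing $I(\mathbf{X}(j)\wedge\mathbf{y})$; crucially, this decoder does not require knowledge of the state sequence $\mathbf{s}$.

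For any fixed $\mathbf{s}\in\calS^m$ with empirical Bernoulli parameter $\alpha(\mathbf{s})$, the average probability of error (over codebooks and common randomness) is analyzed by partitioning the error event according to the joint type of $(\mathbf{X}(j),\mathbf{X}(j'),\mathbf{s},\mathbf{Y})$. There are only polynomially many such joint types, and for each one the probability of confusing a competing codeword $\mathbf{X}(j')$ with the transmitted codeword decays as $2^{-m[I(X';Y)-R-o(1)]}$, where $I(X';Y) \geq \mathbb{I}_Y^{\alpha(\mathbf{s})}\geq \min_{\alpha}\mathbb{I}_Y^\alpha$ at the dominant joint type. A union bound over types therefore yields a total error probability bounded by $\text{poly}(m) \cdot 2^{-m(\min_\alpha \mathbb{I}_Y^\alpha - R - o(1))}$. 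Choosing any $\xi$ strictly between $0$ and $\min_\alpha \mathbb{I}_Y^\alpha - R$, and $m$ large enough, yields the desired bound $\epsilon_m^{\text{AVC}} \leq 2^{-\xi m}$.

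The only mild subtlety, which is the main (but classical) obstacle, is ensuring that the error exponent is uniform in $\mathbf{s}$. This follows from the continuity of $\alpha \mapsto \mathbb{I}_Y^\alpha$ on the compact set $[0,1]$, which guarantees that its minimum is attained and strictly positive, together with the polynomial cardinality of the set of empirical types on $\calS^m$. The resulting argument is precisely the random-coding theorem for AVCs with exponential error bound (see, e.g., \cite[Chapter~12]{csiszar2011information}), from which the lemma follows directly.
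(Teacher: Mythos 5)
Your proposal is correct and is essentially the paper's approach: the paper proves this lemma simply by citing the random-coding result for AVCs with common randomness in \cite[Lemma 12.10]{csiszar2011information}, and your sketch (random codebook from $P_X^{\otimes m}$ selected via common randomness, MMI decoding, method-of-types union bound uniform in $\mathbf{s}$, exponent positive since $\min_{\alpha}\mathbb{I}_Y^{\alpha}>0$) is precisely the classical argument underlying that cited lemma. No gap to report.
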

 \begin{proof}
 See~\cite[Lemma 12.10]{csiszar2011information}.
 \end{proof}
In particular, if  $m = \Theta(n)$, we can transmit $\mathbf{A}^{B+1}$ by $\calC_m^{\text{AVC}}$.

\subsubsection{Decoding} The decoder operates recursively to decode  $\mathbf{A}^{B+1}, \mathbf{A}^B, \cdots, \mathbf{A}^1$ in this order. Using the decoder of $\calC_m^{\text{AVC}}$, the receiver first recovers $\mathbf{A}^{B+1}$ as $\widehat{\mathbf{A}}^{B+1}$. Subsequently, for the sub-block $b\in\intseq{1}{B}$, we assume that an estimate $\widehat{\mathbf{A}}^{b+1}$ of $\mathbf{A}^{b+1}$ is available at the receiver. The receiver forms a list  $\calL^b$ of $\ell^b$ messages $\mathbf{w}$ with highest $I(F^b(\mathbf{w})\wedge \mathbf{Y}^b)$ with
\begin{align}
\label{eq:list-size}
\log \ell^b \eqdef  k  - \left\lceil\pr{\mathbb{I}_Y^{\alpha^b}-\zeta}n\right\rceil.
\end{align}
The receiver then seeks a sequence $\mathbf{a} \in \calL^b$ such that for all $\kappa\in\intseq{1}{u^{b+1}}$  we have $a_\kappa = \widehat{A}_\kappa^{b+1}$. If there is a unique such sequence, the decoder sets $\widehat{\mathbf{A}}^b \eqdef \mathbf{a}$; otherwise, it declares an error. The message bits $W_{\kappa}^b$ can finally be decoded from $\widehat{\mathbf{A}}^1, \cdots, \widehat{\mathbf{A}}^B$.

\subsubsection{Reliability Analysis}

The probability of error is upper-bounded by 
\begin{align}
\P{\mathbf{A}^1 \neq \widehat{\mathbf{A}}^1 \text{ or } \cdots \text{ or }\mathbf{A}^{B+1} \neq \widehat{\mathbf{A}}^{B+1}} 
&= \sum_{b=1}^{B+1} \P{\mathbf{A}^b \neq \widehat{\mathbf{A}}^b, \pr{\mathbf{A}^{b+1} = \widehat{\mathbf{A}}^{b+1}, \cdots, \mathbf{A}^{B+1} = \widehat{\mathbf{A}}^{B+1}} }\\
&\leq \sum_{b=1}^{B+1} \P{\mathbf{A}^b \neq \widehat{\mathbf{A}}^b  \big |{\mathbf{A}^{b+1} = \widehat{\mathbf{A}}^{b+1} ,\cdots,\mathbf{A}^{B+1} = \widehat{\mathbf{A}}^{B+1}} }.
\end{align}
For the last term in the sum, we have $\P{\mathbf{A}^{B+1} \neq \widehat{\mathbf{A}}^{B+1}} \leq \epsilon_{m}^{\text{AVC}}$ by construction. For the sub-block $b\in\intseq{1}{B}$, let
\begin{align}
\calA^b &\eqdef \{\mathbf{A}^b \},\\
\calB^b &\eqdef \{\mathbf{a} \in \{0,1\}^k: \text{ for all } \kappa\in\intseq{1}{u^{b+1}} \text{ we have } a_\kappa= A_\kappa^{b+1}\},
\end{align}
which are the set of the transmitted message in the sub-block $b$, and the set of all messages matching with the bits of the next sub-block, respectively. With these notations, we can write the probability of decoding error in the sub-block $b$ as
\begin{align}
\P{\mathbf{A}^b \neq \widehat{\mathbf{A}}^b  \big |{\mathbf{A}^{b+1} = \widehat{\mathbf{A}}^{b+1},\cdots, \mathbf{A}^{B+1} = \widehat{\mathbf{A}}^{B+1}} } 
&= \P{\calL^b \cap \calB^b \neq \calA^b}\\\displaybreak[0]
&\leq \P{\calA^b \nsubseteq \calL^b } + \P{(\calL^b \cap \calB^b) \setminus \calA^b \neq \emptyset}\label{eq:dec-error}
\end{align}
 The following lemma that we prove in Appendix~\ref{sec:list-dec} upper-bounds the first term in the above expression by $2^{-\xi n}$ for some $\xi > 0$.
\begin{lemma}
\label{lm:universal_list_decode}
Let $\ell^b$ be a positive integer and $(X, Y, S)$ be three random variables distributed according to $P_{XYS}(x, y, s) = P_X(x)P_S(s)W_{Y|XS}(y|xs)$, in which $P_S = \text{Bernoulli}(\alpha(\mathbf{s}^b))$. Then,
\begin{align}
\label{eq:list_bound_avc}
 \P{\calA^b \nsubseteq \calL^b }  \leq 2^{-n\min_{V_{XY|S}} \left[\D{V_{XY|S}}{W_{Y|XS}\times P_X|P_S} + \modfirst{\ell^b}\left[I(V_{XY|S}\circ P_S) - \frac{1}{n} \pr{k + \log \frac{e}{\modfirst{\ell^b}} }+ O\pr{\frac{\log n}{n}}\right]^+\right]}.
\end{align}
In particular, if \eqref{eq:list-size} holds, there exists $\xi > 0$ independent of $\mathbf{s}^b$ such that
\begin{align}
\label{eq:avc_list_asymptotic}
\P{\calA^b \nsubseteq \calL^b }\leq 2^{-\xi n}.
\end{align}
\end{lemma}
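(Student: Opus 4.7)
The plan is a standard random-coding plus method-of-types argument. I would begin by conditioning on the state sequence $\mathbf{s}^b$, whose empirical distribution is $P_S = \text{Bernoulli}(\alpha(\mathbf{s}^b))$, and decomposing the bad event $\{\calA^b \nsubseteq \calL^b\}$ over the possible conditional joint types $V_{XY|S}$ of $(F^b(\mathbf{A}^b),\mathbf{Y}^b)$ given $\mathbf{s}^b$. Since $F^b(\mathbf{A}^b)$ is drawn i.i.d.\ according to $P_X^\pn$ and passed through the memoryless channel $W_{Y|XS}$, the standard type-class estimate yields
\begin{align}
\P{(F^b(\mathbf{A}^b),\mathbf{Y}^b)\text{ has conditional type }V_{XY|S}\text{ given }\mathbf{s}^b \mid \mathbf{s}^b}\leq 2^{-n\D{V_{XY|S}}{W_{Y|XS}\times P_X|P_S}}.
\end{align}
Conditional on this type, the true codeword's empirical mutual information with $\mathbf{Y}^b$ equals $I(V_{XY|S}\circ P_S)$, so the true message is pushed out of $\calL^b$ only if at least $\ell^b$ of the $2^k-1$ independent competitor codewords $\{F^b(\mathbf{w})\}_{\mathbf{w}\neq \mathbf{A}^b}$ achieve empirical MI with $\mathbf{Y}^b$ at least $I(V_{XY|S}\circ P_S)$.

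To bound this sub-event, I would use that each competitor $F^b(\mathbf{w})$ is distributed as $P_X^\pn$ and is independent of $\mathbf{Y}^b$, so the usual $2^{-nI(V'_{XY})+O(\log n)}$ estimate on falling into a target conditional type, combined with a union bound over the polynomially many types $V'_{XY}$ with $I(V'_{XY})\geq I(V_{XY|S}\circ P_S)$, bounds the chance that a given competitor meets the MI threshold by $2^{-nI(V_{XY|S}\circ P_S)+O(\log n)}$. A further union bound over the $\binom{2^k-1}{\ell^b}\leq 2^{\ell^b(k+\log(e/\ell^b))}$ choices of $\ell^b$ simultaneous competitors then bounds the sub-event by $2^{-\ell^b[nI(V_{XY|S}\circ P_S)-k-\log(e/\ell^b)+O(\log n)]}$. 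Clipping at $1$ produces the $[\cdot]^+$, and summing the product of the two estimates over the polynomially many conditional types $V_{XY|S}$ is absorbed into the $O(\log n/n)$ slack in~\eqref{eq:list_bound_avc}.

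To deduce the exponential decay~\eqref{eq:avc_list_asymptotic} under the choice~\eqref{eq:list-size}, observe that $(k+\log(e/\ell^b))/n = \mathbb{I}_Y^{\alpha^b}-\zeta + O(1/n)$, and split the minimization over $V_{XY|S}$ into two regimes. If $I(V_{XY|S}\circ P_S)\geq \mathbb{I}_Y^{\alpha^b}-\zeta/2$, then for $n$ large enough the bracketed quantity exceeds $\zeta/3$, so the exponent inside the min is at least $\ell^b\zeta/3\geq \zeta/3$. If instead $I(V_{XY|S}\circ P_S)<\mathbb{I}_Y^{\alpha^b}-\zeta/2$, then $V_{XY|S}$ is bounded away from $W_{Y|XS}$ in the conditional sense, because $I((W_{Y|XS}\times P_X)\circ P_S)=\mathbb{I}_Y^{\alpha^b}$ and mutual information is continuous in the conditional distribution; hence by continuity of the conditional KL divergence on the compact simplex, $\D{V_{XY|S}}{W_{Y|XS}\times P_X|P_S}$ is bounded below by a positive constant. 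Taking $\xi$ to be the minimum of these two lower bounds gives the claim.

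The most delicate point will be ensuring that $\xi$ is uniform in the action sequence $\mathbf{s}^b$, equivalently in $\alpha^b\in[0,1]$: this follows because $\mathbb{I}_Y^{\alpha^b}$ is a continuous function of $\alpha^b$ on the compact interval $[0,1]$, so the infima in both regimes can be taken jointly over $\alpha^b$ and remain strictly positive, giving a single $\xi>0$ independent of $\mathbf{s}^b$. Mild care is also needed at the boundary $\alpha^b\in\{0,1\}$, where $P_S$ degenerates to a point mass and the conditional divergence collapses to a single-state divergence, but the same continuity argument applies.
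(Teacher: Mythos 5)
Your proposal is correct and follows essentially the same route as the paper: a method-of-types bound $2^{-n\D{V_{XY|S}}{W_{Y|XS}\times P_X|P_S}}$ on the conditional type of the true codeword/output pair, a $\binom{2^k}{\ell^b}\leq (e2^k/\ell^b)^{\ell^b}$-style union bound on the competitors (the paper packages this as a one-shot universal list-decoding lemma with a binomial tail estimate and the empirical-mutual-information decoder), and the same two-regime split of the exponent to obtain \eqref{eq:avc_list_asymptotic} under \eqref{eq:list-size}. The only cosmetic difference is in the second regime: the paper extracts an explicit constant via continuity of mutual information, Pinsker's inequality, and convexity of KL divergence, whereas you invoke a compactness/continuity argument over $\alpha^b\in[0,1]$ to get a uniform positive lower bound on $\D{V_{XY|S}}{W_{Y|XS}\times P_X|P_S}$ --- both yield a $\xi>0$ independent of $\mathbf{s}^b$.
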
 
   To upper-bound the second term on the RHS of \eqref{eq:dec-error}, let $\mathbf{0}$ denote the all-zero vector, and note that
\begin{align}
\P{\pr{\calL^b\cap\calB^b}\setminus \calA^b\neq \emptyset}
&= \sum_{\mathbf{a}}\P{\mathbf{A}^{b} = \mathbf{a}}\P{\pr{\calL^b\cap \calB^b} \setminus \calA^b \neq \emptyset \big| \mathbf{A}^{b} = \mathbf{a}}\\
&\stackrel{(a)}{=} \P{\pr{\calL^b\cap \calB^b} \setminus \calA^b \neq \emptyset \big| \mathbf{A}^{b} = \mathbf{0}},
\end{align}
where $(a)$ follows since our encoding and decoding processes are symmetric with respect to all messages. Furthermore, when $\mathbf{A}^b = \mathbf{0}$, we have $\calA^{b} = \{\mathbf{0}\}$ and $\calB^b = \{\mathbf{a}:\text{ for all } \kappa\in\intseq{1}{u^{b+1}}~a_\kappa = 0\}$. Therefore, by the union bound, we have
\begin{align}
\P{\pr{\calL^b\cap \calB^b} \setminus \calA^b \neq \emptyset \big| \mathbf{A}^{b} = \mathbf{0}}
&\leq \sum_{\widetilde{\mathbf{a}}:\text{ for all } \kappa\in\intseq{1}{u^{b+1}}~a_\kappa = 0, \widetilde{\mathbf{a}} \neq 0} \P{\tilde{\mathbf{a}}\in \calL^b\big | \mathbf{A}^b = \mathbf{0}}\displaybreak[0]\\
&\stackrel{(a)}{\leq} \frac{\ell^b}{2^{u^{b+1}}}\displaybreak[0]\\
&\leq \frac{2^{k  - \left\lceil\pr{\mathbb{I}_Y^{\alpha^{b-1}}-\zeta}n\right\rceil}}{2^{k  - \left\lfloor\pr{\mathbb{I}_Y^{\alpha^{b-1}}-2\zeta}n\right\rfloor}}\displaybreak[0]\\
&\leq 2^{-\zeta n},
\end{align}
where $(a)$ follows since all $\widetilde{\mathbf{a}}\neq \mathbf{0}$ have an equal probability to be in $\calL^b$. As a result, the total probability of error is upper-bounded by $B\pr{2^{-n\zeta} + 2^{-n \xi}} + \epsilon_{m}^{\text{AVC}}$.

\subsubsection{Secrecy Analysis}

  Let $\mathbf{Z}^b$ denote the adversary's observation in the sub-block $b\in\intseq{1}{B}$, $\mathbf{Z}^{B+1}$ denote what the adversary observes during the transmission of $\mathbf{A}^{B+1}$, and
\begin{align}
\mathbf{W}^b &\eqdef (W_\kappa^b:k\in \intseq{1}{k}),\displaybreak[0]\label{eq:W}\\
\mathbf{L}^b  &\eqdef (L_k^b:\kappa\in \intseq{\max(m^b, u^b)+1}{k})\displaybreak[0]\label{eq:L}\\
\overline{\mathbf{A}}^b  &\eqdef (A_\kappa^b:\kappa\in \intseq{1}{m^{b+1}})\displaybreak[0]\label{eq:barA}\\
\widetilde{\mathbf{A}}^b  &\eqdef (A_\kappa^b:\kappa\in \intseq{m^{b+1} + 1}{k}).\label{eq:tildeA}
\end{align}
The functional dependence graph illustrating the dependencies introduced by the chaining in~\eqref{eq:def_s} and with the notation in~\eqref{eq:W}-\eqref{eq:tildeA} is shown in Fig.~\ref{fig:fdg}.
\begin{figure}[h]
\centering
\includegraphics[width=\linewidth]{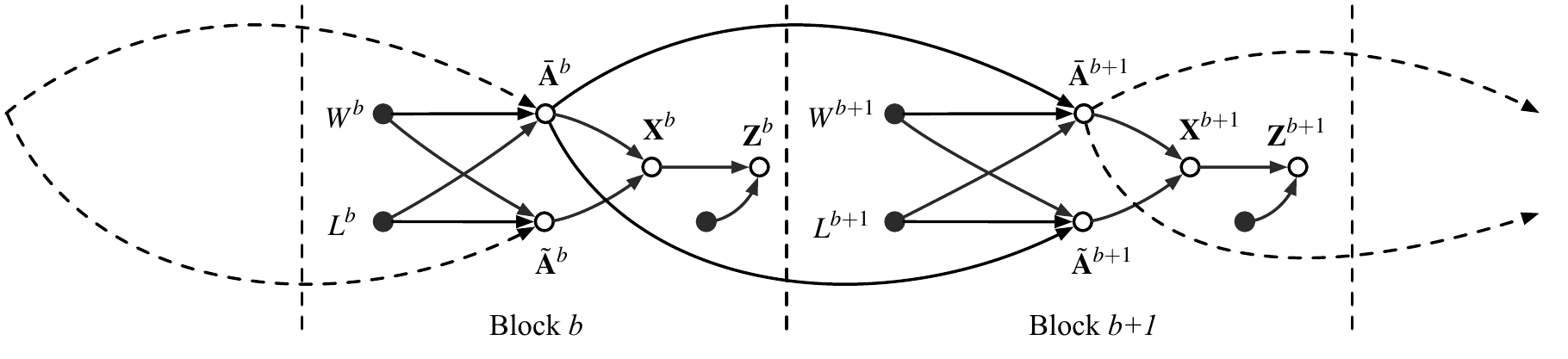}
\caption{Functional dependence graph illustrating chaining in~\eqref{eq:def_s} for the variables in~\eqref{eq:W}-\eqref{eq:tildeA}}
\label{fig:fdg}
\end{figure}
We first state a layered secrecy result, which we prove in Appendix~\ref{sec:layer-sec}.
 \begin{lemma}
  \label{lm:universal_secrecy_weight}
 Let $k\geq\pr{\mathbb{I}^{\max} _Z + 2\zeta}n + 1$. There exists $\xi > 0$ such that for large $n$,
\begin{align}
\label{eq:universal_sec}
\P[F^b]{\text{for all }\mathbf{s^b}\in\calS ^n,~\avgI{\overline{\mathbf{A}}^b; \mathbf{Z}^b} \leq 2^{-\xi n}} \geq 1 - 2^{-2^{n\xi}}.
\end{align}
\end{lemma}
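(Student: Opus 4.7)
\textbf{Proof proposal for Lemma~\ref{lm:universal_secrecy_weight}.}
The plan is to bound the mutual information via a channel-resolvability style argument, combined with a doubly-exponential Chernoff concentration that survives a union bound over all state sequences $\mathbf{s}^b\in\{0,1\}^n$.

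First, I would reduce the mutual-information bound to a conditional KL-divergence bound against a product reference distribution. For a fixed $\mathbf{s}^b$, define the state-induced product distribution
\begin{align}
Q_{\mathbf{Z}|\mathbf{s}^b}(\mathbf{z}) \eqdef \prod_{i=1}^n \left(\sum_{x\in\calX}P_X(x)W_{Z|XS}(z_i|x,s_i^b)\right).
\end{align}
Using the standard identity $I(\overline{\mathbf{A}}^b;\mathbf{Z}^b)\leq D(P_{\mathbf{Z}^b|\overline{\mathbf{A}}^b}\|Q_{\mathbf{Z}|\mathbf{s}^b}\,|\,\overline{\mathbf{A}}^b)$ (valid for any reference distribution not depending on $\overline{\mathbf{A}}^b$), it suffices to show that for every $\overline{\mathbf{a}}$ and every $\mathbf{s}^b$, the empirical output distribution
\begin{align}
P_{\mathbf{Z}^b|\overline{\mathbf{A}}^b=\overline{\mathbf{a}}}(\mathbf{z}) = \frac{1}{2^{k-m^{b+1}}}\sum_{\widetilde{\mathbf{a}}\in\{0,1\}^{k-m^{b+1}}} W_{\mathbf{Z}|\mathbf{X},\mathbf{s}^b}\!\left(\mathbf{z}\,\big|\,F^b(\overline{\mathbf{a}},\widetilde{\mathbf{a}})\right)
\end{align}
is uniformly close to $Q_{\mathbf{Z}|\mathbf{s}^b}$. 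This is a channel-resolvability question: the randomization rate equals $(k-m^{b+1})/n = \mathbb{I}_Z^{\alpha^b}+\zeta+o(1)$, which strictly exceeds the adversary's effective mutual information $\mathbb{I}_Z^{\alpha^b}$ by a margin of $\zeta$; this margin is what makes resolvability possible.

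Second, I would establish doubly-exponential concentration of this empirical distribution via a pointwise Chernoff bound. Fix $\mathbf{s}^b$, $\overline{\mathbf{a}}$, and $\mathbf{z}$. Conditioned on these, $P_{\mathbf{Z}^b|\overline{\mathbf{A}}^b=\overline{\mathbf{a}},F^b}(\mathbf{z})$ is the average of $M\eqdef 2^{k-m^{b+1}}$ independent bounded random variables each having mean $Q_{\mathbf{Z}|\mathbf{s}^b}(\mathbf{z})$. A multiplicative Chernoff bound (or a typicality-based split into ``typical'' $\mathbf{z}$ that contribute the bulk of the mass and ``atypical'' $\mathbf{z}$ whose total mass under $Q_{\mathbf{Z}|\mathbf{s}^b}$ is exponentially small) yields
\begin{align}
\P[F^b]{\abs{P_{\mathbf{Z}^b|\overline{\mathbf{A}}^b=\overline{\mathbf{a}},F^b}(\mathbf{z}) - Q_{\mathbf{Z}|\mathbf{s}^b}(\mathbf{z})} > 2^{-n\xi_1}Q_{\mathbf{Z}|\mathbf{s}^b}(\mathbf{z})} \leq 2 \exp\!\left(-\tfrac{1}{3}\, M\, Q_{\mathbf{Z}|\mathbf{s}^b}(\mathbf{z})\, 2^{-2n\xi_1}\right),
\end{align}
and for $\mathbf{z}$ in the relevant typical set (where $M\cdot Q_{\mathbf{Z}|\mathbf{s}^b}(\mathbf{z})=2^{\Theta(n)}$), the right-hand side is bounded by $2^{-2^{\xi_2 n}}$ for some $\xi_2>0$.

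Third, I would assemble the union bound. The number of relevant $\mathbf{z}$ is at most $|\calZ|^n$, the number of candidate messages is at most $2^k = 2^{O(n)}$, and there are $2^n$ state sequences $\mathbf{s}^b$. Thus the total union bound costs only $2^{O(n)}$, which is dwarfed by the doubly-exponential concentration $2^{-2^{\xi_2 n}}$, yielding the uniform variational distance bound $V(P_{\mathbf{Z}^b|\overline{\mathbf{A}}^b=\overline{\mathbf{a}}},Q_{\mathbf{Z}|\mathbf{s}^b})\leq 2^{-n\xi_3}$ simultaneously for all $\overline{\mathbf{a}}$ and $\mathbf{s}^b$, except on an event of $F^b$-probability at most $2^{-2^{\xi_2 n}}$. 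Converting variational distance to KL via the standard inequality $D(P\|Q)\leq V\log\frac{|\calZ|^n}{V\,\min_{\mathbf{z}}Q(\mathbf{z})}$ together with $\min_{\mathbf{z}}Q_{\mathbf{Z}|\mathbf{s}^b}(\mathbf{z})\geq q_{\min}^n$, we obtain $D(\cdot\|\cdot)\leq 2^{-\xi n}$ for a slightly smaller $\xi>0$, and hence $I(\overline{\mathbf{A}}^b;\mathbf{Z}^b)\leq 2^{-\xi n}$.

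The principal obstacle is the universality across all $\mathbf{s}^b$: singly-exponential concentration would suffice if the state were fixed, but to absorb the $2^n$ factor from the union bound over $\mathbf{s}^b$ we must squeeze doubly-exponential concentration out of the Chernoff step. This forces a careful selection of the typical-sequence threshold so that on typical $\mathbf{z}$ we have $M\cdot Q_{\mathbf{Z}|\mathbf{s}^b}(\mathbf{z})=2^{\Theta(n)}$ uniformly in $\mathbf{s}^b$, which in turn relies crucially on the slack $\zeta$ in the definition of $m^{b+1}$ and on $k\geq (\mathbb{I}_Z^{\max}+2\zeta)n+1$. This is the same flavor of argument used in \cite{Goldfeld2016a} for universal secrecy over type-constrained AVCs, adapted here to handle every $\mathbf{s}^b\in\{0,1\}^n$ rather than a fixed type class.
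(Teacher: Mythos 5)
Your high-level architecture—reduce $\avgI{\overline{\mathbf{A}}^b;\mathbf{Z}^b}$ to closeness of the conditional output distribution to the product reference $Q_{\mathbf{Z}|\mathbf{s}^b}$, prove doubly-exponential concentration, then absorb the union bound over the $2^n$ state sequences, the $2^{O(n)}$ prefixes $\overline{\mathbf{a}}$ and the $|\calZ|^n$ outputs—is a legitimate route (it is essentially the ``soft covering with high probability'' argument used in \cite{Goldfeld2016a}). However, the concentration step as you state it fails. Your Chernoff exponent is $\tfrac13 M\,Q_{\mathbf{Z}|\mathbf{s}^b}(\mathbf{z})\,2^{-2n\xi_1}$ with $M=2^{k-m^{b+1}}\approx 2^{n(\mathbb{I}_Z^{\alpha^b}+\zeta)}$, and you assert $M\,Q_{\mathbf{Z}|\mathbf{s}^b}(\mathbf{z})=2^{\Theta(n)}$ on the typical set. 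This is false in general: for typical $\mathbf{z}$ one has $Q_{\mathbf{Z}|\mathbf{s}^b}(\mathbf{z})\approx 2^{-nH(Z|S)}$ where $H(Z|S)=\mathbb{I}_Z^{\alpha^b}+H(Z|X,S)$ (entropies computed with $P_X$ and $P_S=\text{Bernoulli}(\alpha^b)$), so $M\,Q_{\mathbf{Z}|\mathbf{s}^b}(\mathbf{z})\approx 2^{n(\zeta-H(Z|X,S))}$, which is exponentially \emph{small} whenever the adversary's channel is noisy enough that $H(Z|X,S)>\zeta$ (e.g., the BSC$(0.4)$ example in the paper). In that regime your stated bound is vacuous, not doubly exponential. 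The real difficulty is that a single summand $W_{\mathbf{Z}|\mathbf{X}\mathbf{S}}(\mathbf{z}|F^b(\overline{\mathbf{a}},\widetilde{\mathbf{a}}),\mathbf{s}^b)$ can exceed its mean $Q_{\mathbf{Z}|\mathbf{s}^b}(\mathbf{z})$ by a factor $2^{\Theta(n)}$; the effective number of codewords contributing at a given typical $\mathbf{z}$ is only about $M2^{-n\mathbb{I}_Z^{\alpha^b}}\approx 2^{n\zeta}$. The standard repair is to truncate the information density: split codewords according to whether $\log\bigl(W_{\mathbf{Z}|\mathbf{X}\mathbf{S}}(\mathbf{z}|\mathbf{x},\mathbf{s}^b)/Q_{\mathbf{Z}|\mathbf{s}^b}(\mathbf{z})\bigr)\leq\gamma$ with $\gamma\approx n(\mathbb{I}_Z^{\alpha^b}+\zeta/2)$, apply Chernoff to the truncated sum (per-term bound $2^{\gamma}Q_{\mathbf{Z}|\mathbf{s}^b}(\mathbf{z})$, giving exponent of order $\delta^2M2^{-\gamma}\approx\delta^2 2^{n\zeta/2}$, which is doubly exponential), and control the discarded part through the information-density tail probability. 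Your ``typicality-based split over $\mathbf{z}$'' does not accomplish this, because the problem is atypical codewords for a typical output, not atypical outputs.

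It is also worth noting that the paper obtains the doubly-exponential bound by a different and lighter mechanism: it writes $\avgI{\overline{\mathbf{A}}^b;\mathbf{Z}^b}$ as an average over the $2^{m^{b+1}}$ prefix values $\overline{\mathbf{a}}$ of divergences $\D{\widehat{P}_{\mathbf{Z}^b|\overline{\mathbf{A}}^b=\overline{\mathbf{a}}}}{Q_{\mathbf{Z}|\mathbf{s}^b}}$, observes that these summands are independent across $\overline{\mathbf{a}}$ (disjoint sub-codebooks) and each bounded by $n\log(1/\mu_Z)$, bounds their common mean by an \emph{expected} one-shot resolvability bound plus a Hoeffding bound on the information density, and then gets the $2^{-2^{\Theta(n)}}$ tail from a Chernoff bound over the $2^{m^{b+1}}\geq 2^{\zeta n}$ independent terms. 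That route needs no pointwise concentration of the output distribution at all, so it sidesteps the truncation issue your sketch runs into; your route, once fixed as above, also works and yields a uniform statement over all $\overline{\mathbf{a}}$, but the concentration step is the delicate part and is currently the gap in your argument.
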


Our next objective is to upper-bound 
\begin{align}
\avgI{\mathbf{W}^1,\cdots,\mathbf{W}^B;\mathbf{Z}^1,\cdots, \mathbf{Z}^{B+1}|F^1=f^1, \cdots, F^B=f^B}
\end{align}
 when encoders $f^1, \cdots, f^B$ are such that for all $b$, $\avgI{\overline{\mathbf{A}}^b; \mathbf{Z}^b|F^b=f^b} \leq 2^{-\xi n}$ which holds with high probability by construction of the layered scheme and the definition of $\overline{\mathbf{A}}^b$. From now on, all expressions should be interpreted as conditioned on  $F^1=f^1, \cdots, F^B=f^B$ for such $f^1, \cdots, f^B$;  we will omit  $F^1=f^1, \cdots, F^B=f^B$ from our notation for the sake of simplicity. 
 We  prove the following auxiliary lemma.
\begin{lemma}
\label{lm:s_z_bound}
For all $b\in\intseq{1}{B}$, we have $\avgI{\overline{\mathbf{A}}^b; \mathbf{Z}^1, \cdots, \mathbf{Z}^b} \leq b 2^{-\xi n}.$
\end{lemma}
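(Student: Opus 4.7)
The plan is to prove the bound by induction on $b$, taking as the induction hypothesis exactly the statement $\avgI{\overline{\mathbf{A}}^b;\mathbf{Z}^1,\ldots,\mathbf{Z}^b}\leq b\cdot 2^{-\xi n}$. Throughout, all mutual informations are understood to be conditioned on a choice of codes $f^1,\ldots,f^B$ for which Lemma~\ref{lm:universal_secrecy_weight} provides $\avgI{\overline{\mathbf{A}}^{b'};\mathbf{Z}^{b'}}\leq 2^{-\xi n}$ in every sub-block $b'$. For the base case $b=1$, the claim is exactly the assertion of Lemma~\ref{lm:universal_secrecy_weight} in sub-block $1$.

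For the inductive step, I would first split via the chain rule
\begin{align}
\avgI{\overline{\mathbf{A}}^b;\mathbf{Z}^1,\ldots,\mathbf{Z}^b}=\avgI{\overline{\mathbf{A}}^b;\mathbf{Z}^b}+\avgI{\overline{\mathbf{A}}^b;\mathbf{Z}^1,\ldots,\mathbf{Z}^{b-1}\mid\mathbf{Z}^b},
\end{align}
bounding the first summand directly by $2^{-\xi n}$ using Lemma~\ref{lm:universal_secrecy_weight} applied to sub-block $b$. For the second summand I would use the elementary inequality
\begin{align}
\avgI{\overline{\mathbf{A}}^b;\mathbf{Z}^1,\ldots,\mathbf{Z}^{b-1}\mid\mathbf{Z}^b}\leq \avgI{\overline{\mathbf{A}}^b,\mathbf{Z}^b;\mathbf{Z}^1,\ldots,\mathbf{Z}^{b-1}},
\end{align}
which follows from $\avgI{\overline{\mathbf{A}}^b,\mathbf{Z}^b;\mathbf{Z}^1,\ldots,\mathbf{Z}^{b-1}}=\avgI{\mathbf{Z}^b;\mathbf{Z}^1,\ldots,\mathbf{Z}^{b-1}}+\avgI{\overline{\mathbf{A}}^b;\mathbf{Z}^1,\ldots,\mathbf{Z}^{b-1}\mid\mathbf{Z}^b}$ after dropping the non-negative first summand. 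Establishing the Markov chain
\begin{align}
(\mathbf{Z}^1,\ldots,\mathbf{Z}^{b-1})\leftrightarrow \overline{\mathbf{A}}^{b-1}\leftrightarrow (\overline{\mathbf{A}}^b,\mathbf{Z}^b),
\end{align}
I would then invoke the data-processing inequality and the induction hypothesis to obtain $\avgI{\overline{\mathbf{A}}^b,\mathbf{Z}^b;\mathbf{Z}^1,\ldots,\mathbf{Z}^{b-1}}\leq \avgI{\overline{\mathbf{A}}^{b-1};\mathbf{Z}^1,\ldots,\mathbf{Z}^{b-1}}\leq (b-1) 2^{-\xi n}$. Adding the two contributions closes the induction at $b\cdot 2^{-\xi n}$.

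The main obstacle will be justifying the Markov chain above from the chaining rule~\eqref{eq:def_s}. The key observation is that, once $\overline{\mathbf{A}}^{b-1}$ is fixed, every bit of $\mathbf{A}^b$ feeding into $\mathbf{X}^b=f^b(\mathbf{A}^b)$ is either a copy of, or an XOR with a fresh uniform $\mathbf{W}^b$-bit of, a bit of $\overline{\mathbf{A}}^{b-1}$, or else a fresh uniform $\mathbf{L}^b$-bit; in parallel, $\mathbf{Z}^b$ is produced through the memoryless wiretap channel whose noise is drawn independently of everything generated in sub-blocks $1,\ldots,b-1$. Hence, conditional on $\overline{\mathbf{A}}^{b-1}$, the pair $(\overline{\mathbf{A}}^b,\mathbf{Z}^b)$ is a function of randomness local to sub-block $b$ and is therefore independent of $(\mathbf{Z}^1,\ldots,\mathbf{Z}^{b-1})$. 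A brief case analysis based on the relative ordering of $u^b$, $m^b$ and $m^{b+1}$ dictated by~\eqref{eq:mb-def}--\eqref{eq:def_s} is needed to confirm that the chaining indices never introduce a hidden dependence on bits of $\widetilde{\mathbf{A}}^{b-1}$ lying outside $\overline{\mathbf{A}}^{b-1}$, after which the Markov chain and hence the induction go through.
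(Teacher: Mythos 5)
Your proof is correct and follows essentially the same route as the paper: the same induction on $b$, the same chain-rule split $\avgI{\overline{\mathbf{A}}^b;\mathbf{Z}^1,\ldots,\mathbf{Z}^b}=\avgI{\overline{\mathbf{A}}^b;\mathbf{Z}^b}+\avgI{\overline{\mathbf{A}}^b;\mathbf{Z}^1,\ldots,\mathbf{Z}^{b-1}\mid\mathbf{Z}^b}$, the same enlargement to $\avgI{\overline{\mathbf{A}}^b,\mathbf{Z}^b;\mathbf{Z}^1,\ldots,\mathbf{Z}^{b-1}}$, and the same Markov chain $(\mathbf{Z}^b,\overline{\mathbf{A}}^b)-\overline{\mathbf{A}}^{b-1}-(\mathbf{Z}^1,\ldots,\mathbf{Z}^{b-1})$ combined with data processing and the induction hypothesis. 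The only difference is presentational: the paper reads the Markov chain off its functional dependence graph, whereas you justify it directly from the chaining rule~\eqref{eq:def_s}, and the index-ordering check you flag is precisely what that figure is meant to encode.
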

\begin{proof}
It follows from our construction using layered coding that  for all $b$, $\avgI{\overline{\mathbf{A}}^b; \mathbf{Z}^b} \leq 2^{-\xi n}$. We use induction on $b$ to prove the result, i.e., assuming $\avgI{\overline{\mathbf{A}}^{b-1}; \mathbf{Z}^1, \cdots, \mathbf{Z}^{b-1}} \leq (b-1) 2^{-\xi n}$, we show that $\avgI{\overline{\mathbf{A}}^b; \mathbf{Z}^1, \cdots, \mathbf{Z}^b} \leq b 2^{-\xi n}$. For $b>1$, note that
\begin{align}
\avgI{\overline{\mathbf{A}}^b; \mathbf{Z}^1, \cdots, \mathbf{Z}^b}
 &= \avgI{\overline{\mathbf{A}}^b; \mathbf{Z}^b} + \avgI{\overline{\mathbf{A}}^b; \mathbf{Z}^1, \cdots, \mathbf{Z}^{b-1}|\mathbf{Z}^b}\displaybreak[0]\\
&\leq 2^{-\xi n} + \avgI{\overline{\mathbf{A}}^b; \mathbf{Z}^1, \cdots, \mathbf{Z}^{b-1}|\mathbf{Z}^b}\displaybreak[0]\\
&\leq 2^{-\xi n} + \avgI{\overline{\mathbf{A}}^b,\mathbf{Z}^b; \mathbf{Z}^1, \cdots, \mathbf{Z}^{b-1}}\displaybreak[0]\\
&\stackrel{(a)}{\leq} 2^{-\xi n} + \avgI{\overline{\mathbf{A}}^{b-1}; \mathbf{Z}^1, \cdots, \mathbf{Z}^{b-1}}\displaybreak[0]\\
&\stackrel{(b)}{\leq} b 2^{-\xi n},
\end{align}
where $(a)$ follows from the Markov chain $(\mathbf{Z}^b, \overline{\mathbf{A}}^b)-\overline{\mathbf{A}}^{b-1}-(\mathbf{Z}^1, \cdots, \mathbf{Z}^{b-1})$ as seen in Fig.~\ref{fig:fdg}, and $(b)$ follows from the induction hypothesis.
\end{proof}
We now resume the proof of the secrecy of the scheme. By the chain rule, we have
\begin{align}
\label{eq:chain_rule}
\avgI{\mathbf{W}^1,\cdots,\mathbf{W}^B;\mathbf{Z}^1, \cdots, \mathbf{Z}^{B+1}} = \sum_{b=1}^B \avgI{\mathbf{W}^b; \mathbf{Z}^1,\cdots, \mathbf{Z}^{B+1}|\mathbf{W}^{b+1},\cdots,\mathbf{W}^B}.
\end{align}
Considering each term in the above expression separately, we have
\begin{align}
\avgI{\mathbf{W}^b; \mathbf{Z}^1,\cdots, \mathbf{Z}^{B+1}|\mathbf{W}^{b+1},\cdots,\mathbf{W}^B}
&\stackrel{(a)}{\leq}  \avgI{\mathbf{W}^b; \mathbf{Z}^1,\cdots, \mathbf{Z}^{B+1}|\mathbf{W}^{b+1},\cdots,\mathbf{W}^B, \mathbf{A}^b}\\\displaybreak[0]
&\stackrel{(b)}{=}  \avgI{\mathbf{W}^b; \mathbf{Z}^1,\cdots, \mathbf{Z}^{b-1}|\mathbf{A}^b}\\\displaybreak[0]
&\leq \avgI{\mathbf{W}^b, \mathbf{A}^b; \mathbf{Z}^1,\cdots, \mathbf{Z}^{b-1}}\\\displaybreak[0]
&\stackrel{(c)}{=}\avgI{\mathbf{W}^b,  \overline{\mathbf{A}}^{b-1}, \mathbf{L}^b; \mathbf{Z}^1,\cdots, \mathbf{Z}^{b-1}}\\\displaybreak[0]
&\stackrel{(d)}{=} \avgI{\overline{\mathbf{A}}^{b-1}; \mathbf{Z}^1,\cdots, \mathbf{Z}^{b-1}}\\
&\stackrel{(e)}{\leq} b2^{-\xi n} \label{eq:w_bound},
\end{align}
where $(a)$ follows from the independence of $(\mathbf{W}^{b+1},\cdots,\mathbf{W}^B, \mathbf{A}^b)$ and $\mathbf{W}^b$ (it holds because the bits of $\mathbf{A}^b$ that depend on $\mathbf{W}^b$  are XOR of $\mathbf{W}^b$ and bits of $\mathbf{A}^{b-1}$ which are independent of $\mathbf{W}^b$), $(b)$ follows form the Markov chain 
\begin{align}
(\mathbf{W}^{b+1},\cdots, \mathbf{W}^B, \mathbf{Z}^b, \cdots, \mathbf{Z}^{B+1})-\mathbf{A}^b-(\mathbf{W}^b, \mathbf{Z}^1, \cdots, \mathbf{Z}^{b-1}),
\end{align}
$(c)$ follows since there is a one-to-one mapping between $(\mathbf{W}^b, \mathbf{A}^b)$ and $(\mathbf{W}^b,  \overline{\mathbf{A}}^{b-1}, \mathbf{L}^b)$, $(d)$ follows from the Markov chain 
\begin{align}
(\mathbf{W}^b,\mathbf{L}^b)-\overline{\mathbf{A}}^{b-1}-(\mathbf{Z}^1, \cdots, \mathbf{Z}^{b-1}),
\end{align}
and $(e)$ follows from Lemma~\ref{lm:s_z_bound}.

Combining \eqref{eq:chain_rule} and \eqref{eq:w_bound}, we obtain
\begin{align}
\avgI{\mathbf{W}^1,\cdots,\mathbf{W}^B;\mathbf{Z}^1, \cdots, \mathbf{Z}^{B+1}} \leq B^22^{-\xi n}.
\end{align}
\subsubsection{Rate Analysis}
 Note that the rate is not random here since the number of information bits  is fixed. We know that the bits $\{W_k^b: b\in\intseq{1}{B}, \kappa\in\intseq{u^b+1}{m^b}\}$ are supposed to be transmitted. Hence, if we define 
\begin{align}
\alpha \eqdef \frac{1}{B}\sum_{b=1}^B \alpha(\mathbf{s}^b),
\end{align}
the rate would be
\begin{align}
\sum_{b=1}^{B} \frac{[m^b-u^b]^+}{N}
&\geq \sum_{b=2}^{B-1} \frac{k - \left\lceil\pr{\mathbb{I}_Z^{\alpha^b}+\zeta}n\right\rceil-  k + \left\lceil\pr{\mathbb{I}_Y^{\alpha^b}-2\zeta}n\right\rceil}{nB + m}\\
&\geq \sum_{b=2}^{B-1} \frac{ - \mathbb{I}_Z^{\alpha^b}n  + \mathbb{I}_Y^{\alpha^b}n  - 3\zeta n- 2}{nB + m}\\
&\stackrel{(a)}{\geq} \mathbb{I}^{\alpha}_Y -  \mathbb{I}^\alpha_Z   -O\left(\zeta + \frac{1}{n} + \frac{1}{B}\right),
\end{align}
where $(a)$ follows from the convexity of $\mathbb{I}^\alpha_Y$ in $\alpha$ and the linearity of $\mathbb{I}^\alpha_Z$ in $\alpha$.

With an appropriate choice of the parameters of the oracle-assisted scheme, we obtain the following. 
\begin{corollary}
\label{cor:oracle-cr-result}
Fix $\zeta>0$ and $P_X$. There exists $\xi>0$  and a sequence of $(N, K_N, \lceil N^{\frac{2}{3}} \rceil)$ oracle-assisted coding schemes $\{\calC_N = (\mathbf{f}_N, \phi_N, \psi_N, \widehat{\psi}_N, Q_N)\}_{N\geq 1}$ such that for all $\{\mathbf{s}_N\}_{N\geq 1}$ with $\lim_{N\to \infty}\alpha(\mathbf{s}_N)= \alpha$ and $N$ large enough
\begin{align}
\frac{\psi_N}{N} &\geq \mathbb{I}^\alpha_Y - \mathbb{I}_Z^\alpha - O\pr{\zeta + \frac{1}{N^{\frac{1}{3}}}},\\
\label{eq:rel_oracle}
P_e(\calC_N|\mathbf{s}_N) &\leq 2^{-\xi N^{\frac{2}{3}}},\\
\P[Q_N]{\mathtt{S}(\calC_N|\mathbf{s}_N, Q_N) \geq 2^{-\xi N^{\frac{2}{3}}}} &\leq 2^{-2^{\xi N^{\frac{2}{3}}}}.
\end{align}
\end{corollary}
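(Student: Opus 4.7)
The plan is to instantiate the oracle-assisted scheme from Section~\ref{sec:simp_coding_scheme} with the specific parameter choices $n = \lceil N^{2/3}\rceil$, terminal AVC-block length $m = \Theta(n)$ (chosen via Lemma~\ref{lm:avc-code} large enough to reliably transmit the $u^{B+1} \le k = O(n)$ bits of $\mathbf{A}^{B+1}$), and $B = \lfloor (N - m)/n \rfloor = \Theta(N^{1/3})$ sub-blocks, so that $nB + m = N$. The common randomness $Q_N$ bundles the $B$ i.i.d.\ random encoders $F^1,\ldots,F^B$ with the common randomness of the terminal AVC code. Each of the three assertions of the corollary then follows by tracing the $N$-dependencies through the analysis already carried out for the scheme.

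For reliability, the bound $P_e(\calC_N\vert \mathbf{s}_N) \le B(2^{-n\zeta} + 2^{-n\xi}) + \epsilon_m^{\text{AVC}}$ derived in the reliability analysis, combined with Lemma~\ref{lm:avc-code}'s $\epsilon_m^{\text{AVC}} \le 2^{-\xi m}$ and $m = \Theta(n) = \Theta(N^{2/3})$, yields $P_e(\calC_N\vert \mathbf{s}_N) \le 2^{-\xi' N^{2/3}}$ for some $\xi'>0$ and $N$ large enough, the $B = O(N^{1/3})$ prefactor being absorbed into a slightly smaller exponent. For the rate, the rate analysis already establishes the deterministic bound $\psi_N/N \ge \mathbb{I}_Y^{\bar\alpha} - \mathbb{I}_Z^{\bar\alpha} - O(\zeta + 1/n + 1/B)$ where $\bar\alpha = \tfrac{1}{B}\sum_{b=1}^B \alpha(\mathbf{s}^b)$, using convexity of $\alpha \mapsto \mathbb{I}_Y^\alpha$ and linearity of $\alpha \mapsto \mathbb{I}_Z^\alpha$. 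Since the terminal block has weight fraction $m/N = O(N^{-1/3})$, we have $|\bar\alpha - \alpha(\mathbf{s}_N)| = O(N^{-1/3})$, and combining $\alpha(\mathbf{s}_N) \to \alpha$ with the Lipschitz continuity of $\alpha \mapsto \mathbb{I}_Y^\alpha - \mathbb{I}_Z^\alpha$ on $[0,1]$ gives $|\mathbb{I}_Y^{\bar\alpha} - \mathbb{I}_Z^{\bar\alpha} - (\mathbb{I}_Y^\alpha - \mathbb{I}_Z^\alpha)| = O(N^{-1/3})$ for $N$ large enough, yielding the claimed rate.

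For secrecy, Lemma~\ref{lm:universal_secrecy_weight} guarantees $\P[F^b]{\forall \mathbf{s}^b \in \calS^n,\, \avgI{\overline{\mathbf{A}}^b;\mathbf{Z}^b} \le 2^{-\xi n}} \ge 1 - 2^{-2^{\xi n}}$ for each $b$, a bound that is simultaneously uniform in $\mathbf{s}^b$. A union bound over the $B$ sub-blocks shows that the ``good encoders'' event (all $B$ leakages uniformly bounded over every action sequence) has probability at least $1 - B \cdot 2^{-2^{\xi n}} \ge 1 - 2^{-2^{\xi' N^{2/3}}}$ for $N$ large. On this event, the chain-rule decomposition~\eqref{eq:chain_rule}--\eqref{eq:w_bound} together with Lemma~\ref{lm:s_z_bound} (whose Markov-chain hypothesis depends only on the code structure, not on $\mathbf{s}_N$) gives $\mathtt{S}(\calC_N\vert \mathbf{s}_N, Q_N) \le B^2 \cdot 2^{-\xi n} \le 2^{-\xi' N^{2/3}}$ uniformly in $\mathbf{s}_N$, which is exactly the asserted doubly-exponential concentration over $Q_N$.

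There is no conceptual obstacle here: all the machinery is already in place in the preceding sub-sections. The only work is bookkeeping the various constants supplied by Lemmas~\ref{lm:universal_list_decode}, \ref{lm:universal_secrecy_weight}, and~\ref{lm:avc-code} and verifying that a single exponent $\xi > 0$ can be extracted that makes the three asserted $2^{-\xi N^{2/3}}$-type bounds hold uniformly over every admissible adversarial sequence $\mathbf{s}_N$, which is what the choice $n = \lceil N^{2/3}\rceil$ (making $2^{-\xi n}$ stretched exponential in $N$ while $B$ and $B^2$ remain polynomial in $N$) is tailored to achieve.
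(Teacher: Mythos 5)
Your proposal is correct and follows essentially the same route as the paper's proof: instantiate the oracle-assisted scheme with $n=\lceil N^{2/3}\rceil$, $B=\Theta(N^{1/3})$, $m=\Theta(n)$, bundle the $F^1,\dots,F^B$ and the AVC randomness into $Q_N$, and then read off the three bounds from the reliability, secrecy (Lemma~\ref{lm:universal_secrecy_weight} plus a union bound over sub-blocks, feeding the chain-rule analysis via Lemma~\ref{lm:s_z_bound}), and rate analyses, absorbing the polynomial-in-$N$ prefactors into a smaller exponent $\xi$. If anything your bookkeeping is slightly more explicit than the paper's (the doubly-exponential union-bound probability $1-B2^{-2^{\xi n}}$ and the continuity step relating $\tfrac{1}{B}\sum_b\alpha(\mathbf{s}^b)$ to the limiting $\alpha$), the only quibble being that ``Lipschitz'' with an $O(N^{-1/3})$ error is stronger than needed or guaranteed—uniform continuity plus the $O(\zeta)$ slack suffices, since $\alpha(\mathbf{s}_N)\to\alpha$ at an unspecified rate.
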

\begin{proof}
For a fixed $N$, consider an oracle-assisted scheme, $\calC_N$, with $n=\lceil N^{\frac{2}{3}}\rceil$ and $B=\lfloor N^{\frac{1}{3}}\rfloor - O(1)$ which operates as described above. Note that the common randomness $Q_N$ consists of $F^1, \cdots, F^B$ and the common randomness required for $\calC_m^{\text{AVC}}$. Since the probability of error is upper-bounded by $B(2^{-n\zeta} + 2^{-n\xi} + \epsilon_m^{AVC}$, $m= \Theta(n)$, and $n\geq N^{\frac{2}{3}}$, for $\xi$ small enough, we have $P_e(\calC_N|\mathbf{s}_N) \leq 2^{-\xi N^{\frac{2}{3}}}$. Additionally,  Lemma~\ref{lm:universal_secrecy_weight} together with union bound imply that with probability at least $1-B2^{-\xi N^{\frac{2}{3}}}$ for all $b\in\intseq{1}{B}$, we have $\avgI{\overline{\mathbf{A}}^b; \mathbf{Z}^b} \leq 2^{-\xi N^{\frac{2}{3}}}$. Therefore, by our secrecy analysis in this section, we have
\begin{align}
\P[Q_N]{\mathtt{S}(\calC_N|\mathbf{s}_N, Q_N) \geq 2^{-\xi N^{\frac{2}{3}}}} &\leq 2^{-2^{\xi N^{\frac{2}{3}}}}.
\end{align}
Finally, our rate analysis yields that $\frac{\psi_N}{N} \geq \mathbb{I}^\alpha_Y - \mathbb{I}_Z^\alpha - O\pr{\zeta + \frac{1}{N^{\frac{1}{3}}}}$. Note that in our analysis, we did not take into the account the portion of $\mathbf{s}_N$ which corresponds to the use of $\calC^{\text{AVC}}_m$, since it does not change  $\lim_{N\to \infty}\alpha(\mathbf{s}_N)$; we will use this simplification several times in the later proofs.
\end{proof}}
\subsection{Estimation of Adversary's Actions}
\label{sec:estimation}
In this section, we consider an $(N, K, n)$ oracle-assisted coding scheme introduced in Section~\ref{sec:simp_coding_scheme} and modify it to construct a regular coding scheme which does not require the oracle. The main idea is to select some positions in each sub-block at random and transmit a fixed symbol in those positions. Using the feedback channel $W_{\overline{X}|XS}$, the transmitter then estimates the weight of the channel states provided that the length of the sub-block is large enough. Formally, let the new sub-block length be $n'=n + t$ for $t$ specified later; At the beginning of every sub-block $b \in\intseq{1}{B}$,  the transmitter selects $t$  distinct positions $\mathbf{J}^b \eqdef (J_1^b,\cdots, J_t^b)$ with $J_1^b<\cdots < J_t^b$  uniformly at random  out of $n'$ positions, in which the transmitter sends a fixed symbol $x_0\in\calX$ with $W_{\overline{X}|X=x_0S=0} \neq W_{\overline{X}|X=x_0S=1}$, which always exists by our assumption that $W_{\overline{X}|XS=0} \neq W_{\overline{X}|XS=1}$. The encoder operates according to the sub-block $b$ of the oracle-assisted scheme in the remaining $n$ positions.  Since $W_{\overline{X}|X=x_0S=0} \neq W_{\overline{X}|X=x_0S=1}$, there exists $\overline{x}_0\in\overline{\calX}$ with $W_{\overline{X}|XS}(\overline{x}_0|x_0, 1) \neq W_{\overline{X}|XS}(\overline{x}_0|x_0, 0)$. For simplicity, let  $p_0 \eqdef W_{\overline{X}|XS}(\overline{x}_0|x_0, 0)\text{ and } p_1 \eqdef W_{\overline{X}|XS}(\overline{x}_0|x_0, 1)$. We also define
\begin{align}
\overline{\alpha}^b &\eqdef \alpha(\mathbf{s}^b)\\ 
\alpha^b &\eqdef  \frac{\wt{\mathbf{s}^b} - \sum_{i=1}^t s_{J^b_i}^b}{n}\\
\widehat{\alpha}^b &\eqdef \frac{1}{t}\sum_{i=1}^t\frac{\indic{\overline{X}_{J_i^{b}}^{b}=\overline{x}_0} - p_0 }{p_1-p_0 }.
\end{align}
In other words, $\overline{\alpha}^b$ is the overall fraction of ones in the state sequence of the sub-block $b$, $\alpha^b$ is the fraction of ones in the state sequence corresponding  to the  positions used for transmission, and $\widehat{\alpha}^b$ is our estimation of $\alpha^b$. Each $\widehat{\alpha}^b$ takes $t+1$ possible values by definition. Note that in the protocol described in Section~\ref{sec:simp_coding_scheme}, the parameters depending on $\alpha^1, \cdots, \alpha^B$ are $m^1, \cdots, m^B$, $u^1, \cdots, u^B$, and $\ell^1, \cdots, \ell^B$. In the modified protocol, we substitute them with the approximations
\begin{align}
\widehat{m}^{b} &\eqdef\begin{cases}0\quad&b = 1\\ k - \left\lceil\pr{\mathbb{I}_Z^{\widehat{\alpha}^{b-1}}+2\zeta}n\right\rceil\quad &b\in\intseq{2}{B}\end{cases}\\
\widehat{u}^{b} &\eqdef  \begin{cases}0\quad&b=1\\k  - \left\lfloor\pr{\mathbb{I}_Y^{\widehat{\alpha}^{b-1}}-3\zeta}n\right \rfloor\quad &b\in\intseq{2}{B+1}
\end{cases}\\
\log \widehat{\ell}^b &\eqdef  k  - \left\lceil\pr{\mathbb{I}_Y^{\widehat{\alpha}^b}-2\zeta}n\right\rceil
\end{align}
After the $B^{\text{th}}$ sub-block, the transmitter sends the positions $\mathbf{J}^1, \cdots, \mathbf{J}^B$ together with $\widehat{\alpha}^1, \cdots, \widehat{\alpha}^B$ using the \ac{AVC} code $\calC_m^{\text{AVC}}$ for  $m = O\pr{Bt \log (Bt) + B \log t}$. To analyze the secrecy and reliability of the modified coding scheme, we first prove that, with high probability, $\widehat{m}^b$, $\widehat{u}^b$, and $\widehat{\ell}^b$ properly estimate $m^b$, $u^b$, and $\ell^b$, where we redefined $m^b$, $u^b$, $\ell^b$ with the modified version of $\alpha^b$ as
\begin{align}
\label{eq:true_mb}
{m}^{b} &\eqdef\begin{cases}0\quad&b = 1\\ k - \left\lceil\pr{\mathbb{I}_Z^{{\alpha}^{b-1}}+\zeta}n\right\rceil\quad &b\in\intseq{2}{B}\end{cases}\\
\label{eq:true_ub}
{u}^{b} &\eqdef  \begin{cases}0\quad&b=1\\k  - \left\lfloor\pr{\mathbb{I}_Y^{{\alpha}^{b-1}}-2\zeta}n\right \rfloor\quad &b\in\intseq{2}{B+1}
\end{cases}\\
\label{eq:true_lb}
\log {\ell}^b &\eqdef  k  - \left\lceil\pr{\mathbb{I}_Y^{{\alpha}^b}-\zeta}n\right\rceil.
\end{align}
We recall that all $m^b$, $\widehat{m}^b$, $u^b$, $\widehat{u}^b$, $\ell^b$, and $\widehat{\ell}^b$ are random variables, but for simplicity and with slightly abusing our notation, we use small letters to represent them. Notice that we should be careful not to overestimate $m^b$ to remain secure and not to underestimate $u^b$ to remain reliable.
\begin{lemma}
\label{lm:estimation_performance}
For $\frac{t}{n}$ small enough, there exists some $\xi>0$ such that
\begin{align}
\P{\widehat{m}^b \leq m^b \text{ and } \widehat{u}^b \geq u^b \text{ and } \widehat{\ell}^b \geq \ell^b} \geq 1- 2^{-t\xi}.
\end{align}
\end{lemma}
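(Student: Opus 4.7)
\textbf{Proof plan for Lemma~\ref{lm:estimation_performance}.} The strategy is to show that the estimator $\widehat{\alpha}^b$ concentrates around $\alpha^b$ (and similarly for $b-1$), and then translate this concentration into the three desired inequalities using the Lipschitz-continuity of $\alpha \mapsto \mathbb{I}_Y^\alpha$ and $\alpha \mapsto \mathbb{I}_Z^\alpha$. The extra $\zeta$ slack built into the definitions of $\widehat{m}^b$, $\widehat{u}^b$, $\widehat{\ell}^b$ compared to $m^b$, $u^b$, $\ell^b$ (see \eqref{eq:true_mb}--\eqref{eq:true_lb}) is what absorbs the estimation error. Concretely, because $\mathbb{I}_Z^\alpha$ is affine in $\alpha$ and $\mathbb{I}_Y^\alpha$ is a mutual information over a convex combination of two DMCs (hence continuous and in fact Lipschitz in $\alpha$ with a constant $L$ that depends only on the channel and $|\calX|$), there exists $\delta = \delta(\zeta,L)>0$ such that whenever $|\widehat{\alpha}^{b-1} - \alpha^{b-1}| \leq \delta$ and $|\widehat{\alpha}^b - \alpha^b| \leq \delta$, all three inequalities $\widehat{m}^b \leq m^b$, $\widehat{u}^b \geq u^b$, $\widehat{\ell}^b \geq \ell^b$ hold simultaneously. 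The problem thus reduces to proving a concentration bound of the form $\P{|\widehat{\alpha}^b - \alpha^b| > \delta} \leq 2^{-\xi t}$.

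The plan for the concentration is to decompose the estimation error into three pieces. First, introduce the auxiliary quantity $\bar T^b \eqdef \frac{1}{t}\sum_{i=1}^t s_{J_i^b}^b$, which is the empirical fraction of $1$'s at the sampled positions. Conditional on $\mathbf{J}^b$ and $\mathbf{s}^b$, the indicators $\indic{\overline{X}_{J_i^b}^b = \overline{x}_0}$ are independent Bernoulli variables with parameters $p_0 + s_{J_i^b}^b (p_1 - p_0)$, so the centered and rescaled variables $\frac{\indic{\overline{X}_{J_i^b}^b = \overline{x}_0} - p_0}{p_1 - p_0}$ are bounded independent random variables with conditional mean $s_{J_i^b}^b$. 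Hoeffding's inequality then yields
\begin{align}
\P{|\widehat{\alpha}^b - \bar T^b| > \delta/3 \bigm| \mathbf{J}^b, \mathbf{s}^b} \leq 2 \exp\!\left(-c_1 t \delta^2 (p_1-p_0)^2\right)
\end{align}
for an absolute constant $c_1$. Second, $\bar T^b$ is the average of a without-replacement sample of size $t$ from the binary sequence $\mathbf{s}^b$ of length $n'=n+t$, whose true mean is $\overline{\alpha}^b = \wt{\mathbf{s}^b}/n'$; Hoeffding--Serfling's inequality for sampling without replacement gives
\begin{align}
\P{|\bar T^b - \overline{\alpha}^b| > \delta/3 \bigm| \mathbf{s}^b} \leq 2\exp\!\left(-c_2 t \delta^2\right).
\end{align}
Third, a direct computation shows $|\overline{\alpha}^b - \alpha^b| = \frac{t}{n}|\overline{\alpha}^b - \bar T^b| \leq t/n$, which is at most $\delta/3$ provided $t/n$ is small enough, as assumed.

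Combining the three bounds by the triangle inequality and a union bound yields $\P{|\widehat{\alpha}^b - \alpha^b| > \delta} \leq 4 \exp(-c\, t \delta^2)$ for some constant $c>0$ depending only on $p_0,p_1$; applying this to both indices $b-1$ and $b$ and taking a further union bound shows that the three desired inequalities hold simultaneously with probability at least $1 - 8\exp(-c t \delta^2)$. Choosing $\xi$ so that $2^{-\xi t} \geq 8\exp(-ct\delta^2)$ for all sufficiently large $t$ completes the argument. The only non-routine point is ensuring the Lipschitz constant of $\mathbb{I}_Y^\alpha$ is finite and independent of $n$, which follows from the continuity of mutual information in the channel transition matrix on the compact set of convex combinations of $W_{Y|XS=0}$ and $W_{Y|XS=1}$ for a fixed input distribution $P_X$; this is where the constraint ``$t/n$ small enough'' is used, since it forces all three intermediate errors below $\delta/3$.
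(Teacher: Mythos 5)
Your proposal is correct and takes essentially the same route as the paper's proof: Hoeffding's inequality for the Bernoulli indicators conditioned on the sampled positions, a hypergeometric (sampling-without-replacement) tail bound for the sampled state fraction versus $\overline{\alpha}^b$, the deterministic bound $|\overline{\alpha}^b-\alpha^b|\leq t/n$ absorbed by the assumption that $t/n$ is small, and finally uniform continuity of $\alpha\mapsto\mathbb{I}_Y^\alpha,\mathbb{I}_Z^\alpha$ together with the built-in $\zeta$ slack and a union bound over the three events. The only nit is your appeal to a \emph{Lipschitz} constant for $\mathbb{I}_Y^\alpha$, which can fail when an output atom vanishes at $\alpha\in\{0,1\}$; uniform continuity on $[0,1]$ (what the paper uses, and what your compactness remark already yields) is all the argument needs.
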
\modfirst{
\begin{proof}
See Appendix~\ref{sec:estimation-lem}.
\end{proof}}
Now for all $b\in\intseq{1}{B}$, we fix some values for $\widehat{m}^b$, $\widehat{u}^b$, $\widehat{\ell}^b$, $\mathbf{J}^b$  denoted by $\widetilde{m}^b$, $\widetilde{u}^b$, $\widetilde{\ell}^b$, $\widetilde{\mathbf{j}}^b$. Conditioned on these fixed values, we can omit the adversary's observations in the positions used for estimation from secrecy analysis because they only depend on the noise of the channel. Since conditioned on the positions selected for estimation, the estimated values are independent from all other sources of randomness, the probability of error and mutual information between the message and adversary's observations are the same as those of the coding scheme of Section~\ref{sec:simp_coding_scheme} when it operates on the positions selected for transmission and  $\tilde{m}^b$, $\tilde{u}^b$, $\tilde{\ell}^b$ are used in the scheme. Note that after conditioning, the variables $m^b$, $u^b$, $\ell^b$ defined in~\eqref{eq:true_mb}-\eqref{eq:true_lb} are fixed. Thus, we can define the set
\begin{align}
\calA \eqdef \{(\widetilde{m}^1, \cdots, \widetilde{m}^B, \widetilde{u}^1, \cdots, \widetilde{u}^B, \widetilde{\ell}^1, \cdots, \widetilde{\ell}^B, \widetilde{\mathbf{j}}^1, \cdots, \widetilde{\mathbf{j}}^B): \text{for all }b,~{m}^b \leq \widetilde{m}^b, {u}^b \geq \widetilde{u}^b,\widehat{\ell}^b \leq \widetilde{\ell}^b\}.
\end{align}
Let  $\calE$ be the event that a decoding error happens. Then,
\begin{align}
\P{\calE}
&= \sum \P{\text{for all }b,~\widehat{m}^b=\widetilde{m}^b, \widehat{u}^b = \widetilde{u}^b,\widehat{\ell}^b = \widetilde{\ell}^b, \mathbf{J}^b = \widetilde{\mathbf{j}}^b}\displaybreak[0]\\
&~~~~~~~~~~~~~~~\P{\calE|\text{for all }b,~\widehat{m}^b=\widetilde{m}^b, \widehat{u}^b = \widetilde{u}^b,\widehat{\ell}^b = \widetilde{\ell}^b, \mathbf{J}^b = \widetilde{\mathbf{j}}^b}\displaybreak[0]\\
&\stackrel{(a)}{\leq} \sum_{\calA}\P{\text{for all }b,~\widehat{m}^b=\widetilde{m}^b, \widehat{u}^b = \widetilde{u}^b,\widehat{\ell}^b = \widetilde{\ell}^b,\mathbf{J}^b = \widetilde{\mathbf{j}}^b}\\
&~~~~~~~~~~~~~~~\P{\calE|\text{for all }b,~\widehat{m}^b=\widetilde{m}^b, \widehat{u}^b = \widetilde{u}^b,\widehat{\ell}^b = \widetilde{\ell}^b, \mathbf{J}^b = \widetilde{\mathbf{j}}^b} + B 2^{-\xi t}\\
&\stackrel{(b)}{\leq} 2^{-\xi N^{\frac{2}{3}}} + B 2^{-\xi t},
\end{align}
where the first summation is taken over all possible values of $\tilde{m}^b$, $\tilde{u}^b$, $\tilde{\ell}^b$, $\tilde{j}^b_1, \cdots \tilde{j}^b_t$ for all $b$, $(a)$ follows from Lemma~\ref{lm:estimation_performance}, and $(b)$ follows from~\eqref{eq:rel_oracle} and our argument that the reliability of the modified scheme is the same as the reliability of oracle-assisted scheme under the conditioning.

For secrecy analysis, we first fix a value common randomness $q$ such that conditioned on $Q=q$, the coding scheme in Section~\ref{sec:simp_coding_scheme} is secure. In particular, let $q$ be such that for all state sequences $\mathbf{s}$, in the coding scheme of Section~\ref{sec:simp_coding_scheme},
\begin{align}
\avgI{\mathbf{W}^1, \cdots, \mathbf{W}^B; \mathbf{Z}^1, \cdots, \mathbf{Z}^{B+1}|Q=q}\leq 2^{-\xi N^{\frac{2}{3}}}.
\end{align}
Then, we have
\begin{multline}
\avgI{\mathbf{W}^1, \cdots, \mathbf{W}^B; \mathbf{Z}^1, \cdots, \mathbf{Z}^{B+1}|Q=q}\\\displaybreak[0]
\begin{split}
& \leq \avgI{\mathbf{W}^1, \cdots, \mathbf{W}^B; \mathbf{Z}^1, \cdots, \mathbf{Z}^{B+1},\widehat{m}^1, \cdots, \widehat{m}^B, \widehat{u}^1, \cdots, \widehat{u}^B, \widehat{\ell}^1, \cdots,\widehat{\ell}^B, \mathbf{J}^1, \cdots, \mathbf{J}^B |Q=q}\\
&\stackrel{(a)}{=} \avgI{\mathbf{W}^1, \cdots, \mathbf{W}^B; \mathbf{Z}^1, \cdots, \mathbf{Z}^{B+1}|\widehat{m}^1, \cdots, \widehat{m}^B, \widehat{u}^1, \cdots, \widehat{u}^B, \widehat{\ell}^1, \cdots,\widehat{\ell}^B, \mathbf{J}^1, \cdots, \mathbf{J}^B,Q=q}\\
&= \sum\P{\text{for all }b,~\widehat{m}^b=\widetilde{m}^b, \widehat{u}^b = \widetilde{u}^b,\widehat{\ell}^b = \widetilde{\ell}^b, \mathbf{J}^b = \widetilde{\mathbf{j}}^b}\\
&~~~~~~~~~~~~~~~\avgI{\mathbf{W}^1, \cdots, \mathbf{W}^B; \mathbf{Z}^1, \cdots, \mathbf{Z}^{B+1}|\text{for all }b,~\widehat{m}^b=\widetilde{m}^b, \widehat{u}^b = \widetilde{u}^b,\widehat{\ell}^b = \widetilde{\ell}^b, \mathbf{J}^b = \widetilde{\mathbf{j}}^b,Q=q}\\
&\leq \sum_{\calA}\P{\text{for all }b,~\widehat{m}^b=\widetilde{m}^b, \widehat{u}^b = \widetilde{u}^b,\widehat{\ell}^b = \widetilde{\ell}^b, \mathbf{J}^b = \widetilde{\mathbf{j}}^b}\\
&~~~~~~~~~~~~~~~\avgI{\mathbf{W}^1, \cdots, \mathbf{W}^B; \mathbf{Z}^1, \cdots, \mathbf{Z}^{B+1}|\text{for all }b,~\widehat{m}^b=\widetilde{m}^b, \widehat{u}^b = \widetilde{u}^b,\widehat{\ell}^b = \widetilde{\ell}^b, \mathbf{J}^b = \widetilde{\mathbf{j}}^b,Q=q}+ K_N B2^{-\xi t}\\
&\leq 2^{-\xi N^{\frac{2}{3}}}+ K_N B2^{-\xi t},
\end{split}
\end{multline}
where $(a)$ follows from the independence of $\mathbf{W}^1, \cdots, \mathbf{W}^B$ and $\widehat{m}^1, \cdots, \widehat{m}^B, \widehat{u}^1, \cdots, \widehat{u}^B, \widehat{\ell}^1, \cdots,\widehat{\ell}^B, \mathbf{J}^1, \cdots, \mathbf{J}^B$. Finally, following the same reasoning of the rate analysis in Section~\ref{sec:simp_coding_scheme}, for $\widehat{\alpha} \eqdef \frac{1}{B} \sum_{b=1}^B \widehat{\alpha}^b$, the rate would be greater than
\begin{align}
\mathbb{I}^{\widehat{\alpha}}_Y - \mathbb{I}^{\widehat{\alpha}}_Z + O\pr{\zeta + \frac{1}{B} + \frac{1}{n}}.
\end{align}
Using the uniform continuity of $\mathbb{I}^\alpha_Y$ and $\mathbb{I}^\alpha_Z$ in $\alpha$ and \eqref{eq:estimation_error}, for some $\xi > 0$, we obtain 
\begin{align}
\P{\mathbb{I}^{\widehat{\alpha}}_Y - \mathbb{I}^{\widehat{\alpha}}_Z + O\pr{\zeta + \frac{1}{B} + \frac{1}{n}} \geq \mathbb{I}^\alpha_Y - \mathbb{I}_Z^\alpha } \geq 1 - 2^{-\xi t}.
\end{align}

\begin{corollary}
\label{cor:no_oracle_scheme}
For any $\zeta>0$, there exists $\xi > 0$ and  a sequence of $(N, K_N)$ codes $\{\calC_N = (\mathbf{f}_N, \phi_N, \psi_N, \widehat{\psi}_N, Q_N)\}_{N\geq 1}$ such that for all sequences of states $\{\mathbf{s}_N\}_{N\geq 1}$ with $\lim_{N\to \infty}{\frac{\wt{\mathbf{s}_N}}{N}} = \alpha$ and for large enough $N$, 
\begin{align}
\P{  \frac{\psi_N(\mathbf{\overline{X}}, Q_N)}{N}  \geq \mathbb{I}^\alpha_Y - \mathbb{I}^\alpha_Z -  \zeta \big |\mathbf{s}_N} &\geq 1 - 2^{-\xi N^{\frac{1}{3}}},\\
P_e(\calC_N|\mathbf{s}_N) &\leq 2^{-\xi N^{\frac{1}{3}}},\label{eq:estimaiton-rel}\\
\P[Q_N]{\mathtt{S}(\calC_N|\mathbf{s}_N, Q_N) \geq 2^{-\xi N^{\frac{1}{3}}}} &\leq 2^{-2^{\xi N^{\frac{2}{3}}}}.
\label{eq:estimation-sec}
\end{align}
\end{corollary}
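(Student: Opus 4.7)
The plan is to verify that the modified coding scheme constructed just above the corollary statement, together with an appropriate choice of the parameter $t$, satisfies the three asymptotic bounds. The parameters $n=\lceil N^{2/3}\rceil$ and $B=\lfloor N^{1/3}\rfloor - O(1)$ come from Corollary~\ref{cor:oracle-cr-result}, and the remaining knob is the number of pilot positions $t$ per sub-block. I would set $t\eqdef\lceil N^{1/3}\rceil$, append the final \ac{AVC} code of length $m=O\pr{Bt\log(Bt)+B\log t}=O(N^{2/3}\log N)$ to transmit $(\mathbf{J}^1,\widehat{\alpha}^1,\dots,\mathbf{J}^B,\widehat{\alpha}^B)$, and check that $B(n+t)+m\leq N$ for $N$ large enough, adjusting $n$ by an additive $o(n)$ term if needed so that the total block length is exactly $N$.

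For reliability, I would combine three error events: the estimator fails (Lemma~\ref{lm:estimation_performance} gives failure probability $B\cdot 2^{-\xi t}$ by the union bound over sub-blocks), the decoder of the oracle-assisted scheme fails conditioned on good estimates (bounded by $2^{-\xi N^{2/3}}$ via~\eqref{eq:rel_oracle}), or the \ac{AVC} terminal code fails ($\epsilon_m^{\text{AVC}}\leq 2^{-\xi m}$ from Lemma~\ref{lm:avc-code}). With $t=\Theta(N^{1/3})$ and $m=\Omega(N^{2/3})$, the dominant term is $B\cdot 2^{-\xi t}\leq 2^{-\xi' N^{1/3}}$ for a slightly smaller $\xi'>0$, which yields~\eqref{eq:estimaiton-rel}. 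For secrecy, I would condition on the random choices $(\widehat{m}^b,\widehat{u}^b,\widehat{\ell}^b,\mathbf{J}^b)_b$ lying in the ``good'' set $\calA$, so that after conditioning the induced scheme on the transmission coordinates is exactly an oracle-assisted scheme whose parameters $\widetilde{m}^b,\widetilde{u}^b,\widetilde{\ell}^b$ are consistent with the true $m^b,u^b,\ell^b$ defined in~\eqref{eq:true_mb}--\eqref{eq:true_lb}. Averaging against the secrecy bound from Corollary~\ref{cor:oracle-cr-result} and paying $K_N\cdot B\cdot 2^{-\xi t}$ for the complement of $\calA$ (using $\avgI{\mathbf{W};\mathbf{Z}}\leq K_N$ trivially), I obtain a deterministic upper bound of the form $2^{-\xi N^{2/3}}+K_N B\,2^{-\xi t}$; doubly-exponential concentration in the common randomness $Q_N$ is inherited from Corollary~\ref{cor:oracle-cr-result}, giving~\eqref{eq:estimation-sec}.

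For the rate bound, I would use the rate identity $\frac{\psi_N}{N}\geq \mathbb{I}_Y^{\widehat{\alpha}}-\mathbb{I}_Z^{\widehat{\alpha}}-O\!\pr{\zeta+\frac{1}{B}+\frac{1}{n}}$ derived in the passage above the corollary. Since $\mathbb{I}_Y^\alpha$ is concave in $\alpha$ and $\mathbb{I}_Z^\alpha$ is linear in $\alpha$, both maps are uniformly continuous on $[0,1]$; combined with the concentration $|\widehat{\alpha}^b-\alpha^b|\leq \zeta$ with probability $\geq 1-2^{-\xi t}$ implicit in Lemma~\ref{lm:estimation_performance} (the same bound that controls $\widehat{m}^b,\widehat{u}^b,\widehat{\ell}^b$), Jensen's inequality applied to the empirical average $\widehat{\alpha}=\frac{1}{B}\sum_b \widehat{\alpha}^b$ and the union bound across the $B$ sub-blocks give $\mathbb{I}_Y^{\widehat{\alpha}}-\mathbb{I}_Z^{\widehat{\alpha}}\geq \mathbb{I}_Y^{\alpha}-\mathbb{I}_Z^{\alpha}-\zeta$ with probability at least $1-B\,2^{-\xi t}\geq 1-2^{-\xi' N^{1/3}}$, which together with the $O(\zeta+B^{-1}+n^{-1})=O(\zeta)$ penalty yields the first claimed inequality after redefining $\zeta$.

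The main obstacle I anticipate is keeping the book-keeping of the parameter exponents consistent: the three desired rates ($2^{-\xi N^{1/3}}$ for reliability and rate, $2^{-2^{\xi N^{2/3}}}$ for secrecy) are obtained with different estimation granularities, and one must check that the single choice $t=\lceil N^{1/3}\rceil$ simultaneously balances the estimation penalty $B\,2^{-\xi t}$, the AVC overhead $m=O(Bt\log(Bt))$, and the rate-loss term $O(B^{-1}+n^{-1})$ within the $N$ channel uses allotted. Everything else is a direct assembly of the bounds already established in Sections~\ref{sec:simp_coding_scheme} and~\ref{sec:estimation}.
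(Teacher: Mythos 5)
Your proposal is correct and follows essentially the same route as the paper: instantiate the estimation-based scheme of Section~\ref{sec:estimation} with $n=\Theta(N^{2/3})$, $B=\Theta(N^{1/3})$, $t=\Theta(N^{1/3})$, inherit the bounds of Corollary~\ref{cor:oracle-cr-result}, and absorb the estimation penalties $B2^{-\xi t}$ (reliability), $K_N B 2^{-\xi t}$ (secrecy, for each good realization of $Q_N$), and the uniform-continuity/concentration argument for the rate, shrinking $\xi$ to land at the $2^{-\xi N^{1/3}}$ exponents. The only nit is that $\mathbb{I}_Y^{\alpha}$ is convex (not concave) in $\alpha$, but since you invoke it only for uniform continuity (the Jensen step is already built into the quoted rate identity), this does not affect the argument.
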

\begin{proof}
For any $N$, let $\widetilde{\calC}_N$ be the $(N, K_N)$ coding scheme introduced in Corollary~\ref{cor:oracle-cr-result}, $n' = \lceil N^{\frac{2}{3}} \rceil$, $t = n'-n = \sqrt{n'} $, and $B=\lfloor N^{\frac{1}{3}}\rfloor - O(1)$. Then, for the modified coding scheme, $\calC_{N}$, we have for large enough $N$
\begin{align}
P_e(\calC_{N}|\mathbf{s}_N) \leq 2^{-\xi N^{\frac{2}{3}}} + B2^{-\xi t},
\end{align}
which is less than $2^{-\xi N^{\frac{1}{3}}}$ for $\xi$ small enough. Furthermore, for large enough $N$ and any value of common randomness $q$, the information leakage conditioned on $Q=q$ is upper-bounded by
\begin{align}
\mathtt{S}(\calC_{N}|\mathbf{s}_N, q) \leq \mathtt{S}(\widetilde{\calC}_{N}|\mathbf{s}_N, q)  + B 2^{-\xi t}.
\end{align}
Substituting the value of $n$, $B$, and $t$ and choosing $\xi$ small enough, we have $\P[Q_N]{\mathtt{S}(\calC_N|\mathbf{s}_N, Q_N) \geq 2^{-\xi N^{\frac{1}{3}}}}\leq 2^{-2^{\xi N^{\frac{2}{3}}}} $.
\end{proof}

\subsection{Input Distribution Selection}
\label{sec:input-dist}
We have assumed so far that a fixed \ac{PMF} $P_X$ is used for random coding for all sub-blocks. To achieve the  secrecy capacity of the wiretap channel, in general, channel prefixing $P_{X|U}$ is  needed~\cite{Csiszar1978}. Additionally, since the  capacity achieving input distribution and the optimal prefix channel might vary for different adversary's action, we should adapt $P_X$ according to the feedback from adversary's action to achieve $C(\alpha)$. Therefore, in each sub-block, the transmitter selects a distribution $P_{XU}$ which depends on the causal information the transmitter obtained about channel states. The challenge here is that in general, $\avgI{U;Y} - \avgI{U;Z|S}$  is not concave in $P_{XU}$, and therefore, we follow a different approach than~\cite{lomnitz2013universal} to select the input distribution, which is based on the results of adversarial multi-arm bandit problem~\cite{Bubeck2012}. Since those results cannot be applied to an arbitrary set of ``bandits", we first introduce a technical lemma that helps us  reduce the set of possible input distributions to a finite set.

\begin{lemma}
\label{lm:cont-input-approx}
Suppose $\calA$ and $\calB$ are compact metric spaces, and $f:\calA\times \calB \to \mathbb{R}$ is continuous. For all $\zeta>0$, there exists $a_1, \cdots, a_{\nu}\in\calA$ such that for all $b\in\calB$,
\begin{align}
\left|\max_{i\in\intseq{1}{\nu}}f(a_i, b) - \max_{a\in\calA} f(a, b)\right| \leq \zeta.
\end{align}
\end{lemma}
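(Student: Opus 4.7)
The plan is to combine uniform continuity of $f$ with total boundedness of $\calA$; the two together let us replace the continuum $\calA$ by a finite net uniformly in $b$.

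First, I would invoke the standard fact that a continuous function on a compact metric space is uniformly continuous. Since $\calA$ and $\calB$ are compact metric spaces, so is the product $\calA\times \calB$ under the product metric, and therefore $f$ is uniformly continuous on $\calA\times \calB$. This yields some $\delta>0$ such that whenever $d_{\calA}(a,a')<\delta$ we have $|f(a,b)-f(a',b)|<\zeta$ for every $b\in\calB$ simultaneously. The crucial point here is that $\delta$ is chosen independently of $b$, which is exactly what uniform continuity on the product space buys us.

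Second, compactness of $\calA$ gives total boundedness, so there exist finitely many points $a_1,\ldots,a_\nu\in\calA$ forming a $\delta$-net: every $a\in\calA$ satisfies $d_{\calA}(a,a_i)<\delta$ for some $i\in\intseq{1}{\nu}$. Fix this net once and for all; it depends only on $\delta$ (hence on $\zeta$), and not on $b$.

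Third, I would verify both inequalities. The bound $\max_{i}f(a_i,b)\leq \max_{a\in\calA}f(a,b)$ is trivial since each $a_i\in\calA$. For the reverse, compactness of $\calA$ together with continuity of $f(\cdot,b)$ ensures the existence of a maximizer $a^\star(b)\in\arg\max_{a\in\calA}f(a,b)$; pick $a_i$ in the net with $d_{\calA}(a_i,a^\star(b))<\delta$, so by the uniform continuity step $f(a_i,b)\geq f(a^\star(b),b)-\zeta=\max_{a\in\calA}f(a,b)-\zeta$, and thus $\max_{i}f(a_i,b)\geq \max_{a\in\calA}f(a,b)-\zeta$. Combining the two directions gives the claim.

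There is no real obstacle here: the whole argument is a routine compactness/uniform-continuity sandwich. The only point worth keeping an eye on is making sure the net $\{a_i\}$ is chosen before fixing $b$, so that the approximation is uniform over $\calB$; this is ensured by working with uniform continuity on the product space rather than on each fiber $f(\cdot,b)$ separately.
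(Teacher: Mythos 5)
Your proposal is correct and follows essentially the same route as the paper: uniform continuity of $f$ on the compact product $\calA\times\calB$, a finite $\delta$-net of $\calA$ from compactness, and comparison with a maximizer $a^\star(b)$, with the key point that $\delta$ (and hence the net) is chosen independently of $b$. No gaps.
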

\begin{proof}
Since $f$ is continuous and its domain is a product of two compact sets, it is uniformly continuous. Thus, there exists $\delta>0$ such that for all $(a, b)$ and $(a', b')$ in $\calA\times \calB$, if $d((a, b), (a', b')) \leq \delta$, then $|f(a, b)-f(a',b')|\leq \zeta$. Because $\calA$ is compact, there exist $a_1, \cdots, a_{\nu}$ such that for all $a\in \calA$
\begin{align}
\label{eq:compact-cover}
\min_{i\in\intseq{1}{\nu}} d(a_i, a) \leq \delta.
\end{align}
We now claim that $a_1, \cdots, a_{\nu}$ have the desired property. Consider $b\in\calB$. Since, $\calA$ is compact, there exists $a^* \in \calA$ such that $f(a^*, b) = \max_{a\in\calA} f(a, b)$. Moreover, by \eqref{eq:compact-cover}, there exists $i\in\intseq{1}{\nu}$ such that $d(a_i, a^*)\leq \delta$. We know that $d((a_i, b), (a^*, b)) \leq d(a_i, a^*) \leq \delta$, so $|f(a_i, b) -f(a^*, b)| \leq \zeta$. This completes the proof.
\end{proof}
We now provide a result from reinforcement learning regarding the problem of ``adversarial bandit" described as follows. Suppose for each $t\in\intseq{1}{T}$, we make a choice $\modfirst{V}^t\in \intseq{1}{\nu}$. The reward for choice $v$ at time $t$ is denoted by $g_{\modfirst{v}, t} \in[0, 1]$ assigned by an adversary. Only after we made the choice at time $t$, we are informed about $g_{\modfirst{V}^t, t}$. For a specific strategy $\modfirst{V}^1, \cdots, \modfirst{V}^T$, we define the regret as
\begin{align}
R \eqdef \max_{\modfirst{v}\in\intseq{1}{\nu}} \sum_{t=1}^T g_{\modfirst{v}, t} - \sum_{t=1}^T g_{\modfirst{V}^t, t}.
\end{align}
The following lemma from~\cite{Bubeck2012} guarantees the existence of strategies with sub-linear regret with high probability. 
\begin{lemma}
\label{lm:bandit}
For all $\delta \in (0, 1)$, there exists $\modfirst{V}^1, \cdots, \modfirst{V}^T$, which are possibly randomized, such that for all choices of $\{g_{\modfirst{v}, t}, \modfirst{v}\in \intseq{1}{\nu}, t\in\intseq{1}{T}\}$, with probability at least $1-\delta$, we have 
\begin{align}
R \leq 5.15 \sqrt{T\nu \ln\pr{\nu \delta^{-1}}}.
\end{align}
\end{lemma}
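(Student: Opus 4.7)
\textbf{Proof plan for Lemma~\ref{lm:bandit}.} Since this is a high-probability regret bound for the non-stochastic (adversarial) multi-armed bandit, my plan is to invoke an EXP3-type algorithm with exponential weights and importance-weighted gain estimators, and to obtain the stated bound as a direct consequence of the EXP3.P analysis of Auer--Cesa-Bianchi--Freund--Schapire, summarized in~\cite{Bubeck2012}.

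First I would specify the algorithm. At each time $t$, maintain weights $w_{v,t}$ (initialized to $1$) and play $V^t$ from the mixture $p_{v,t}=(1-\gamma)w_{v,t}/\sum_u w_{u,t}+\gamma/\nu$, where the uniform exploration mass $\gamma\in(0,1)$ and a learning rate $\eta>0$ will be tuned later. Only the realized gain $g_{V^t,t}$ is observed, so I would estimate the full gain vector via the upper-confidence importance-weighted estimator $\widetilde g_{v,t}=\bigl(g_{v,t}\,\indic{V^t=v}+\beta\bigr)/p_{v,t}$ for some bias parameter $\beta>0$, and then update $w_{v,t+1}=w_{v,t}\exp(\eta\widetilde g_{v,t})$. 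The positive bias $\beta$ is what enables a high-probability (not merely in-expectation) guarantee.

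Next I would run the standard potential-function argument on $\Phi_t=\ln\bigl(\tfrac{1}{\nu}\sum_v w_{v,t}\bigr)$. Using the inequality $e^x\leq 1+x+x^2$ for $x\leq 1$ and the fact that $p_{v,t}\widetilde g_{v,t}\leq 1+\beta$, per-step bounds on $\Phi_{t+1}-\Phi_t$ translate, after summing in $t$ and comparing with $\eta\widetilde G_{v,T}-\ln\nu$ for any fixed arm $v$, into a deterministic inequality of the form
\begin{align}
\sum_{t=1}^T g_{V^t,t}\geq (1-\gamma)\sum_{t=1}^T\widetilde g_{v,t}-\tfrac{\ln\nu}{\eta}-\eta(1+\beta)\sum_{t=1}^T\sum_u p_{u,t}\widetilde g_{u,t}-\beta\sum_{t=1}^T\sum_u\widetilde g_{u,t}.
\end{align}
To pass from $\widetilde g_{v,t}$ back to $g_{v,t}$ with high probability, I would apply a Hoeffding--Azuma bound to the martingale difference $\widetilde g_{v,t}-g_{v,t}-\beta/p_{v,t}$; a union bound over the $\nu$ arms, together with $p_{v,t}\geq\gamma/\nu$, controls all deviation terms simultaneously at probability $1-\delta$ provided $\beta\gtrsim\sqrt{\nu\ln(\nu/\delta)/(T\gamma)}$.

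Finally, I would optimize $\eta,\gamma,\beta$ (up to constants, $\gamma\asymp\sqrt{\nu\ln(\nu/\delta)/T}$, $\eta\asymp\sqrt{\ln\nu/(T\nu)}$, $\beta\asymp\sqrt{\ln(\nu/\delta)/(T\nu)}$) to obtain a regret bound of order $\sqrt{T\nu\ln(\nu/\delta)}$ holding with probability at least $1-\delta$, matching the explicit constant $5.15$ in the statement. The main obstacle in carrying this out from scratch is the careful bookkeeping required to obtain the explicit numerical constant rather than just the correct asymptotic order, which is precisely why I would prefer to cite~\cite{Bubeck2012} where the constants are tracked; no problem-specific adaptation is needed, since the lemma is a verbatim instance of the general adversarial bandit result.
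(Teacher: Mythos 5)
Your proposal is correct and matches the paper's approach: the paper proves this lemma simply by citing the EXP3.P high-probability bound of \cite[Theorem 3.3]{Bubeck2012}, which is exactly the result you sketch and ultimately invoke. Your outline of the EXP3.P analysis (biased importance-weighted estimators, potential argument, Hoeffding--Azuma with a union bound, and parameter tuning) is a faithful summary of that cited proof, so no adaptation is needed beyond the citation.
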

\begin{proof}
See~\cite[Theorem 3.3]{Bubeck2012}.
\end{proof}

We are ready now to describe how the transmitter chooses $P_{XU}$ for each sub-block. For a joint \ac{PMF} $P_{XU}$ and $\alpha\in[0, 1]$, let $P_S$ be Bernoulli$(\alpha)$, $(S, U, X, Y, Z)$ be distributed according to $P_{SXUYZ} \eqdef  P_S P_{XU}W_{YZ|XS}$, and $g(P_{XU}, \alpha) \eqdef \avgI{U;Y} - \avgI{U;Z|S}$ which is continuous in its arguments. By definition, we have $\max_{P_{XU}} g(P_{XU}, \alpha) = C(\alpha)$.  According to Lemma~\ref{lm:cont-input-approx}, if we fix $\zeta > 0$, there exist $P_{XU}^1, \cdots, P_{XU}^{\nu}$ such that for all $\alpha\in[0, 1]$,
\begin{align}
\label{eq:dist-approx}
\left|\max_{\modfirst{v}\in\intseq{1}{\nu}} g(P_{XU}^{\modfirst{v}}, \alpha) -  \max_{P_{XU}} g(P_{XU}, \alpha) \right| = \left|\max_{\modfirst{v}\in\intseq{1}{\nu}} g(P_{XU}^{\modfirst{v}}, \alpha) -  C(\alpha)\right|  \leq  \zeta.
\end{align}
In the sub-block $b$, we therefore allow the transmitter to select a distribution $P_{XU}^{\modfirst{V}^b}$ where $\modfirst{V}^b\in\intseq{1}{\nu}$ may depend on $\widehat{\alpha}^1, \cdots, \widehat{\alpha}^{b-1}$ defined in Section~\ref{sec:estimation} and local randomness of the transmitter. By Lemma~\ref{lm:bandit}, we know that for all $\zeta>0$, there exists a choice of $\modfirst{V}^1, \cdots, \modfirst{V}^B$ and $\xi > 0$ such that
\begin{align}
\P{\max_{\modfirst{v}\in\intseq{1}{\nu}}\frac{1}{B} \sum_{b=1}^B g(P_{XU}^{\modfirst{v}}, \widehat{\alpha}^b) - \zeta\leq \frac{1}{B} \sum_{b=1}^B g(P_{XU}^{\modfirst{V}^b}, \widehat{\alpha}^b) } \geq 1 - 2^{-\xi B}.
\end{align} 
Using convexity of $g(P_{XU}, \alpha)$ in $\alpha$, for all $\modfirst{v}\in\intseq{1}{\nu}$, we have $\frac{1}{B}\sum_{b=1}^Bg(P_{XU}^{\modfirst{v}}, \widehat{\alpha}^b) \geq g(P_{XU}^{\modfirst{v}}, \widehat{\alpha})$ where $\widehat{\alpha}\eqdef \frac{1}{B}\sum_{b=1}^B \widehat{\alpha}_b$ with probability one. Therefore, we have
\begin{align}
\P{\max_{\modfirst{v}\in\intseq{1}{\nu}}g(P_{XU}^{\modfirst{v}}, \widehat{\alpha}) - \zeta \leq\frac{1}{B}\sum_{b=1}^B g(P_{XU}^{\modfirst{V}^b}, \widehat{\alpha}^b)  } \geq 1 - 2^{-\xi B}.
\end{align}
Furthermore, \eqref{eq:dist-approx} implies that
\begin{align}
\P{C(\widehat{\alpha})- 2\zeta\leq \frac{1}{B}\sum_{b=1}^B g(P_{XU}^{\modfirst{V}^b}, \widehat{\alpha}^b)  }   \geq 1 - 2^{-\xi B}.
\end{align}
We now show that with high probability $C(\widehat{\alpha})$ is close to $C(\alpha)$ where  $\alpha \eqdef  \frac{1}{B}\sum_{b=1}^B\alpha(\mathbf{s}^b)$. Continuity of $C(\alpha)$ in $\alpha$ implies that there exits $\zeta'>0$ such that if $|\alpha_1 - \alpha_2| \leq \zeta'$, then $|C(\alpha_1) - C(\alpha_2)| \leq \zeta$. Since $\widehat{\alpha}^1, \cdots, \widehat{\alpha}^B$ are independent and $\E{\widehat{\alpha}^b} = \alpha(\mathbf{s}^b)$, by Hoeffding's inequality,  we have 
\begin{align}
\P{\left|\frac{1}{B}\sum_{b=1}^B\widehat{\alpha}^b - \frac{1}{B}\sum_{b=1}^B\alpha(\mathbf{s}^b)\right| \geq   \zeta'}
&=\P{|\widehat{\alpha} - \alpha| \geq \zeta'}\\
& \leq 2 e^{-2B\zeta'^2}.
\end{align}
Hence, we obtain
\begin{align}
\P{C({\alpha})- 3\zeta\leq \frac{1}{B}\sum_{b=1}^B g(P_{XU}^{\modfirst{V}^b}, \widehat{\alpha}^b)  }   \geq 1 - 2^{-\xi B}.
\end{align}
In other words, with probability at least $1 -2^{-\xi n}$, our achieved rate,  $\frac{1}{B}\sum_{b=1}^B g(P_{XU}^{\modfirst{V}^b}, \widehat{\alpha}^b) - O\pr{\zeta + \frac{1}{B} + \frac{1}{n}}$, would be at least $C(\widehat{\alpha}) - 3\zeta$. 
At the end of the transmission, the transmitter sends $\modfirst{V}^1, \cdots, \modfirst{V}^B$ through the main channel using $\calC_m^{\text{AVC}}$ introduced in Lemma~\ref{lm:avc-code} for $m = O(B\log \nu)$. Note that   $\modfirst{V}^1, \cdots, \modfirst{V}^B$  are independent of the message bits since they depend only on $\widehat{\alpha}^1, \cdots, \widehat{\alpha}^B$ and local randomness of the transmitter, and their transmission does not leak any information about the messages.

We summarize the results in this section as a corollary.
\begin{corollary}
\label{cor:input-dist-sel}
Fix $\zeta>0$. There exists $\xi>0$  and a sequence of $(N, K_N)$  codes $\{\calC_N = (\mathbf{f}_N, \phi_N, \psi_N, \widehat{\psi}_N, Q_N)\}_{N\geq 1}$ such that for all $\{\mathbf{s}_N\}_{N\geq 1}$ with $\lim_{N\to \infty} \alpha(\mathbf{s}_N)= \alpha$ and $N$ large enough
\begin{align}
\P{\frac{\psi_N(\mathbf{\overline{X}}, Q_N)}{N}  \geq C(\alpha)-  \zeta  \big |\mathbf{s}_N} &\geq 1 - 2^{-\xi N^\frac{1}{3}},\\\displaybreak[0]
P_e(\calC_N|\mathbf{s}_N) &\leq 2^{-\xi N^{\frac{1}{3}}},\\ \displaybreak[0]
\P[Q_N]{\mathtt{S}(\calC_N|\mathbf{s}_N, Q_N) \geq 2^{-\xi N^{\frac{1}{3}}}} &\leq 2^{-2^{\xi N^{\frac{2}{3}}}}.
\end{align}

\end{corollary}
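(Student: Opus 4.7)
The plan is to formalize the construction already laid out in this subsection by composing the oracle-free scheme of Corollary~\ref{cor:no_oracle_scheme} with the adversarial-bandit input-distribution selection built above. First, fix $\zeta>0$ and invoke Lemma~\ref{lm:cont-input-approx} on the continuous function $g(P_{XU},\alpha)=\avgI{U;Y}-\avgI{U;Z|S}$ on the compact set of joint \acp{PMF} on $\calX\times\calU$ (with $|\calU|\leq|\calX|$ by the bound recalled before Theorem~\ref{thm:achv}) crossed with $\alpha\in[0,1]$, obtaining $P_{XU}^{1},\dots,P_{XU}^{\nu}$ for which $\max_v g(P_{XU}^v,\alpha)$ is $\zeta$-close to $C(\alpha)$ uniformly in $\alpha$. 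Then partition the $N$ channel uses into $B=\lfloor N^{1/3}\rfloor-O(1)$ sub-blocks of length $n'=\lceil N^{2/3}\rceil$ with $t=\sqrt{n'}$ estimation positions each, as in Corollary~\ref{cor:no_oracle_scheme}; in sub-block $b$ run the oracle-free scheme of Corollary~\ref{cor:no_oracle_scheme} but with input distribution $P_{XU}^{V^b}$, where $V^b\in\intseq{1}{\nu}$ is generated by the Exp3-type strategy of Lemma~\ref{lm:bandit} fed the (suitably normalized) reward $g(P_{XU}^{V^b},\widehat{\alpha}^b)$ at round $b$, with $\widehat{\alpha}^b$ the estimator of $\alpha(\mathbf{s}^b)$ built in Section~\ref{sec:estimation}. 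After the $B^{\text{th}}$ sub-block, transmit $V^1,\dots,V^B$ through $\calC_m^{\text{AVC}}$ for $m=O(B\log\nu)=o(N)$.

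The reliability and per-realization secrecy bounds transfer from Corollary~\ref{cor:no_oracle_scheme} by conditioning on $V^1,\dots,V^B$: conditionally, each sub-block is an instance of the oracle-free scheme analyzed there, and the extra \ac{AVC} transmission contributes only an additional exponentially small term to $P_e$. Crucially, $V^b$ depends only on $\widehat{\alpha}^1,\dots,\widehat{\alpha}^{b-1}$ and local transmitter randomness, so it is independent of the message bits $\mathbf{W}^1,\dots,\mathbf{W}^B$, and its appended transmission contributes no leakage. For the rate, the chain of inequalities already displayed in this subsection shows that with probability at least $1-2^{-\xi B}$ the averaged per-sub-block rate is at least $\frac{1}{B}\sum_b g(P_{XU}^{V^b},\widehat{\alpha}^b)-O(\zeta+1/B+1/n)$, which by Lemma~\ref{lm:bandit} and convexity of $g(P_{XU},\cdot)$ in $\alpha$ is at least $\max_v g(P_{XU}^v,\widehat{\alpha})-\zeta$; Lemma~\ref{lm:cont-input-approx} lower-bounds this by $C(\widehat{\alpha})-2\zeta$, and Hoeffding's inequality applied to the independent (given $\mathbf{s}_N$) estimators $\widehat{\alpha}^b$ together with uniform continuity of $\alpha\mapsto C(\alpha)$ upgrades this to $C(\alpha)-3\zeta$ with only an additional exponentially small failure probability.

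The main obstacle is the bookkeeping of the union bound over the simultaneous error events: failure of Lemma~\ref{lm:estimation_performance} for some sub-block, failure of the Exp3 regret bound of Lemma~\ref{lm:bandit}, failure of Hoeffding concentration of $\widehat{\alpha}$ around $\frac{1}{B}\sum_b\alpha(\mathbf{s}^b)$, and decoding failure of the final \ac{AVC} transmission of $V^1,\dots,V^B$ together with the estimation side information, all while verifying that the choices $n'=\Theta(N^{2/3})$, $B=\Theta(N^{1/3})$, $t=\Theta(N^{1/3})$, $\nu=O(1)$ are compatible in the sense that each contributes at most $2^{-\xi N^{1/3}}$ to the rate-failure and reliability bounds, while the doubly-exponential secrecy bound $2^{-2^{\xi N^{2/3}}}$ of Corollary~\ref{cor:no_oracle_scheme} survives since conditioning on the independent-of-message choice $(V^1,\dots,V^B)$ only requires a union bound over $\nu^B=2^{O(B)}$ values, which is absorbed harmlessly into the double exponential. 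Finally, by shrinking $\zeta$ at the outset the $O(\zeta+1/B+1/n+1/N^{1/3})$ slack can be folded into a single $\zeta$, yielding the stated bounds.
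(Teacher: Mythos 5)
Your proposal is correct and follows essentially the same route as the paper: Corollary~\ref{cor:input-dist-sel} is stated there as a summary of Section~\ref{sec:input-dist}, namely composing the oracle-free scheme of Corollary~\ref{cor:no_oracle_scheme} with the finite quantization of input distributions from Lemma~\ref{lm:cont-input-approx}, the adversarial-bandit selection of $V^1,\dots,V^B$ from Lemma~\ref{lm:bandit}, the Hoeffding/continuity step relating $C(\widehat{\alpha})$ to $C(\alpha)$, and the final \ac{AVC} transmission of the (message-independent) indices $V^1,\dots,V^B$. Your additional bookkeeping of the union bound over failure events and the absorption of the $\nu^B$ conditioning into the doubly-exponential secrecy bound is consistent with the parameter choices $B=\Theta(N^{1/3})$, $n'=\Theta(N^{2/3})$, $t=\Theta(N^{1/3})$ used in the paper.
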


\subsection{Reducing Common Randomness}
\label{sec:red_com_rand}
In this section, we reduce the common randomness required for the coding scheme with a standard robustification argument~\cite{ahlswede1986arbitrarily}. Specifically, the following lemma is the adaptation of~\cite[Lemma 12.8]{csiszar2011information} to our setting.
\begin{lemma}
\label{lm:cr-reduction}
Consider an $(N, K)$ code $\calC = (\mathbf{f}, \phi, \psi, \widehat{\psi}, Q)$. Let us assume that for some $\delta$, $\epsilon_1$, $\epsilon_2>0$, and a function $R:\calS^N\to\mathbb{R}$,  for all $\mathbf{s}$, we have
\begin{align}
\E[P_Q]{P_e(\calC|\mathbf{s}, Q)} = P_e(\calC|\mathbf{s})&\leq \epsilon_1,\displaybreak[0]\\
\E[P_{Q}] {P_{R(\mathbf{s})}(\calC|\mathbf{s}, Q)} &\leq \epsilon_1,\displaybreak[0]\\
\P[P_Q]{  \mathtt{S}(\calC|\mathbf{s}, Q) \geq \delta} &\leq \epsilon_2,
\end{align}
where  $P_{\overline{R}}(\calC |\mathbf{s}, q) \eqdef \P{\frac{\psi(\mathbf{\overline{X}}, Q)}{N}  < \overline{R}|Q = q, \mathbf{s}}$ for $\overline{R} > 0$. Then, for every $L$ and $\epsilon>0$ satisfying $N\log |\calS| + 2^{\frac{1}{2}L\epsilon} L \epsilon_2 + 1 < \frac{1}{2}L\epsilon $ and $\epsilon  > 2\log(1+\epsilon_1)$, there exist $q_1, \cdots, q_L \in \calQ$ such that for all $\mathbf{s}$
\begin{align}
\frac{1}{L} \sum_{\ell = 1}^L P_e(\calC|\mathbf{s}, q_\ell) &\leq \epsilon,\\
\frac{1}{L}\sum_{\ell = 1}^L P_{R(\mathbf{s})}(\calC|\mathbf{s}, q_\ell) &\leq \epsilon,\\
\text{  for all } \ell,~\mathtt{S}(\calC|\mathbf{s}, q_\ell) &\leq  \delta.
\end{align}
\end{lemma}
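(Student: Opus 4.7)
\textbf{Proof Plan for Lemma~\ref{lm:cr-reduction}.} The plan is to follow the standard Ahlswede robustification approach: choose $Q_1,\dots,Q_L$ i.i.d.~according to $P_Q$ and show, via a Chernoff-style argument combined with a union bound over the $|\calS|^N$ adversarial state sequences, that the probability of the random choice failing for at least one $\mathbf{s}$ is strictly less than one. Existence of deterministic $q_1,\dots,q_L$ with the required properties then follows from the probabilistic method.

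More concretely, I would first fix an arbitrary $\mathbf{s}\in\calS^N$ and handle the three failure events separately. For the reliability event
\begin{align}
\calE_{\text{rel}}(\mathbf{s}) \eqdef \left\{\frac{1}{L}\sum_{\ell=1}^L P_e(\calC|\mathbf{s},Q_\ell) > \epsilon\right\},
\end{align}
I would apply Markov's inequality to the non-negative random variable $2^{\sum_\ell P_e(\calC|\mathbf{s},Q_\ell)}$, use independence of the $Q_\ell$, and exploit the bound $2^{x} \leq 1+x$ for $x\in[0,1]$ to get
\begin{align}
\P{\calE_{\text{rel}}(\mathbf{s})} \leq 2^{-L\epsilon}\prod_{\ell=1}^L \E[P_Q]{2^{P_e(\calC|\mathbf{s},Q_\ell)}} \leq 2^{-L\epsilon}(1+\epsilon_1)^L = 2^{-L(\epsilon-\log(1+\epsilon_1))} \leq 2^{-L\epsilon/2},
\end{align}
where the last inequality uses the hypothesis $\epsilon > 2\log(1+\epsilon_1)$. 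Exactly the same argument, applied to $P_{R(\mathbf{s})}(\calC|\mathbf{s},Q_\ell)\in[0,1]$, yields $\P{\calE_{\text{rate}}(\mathbf{s})}\leq 2^{-L\epsilon/2}$. For the secrecy event
\begin{align}
\calE_{\text{sec}}(\mathbf{s}) \eqdef \left\{\exists\,\ell\in\intseq{1}{L}:\mathtt{S}(\calC|\mathbf{s},Q_\ell)\geq \delta\right\},
\end{align}
a simple union bound over the $L$ independent draws gives $\P{\calE_{\text{sec}}(\mathbf{s})}\leq L\epsilon_2$.

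Next I would union-bound over all $|\calS|^N = 2^{N\log|\calS|}$ sequences $\mathbf{s}$ to get
\begin{align}
\P{\bigcup_{\mathbf{s}}\bigl(\calE_{\text{rel}}(\mathbf{s})\cup \calE_{\text{rate}}(\mathbf{s})\cup \calE_{\text{sec}}(\mathbf{s})\bigr)} \leq 2^{N\log|\calS|}\bigl(2\cdot 2^{-L\epsilon/2} + L\epsilon_2\bigr),
\end{align}
and verify using the hypothesis $N\log|\calS| + 2^{L\epsilon/2}L\epsilon_2 + 1 < L\epsilon/2$ that this probability is strictly smaller than one. Hence there must exist a realization $(q_1,\dots,q_L)$ that simultaneously avoids $\calE_{\text{rel}}(\mathbf{s})$, $\calE_{\text{rate}}(\mathbf{s})$, and $\calE_{\text{sec}}(\mathbf{s})$ for every $\mathbf{s}$, which are precisely the three conclusions of the lemma.

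The main obstacle is the asymmetry between the three guarantees: the reliability and rate guarantees are averaged quantities, for which Chernoff/MGF bounds provide doubly-exponentially strong concentration in $L$, but the secrecy guarantee is required pointwise for every $\ell$, so only a linear-in-$L$ union bound is available. This forces $\epsilon_2$ to be chosen much smaller than $\epsilon_1$ — effectively doubly-exponentially small — in order to absorb the $|\calS|^N$ factor, which is exactly the content of the asymmetric-looking hypothesis $N\log|\calS| + 2^{L\epsilon/2}L\epsilon_2 + 1 < L\epsilon/2$. Once this parameter budget is respected, the remainder of the argument is routine; the only mildly delicate point is to apply the Chernoff-type estimate to the bounded random variables $P_e(\calC|\mathbf{s},Q_\ell)$ and $P_{R(\mathbf{s})}(\calC|\mathbf{s},Q_\ell)$ rather than to indicator variables, which is where the inequality $2^x\leq 1+x$ on $[0,1]$ plays its essential role.
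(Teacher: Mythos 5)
Your proposal is correct and follows essentially the same route as the paper: i.i.d.\ draws $Q_1,\dots,Q_L$, a Chernoff-type bound $2^{-L(\epsilon-\log(1+\epsilon_1))}$ for the averaged reliability and rate quantities, a linear-in-$L$ union bound for the pointwise secrecy requirement, a union bound over the $|\calS|^N$ state sequences, and the probabilistic method. The only cosmetic difference is that you re-derive the exponential bound via Markov's inequality applied to $2^{\sum_\ell P_e(\calC|\mathbf{s},Q_\ell)}$ together with $2^x\leq 1+x$ on $[0,1]$, whereas the paper simply cites the same inequality from Csisz\'ar--K\"orner (Eq.~(12.15)).
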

\begin{proof}
Let $Q_1, \cdots, Q_L$ be \ac{iid} according to $P_{Q}$. Then, union bound yields that
\begin{multline}
\P{\text{there exists }\mathbf{s}:~\frac{1}{L}\sum_{\ell=1}^L P_e(\calC|\mathbf{s}, Q_\ell) > \epsilon  \text{ or }  \frac{1}{L}\sum_{\ell = 1}^L P_{R(\mathbf{s})}(\calC|\mathbf{s}, Q_\ell)> \epsilon\text{ or there exists } \ell:~  \mathtt{S}(\calC|\mathbf{s}, Q_\ell) >  \delta }\\
{\leq} \sum_{\mathbf{s}}\left[{\P{\frac{1}{L}\sum_{\ell=1}^L P_e(\calC|\mathbf{s}, Q_\ell) > \epsilon} + \P{\frac{1}{L}\sum_{\ell=1}^L P_{R(\mathbf{s})}(\calC|\mathbf{s}, Q_\ell) > \epsilon}+ \sum_{\ell=1}^L\P{  \mathtt{S}(\calC|\mathbf{s}, Q_\ell) >  \delta}}\right].
\label{eq:prob_robust}
\end{multline}
By~\cite[Equation (12.15)]{csiszar2011information}, we have
\begin{align}
\P{\frac{1}{L}\sum_{\ell=1}^L P_e(\calC|\mathbf{s}, q_\ell) > \epsilon} 
&\leq 2^{-L\pr{\epsilon-\log (1+\epsilon_1)}},
\end{align}
and
\begin{align}
\P{\frac{1}{L}\sum_{\ell=1}^L P_{R(\mathbf{s})}(\calC|\mathbf{s}, Q_\ell) > \epsilon} 
&\leq 2^{-L\pr{\epsilon-\log (1+\epsilon_1)}}.
\end{align}
Moreover, because $\P{  \mathtt{S}(\calC|\mathbf{s}, Q_\ell) >  \delta} \leq \epsilon_2$, the right hand side of \eqref{eq:prob_robust} is upper-bounded by
\begin{align}
|\calS|^N\pr{2^{-L\pr{\epsilon-\log(1+\epsilon_1)} + 1} + L \epsilon_2} \stackrel{(a)}{\leq} |\calS|^N\pr{2^{-\frac{1}{2}L\epsilon + 1} + L\epsilon_2},
\end{align}
where $(a)$ follows from $\epsilon  > 2\log(1+\epsilon_1)$. Thus, it is sufficient to prove $|\calS|^N\pr{2^{-\frac{1}{2}L\epsilon + 1} + L \epsilon_2} < 1$, which follows from $N\log |\calS| + 2^{\frac{1}{2}L\epsilon} L \epsilon_2 + 1 < \frac{1}{2}L\epsilon$.
\end{proof}

Applying Lemma~\ref{lm:cr-reduction} on the coding schemes introduced in Corollary~\ref{cor:input-dist-sel}, we obtain a bound on the required amount of common randomness .
\begin{corollary}
\label{cor:com-rand}
For any $\zeta>0$, there exists $\xi>0$  and a sequence of $(N, K_N)$  codes $\{\calC_N = (\mathbf{f}_N, \phi_N, \psi_N, \widehat{\psi}_N, Q_N)\}_{N\geq 1}$ such that for all $\{\mathbf{s}_N\}_{N\geq 1}$ with $\lim_{N\to \infty}\alpha(\mathbf{s}_N) = \alpha$ and $N$ large enough
\begin{align}
\label{eq:cr-red-cr1}
\P{\frac{\psi_N(\mathbf{\overline{X}}, Q_N)}{N} \geq C(\alpha) - \zeta \big |\mathbf{s}_N} &\geq 1 - \frac{1}{N},\displaybreak[0]\\
P_e(\calC_N|\mathbf{s}_N)& \leq \frac{1}{N},\displaybreak[0]\\
\text{for all } q,~\mathtt{S}(\calC_N|\mathbf{s}_N, q) &\leq  2^{-\xi N^{\frac{1}{3}}},\displaybreak[0]\\
\avgH{Q_N} &= O\pr{\log N}.\label{eq:cr-red-cr4}\
\end{align}
\end{corollary}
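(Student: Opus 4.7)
The plan is to apply Lemma~\ref{lm:cr-reduction} directly to the codes furnished by Corollary~\ref{cor:input-dist-sel}, with parameters tuned so that (i) the required average error and rate-deficit tolerances are polynomially small, and (ii) the resulting list of admissible common-randomness values has polynomial size. I would start from the codes of Corollary~\ref{cor:input-dist-sel} invoked with $\zeta/2$ in place of $\zeta$, so that $\epsilon_1 \eqdef 2^{-\xi N^{1/3}}$ bounds both $P_e(\calC_N|\mathbf{s}_N)$ and $\mathbb{E}_{P_{Q}}[P_{R(\mathbf{s})}(\calC_N|\mathbf{s},Q)]$ with the choice $R(\mathbf{s}) \eqdef C(\alpha(\mathbf{s}))-\zeta/2$, while $\delta \eqdef 2^{-\xi N^{1/3}}$ and $\epsilon_2 \eqdef 2^{-2^{\xi N^{2/3}}}$ handle the secrecy tail. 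Continuity of $C(\cdot)$ and the hypothesis $\alpha(\mathbf{s}_N)\to \alpha$ will then yield $R(\mathbf{s}_N) \geq C(\alpha)-\zeta$ for all sufficiently large $N$.

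Next, I would invoke Lemma~\ref{lm:cr-reduction} with $\epsilon \eqdef 1/N$ and $L \eqdef \lceil c\,N^2\rceil$ for a constant $c > 2\log|\calS|$ to verify. Condition $\epsilon > 2\log(1+\epsilon_1)$ holds trivially since $\log(1+\epsilon_1) = O(\epsilon_1)$ is doubly-exponentially smaller than $1/N$. Condition $N\log|\calS| + 2^{L\epsilon/2} L\epsilon_2 + 1 < L\epsilon/2$ becomes, after substitution,
\begin{align*}
N\log|\calS| + 2^{cN/2 + O(1)} \cdot \lceil c N^2\rceil \cdot 2^{-2^{\xi N^{2/3}}} + 1 < cN/2 + O(1),
\end{align*}
and the doubly-exponential factor $\epsilon_2$ overwhelms the $2^{L\epsilon/2}$ blow-up for $N$ large enough, leaving only the linear inequality $N\log|\calS| + O(1) < cN/2$, which is satisfied by the choice of $c$.

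Lemma~\ref{lm:cr-reduction} then delivers $q_1,\dots,q_L\in\calQ$, and I would define the new common randomness $\tilde{Q}_N$ as uniform on $\{q_1,\dots,q_L\}$, yielding $\avgH{\tilde{Q}_N} \leq \log L = 2\log N + O(1) = O(\log N)$. Averaging over $\tilde{Q}_N$, the bounds of the lemma translate immediately into $P_e(\tilde{\calC}_N|\mathbf{s}_N) \leq 1/N$ and $\P{\psi_N/N < R(\mathbf{s}_N)\,|\,\mathbf{s}_N} \leq 1/N$, while the per-$q_\ell$ bound $\mathtt{S}(\calC_N|\mathbf{s}_N,q_\ell)\leq \delta$ survives intact to give $\mathtt{S}(\tilde{\calC}_N|\mathbf{s}_N,q_\ell)\leq 2^{-\xi N^{1/3}}$ for every $\ell$. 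Combining with the continuity argument above gives \eqref{eq:cr-red-cr1}--\eqref{eq:cr-red-cr4}.

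The delicate step is the balancing in condition (ii) of Lemma~\ref{lm:cr-reduction}: $L$ must be large enough so that $L\epsilon/2$ dominates the cardinality term $N\log|\calS|$, yet small enough so that $2^{L\epsilon/2}$ does not overwhelm the secrecy tail $\epsilon_2$. This is precisely why the achievability chain was engineered to yield a doubly-exponentially small $\epsilon_2$ in Corollary~\ref{cor:input-dist-sel}: that double exponential is exactly the slack required to keep $L$ polynomial in $N$, and hence the entropy of $\tilde{Q}_N$ only logarithmic in $N$.
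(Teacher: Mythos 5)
Your proposal is correct and follows essentially the same route as the paper: apply Lemma~\ref{lm:cr-reduction} to the codes of Corollary~\ref{cor:input-dist-sel} with $\epsilon = 1/N$, $L = \Theta(N^2)$, $\delta$ equal to the secrecy threshold, and the doubly-exponential $\epsilon_2$ absorbing the $2^{\frac{1}{2}L\epsilon}$ blow-up, then take the reduced common randomness uniform on $\{q_1,\dots,q_L\}$ so that $\avgH{Q_N} \leq \log L = O(\log N)$. Your extra care is in fact welcome: choosing $L = \lceil c N^2\rceil$ with $c > 2\log|\calS|$ (rather than the literal $L = N^2$, $\epsilon = 1/N$, for which $N\log|\calS| < \tfrac{1}{2}L\epsilon$ fails) is what makes the counting condition of Lemma~\ref{lm:cr-reduction} hold, and spelling out $R(\mathbf{s}) = C(\alpha(\mathbf{s})) - \zeta/2$ together with continuity of $C(\cdot)$ makes the rate hypothesis of the lemma explicit; the only cosmetic slip is calling $\log(1+\epsilon_1)$ ``doubly-exponentially'' small when it is singly exponentially small, which changes nothing since it is still far below $1/N$.
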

\begin{proof}
Consider the sequence of codes $\{\calC_N\}_{N\geq 1}$ from Corollary~\ref{cor:input-dist-sel}. For $\epsilon_1 = 2^{-\xi N^{\frac{1}{3}}}$, $\epsilon_2 = 2^{-2^{\xi N^{\frac{1}{3}}}}$, $\epsilon =  \frac{1}{N}$, $\delta = 2^{-\xi N^{\frac{2}{3}}}$, $L=N^2$, all assumptions in Lemma~\ref{lm:cr-reduction} hold. Thus, we obtain a sequence of codes satisfying~\eqref{eq:cr-red-cr1}-\eqref{eq:cr-red-cr4}.
\end{proof}

\section{Proof of Theorem~\ref{thm:converse}: Converse}
\label{sec:converse-proof}
Suppose $\{\calC_N = (\mathbf{f}_N, \phi_N, \psi_N, \widehat{\psi}_N, Q_N)\}_{N\geq 1}$ achieves a rate $R$ for all sequences $\{\mathbf{s}_N\}_{N\geq 1}$ with $\lim_{N\to\infty}\alpha(\mathbf{s}_N) = \alpha$. Let $\{\mathbf{s}_N\}_{N\geq 1}$ be a particular such sequence, and $\Pi_N$ be a uniformly chosen random permutation on the set $\intseq{1}{N}$ that is independent from all other sources of randomness. For a fixed $N$, suppose the legitimate parties use the code $\calC_N$, and the adversary applies the state sequence $\widetilde{\mathbf{S}}_N \eqdef (s_{\Pi_N(1)}, \cdots, s_{\Pi_N(n)})$. Considering a random state sequence defined in such a way allows us to upper-bound the rate of a coding scheme that operates well for all state sequences with a fixed type, although we considered a fixed state sequence in our problem formulation in Section~\ref{sec:probl-form-main}. By conditioning on different values of $\Pi_N$, it follows that
\begin{align}
\lim_{N\to \infty} \P{\psi_N(\mathbf{\overline{X}}, Q_N) \neq \widehat{\psi}_N(\mathbf{Y}, Q_N) \text{ or there exists } k\in\intseq{1}{\psi_N(\mathbf{\overline{X}}, Q_N)}: W_k \neq \widehat{W}_k}&=0,\\
\lim_{N\to\infty} \avgI{\mathbf{W}; \Pi_N, \mathbf{Z}} = \lim_{N\to\infty}\avgI{\mathbf{W}; \mathbf{Z}|\Pi_N}	 &= 0,\
\end{align}
and 
\begin{align}
\lim_{N\to\infty} \P{\frac{\psi_N(\mathbf{\overline{X}}, Q_N)}{N} < R } = 0.
\end{align}
Moreover,  we define 
\begin{align}
E_N \eqdef \indic{\psi_N(\mathbf{\overline{X}}, Q_N) \geq  NR \text{ and for all } k\in\intseq{1}{\psi_N(\mathbf{\overline{X}}, Q_N)}: W_k = \widehat{W}_k},
\end{align}
which indicates whether the transmission of the first $NR$ bits is successful, and $\widetilde{\mathbf{W}} \eqdef \left(W_1, \cdots, W_{\lfloor NR\rfloor}\right)$. Then,
\begin{align}
\avgI{\widetilde{\mathbf{W}} ; \mathbf{Y}}  &= \avgH{\widetilde{\mathbf{W}} } - \avgH{\widetilde{\mathbf{W}} |\mathbf{Y}}\\\displaybreak[0]
&= \avgH{\widetilde{\mathbf{W}} } - \avgH{\widetilde{\mathbf{W}} |\mathbf{Y}, E_N, Q_N} - \avgI{\widetilde{\mathbf{W}} ; E_N, Q_N|\mathbf{Y}}\\\displaybreak[0]
&\geq \avgH{\widetilde{\mathbf{W}} } - \avgH{\widetilde{\mathbf{W}} |\mathbf{Y},E_N} - \avgH{E_N} -\avgH{Q_N}\\\displaybreak[0]
&\geq (NR - 1) -  \avgH{\widetilde{\mathbf{W}} |\mathbf{Y},E_N} - \avgH{E_N} - \avgH{Q_N}\\\displaybreak[0]
&= (NR-1) -  \P{E_N=0}\avgH{\widetilde{\mathbf{W}} |\mathbf{Y},E_N = 0} -\P{E_N=1}\avgH{\widetilde{\mathbf{W}} |\mathbf{Y},E_N = 1}- \avgH{E_N}-\avgH{Q_N}\\\displaybreak[0]
&\stackrel{(a)}{=} (NR-1) -  \P{E_N=0}\avgH{\widetilde{\mathbf{W}} |\mathbf{Y},E_N = 0} - \avgH{E_N}-\avgH{Q_N}\\
&= (NR-1) -  \P{E_N=0}NR - \avgH{E_N}-\avgH{Q_N}
\end{align}
where $(a)$ follows since for $E_N=1$, $\widetilde{\mathbf{W}}$ is a function of $\mathbf{Y}$ and $Q_N$. Moreover, if  we define $\widetilde{Z}_i \eqdef (\widetilde{S}_i, Z_i)$ and $\widetilde{\mathbf{Z}} \eqdef (\widetilde{Z}_1, \cdots, \widetilde{Z}_N)$, we have $\avgI{\widetilde{\mathbf{W}}; \widetilde{\mathbf{Z}}} \leq \avgI{\mathbf{W}; \widetilde{\mathbf{Z}}} \leq \avgI{\mathbf{W}; \Pi_N, \mathbf{Z}}$, which is vanishing. Thus, applying~\cite[Lemma 17.12]{csiszar2011information}, we have
\begin{align}
\label{eq:rate-converse}
&\frac{(NR-1) -  \P{E_N=0}NR - \avgH{E_N}-\avgH{Q_N} -  \avgI{\mathbf{W}; \Pi_N, \mathbf{Z}}}{N} \\
&\phantom{====}\leq \frac{1}{N}\left(\avgI{\widetilde{\mathbf{W}} ; \mathbf{Y}}-\avgI{\widetilde{\mathbf{W}} ;\widetilde{\mathbf{Z}}}\right)\displaybreak[0]\\
&\phantom{====}=  \frac{1}{N}\sum_{i=1}^N \left(\avgI{\widetilde{\mathbf{W}} ;Y_i|Y_1, \cdots, Y_{i-1},\widetilde{Z}_{i+1}, \cdots, \widetilde{Z}_{N}}-\avgI{\widetilde{\mathbf{W}} ;\widetilde{Z}_i|Y_1, \cdots, Y_{i-1},\widetilde{Z}_{i+1}, \cdots, \widetilde{Z}_{N}}\right)\displaybreak[0]\\
&\phantom{====}= \avgI{V_N;Y_{J_N}|U_N} - \avgI{V_N;\widetilde{Z}_{J_N}|U_N},
\end{align}
where  $J_N$ is a  random variable with uniform distribution on $\intseq{1}{N}$ and independent of all other random variables, $U_N\eqdef (J_N, Y_1, \cdots Y_{J_N-1}, \widetilde{Z}_{J_N+1}, \cdots \widetilde{Z}_{N})$, and $V_N\eqdef (\widetilde{\mathbf{W}}, U_N)$. One can check that $U_N-V_N-X_{J_N}-Y_{J_N}\widetilde{Z}_{J_N}$ holds; thus, we can write the joint \ac{PMF} as
\begin{align}
P_{U_NV_NX_{J_N}Y_{J_N}\widetilde{Z}_{J_N}}
&= P_{U_NV_NX_{J_N}Y_{J_N}{Z}_{J_N}S_{J_N}}\\
&= P_{U_NV_N|X_{J_N}Y_{J_N}{Z}_{J_N}S_{J_N}}P_{Y_{J_N}{Z}_{J_N}|X_{J_N}S_{J_N}}P_{X_{J_N}|S_{J_N} }P_{S_{J_N}}\\
&= P_{U_N|V_N}P_{V_N|X_{J_N}}P_{Y_{J_N}{Z}_{J_N}|X_{J_N}S_{J_N}}P_{X_{J_N}|S_{J_N} }P_{S_{J_N}}\\
&= P_{U_N|V_N}P_{V_N|X_{J_N}}P_{Y_{J_N}{Z}_{J_N}|X_{J_N}S_{J_N}}P_{X_{J_N} }P_{S_{J_N}}\\
&=P_{U_N|V_N}P_{V_NX_{J_N}}P_{Y_{J_N}{Z}_{J_N}|X_{J_N}S_{J_N}}P_{S_{J_N}}\\
&=P_{U_N|V_N}P_{V_NX_{J_N}}W_{YZ|XS}P_{S_{J_N}}.
\end{align}

Thus, $\avgI{V_N;Y_{J_N}|U_N} - \avgI{V_N;\widetilde{Z}_{J_N}|U_N} \leq C\pr{\alpha(\mathbf{s}_N)}$. Furthermore, we know $\lim_{N\to \infty} \P{E_N=0} = 0$, and 

\begin{align}
\lim_{N\to\infty}\frac{(NR-1) -  \P{E_N=0}NR - \avgH{E_N}-\avgH{Q_N} -  \avgI{\mathbf{W}; \Pi_N, \mathbf{Z}}}{N} = R.
\end{align}
 Therefore, by taking the limit of both sides of \eqref{eq:rate-converse} and using the continuity of $C(\alpha)$ in $\alpha$, we obtain
\begin{align}
R \leq C(\alpha).
\end{align}
\section{Discussion and Conclusion}
\label{sec:discussion}
In the presence of a causal feedback channel controlled by the same states as the wiretap channel, our results show that secrecy and reliability can be achieved as if the state sequence had been known with hindsight. We have considered binary-state channels to simplify our notation and proofs, but all results extend to finite-state channels. Specifically, for general finite alphabets, one should substitute the weight of states with the \emph{type} of the states. Our proof then carries over nearly unchanged, with perhaps the exception of the type estimation of the states based on the output of the feedback channel in Section~\ref{sec:estimation}, which requires some care. The result follows, for instance, if there exists a symbol for which all channel output distributions corresponding to all states are linearly independent, so that by solving a system of linear equations, the transmitter can estimate the type of the states.

While our model is still far from capturing the full range of active attacks that one could envision in realistic situations, it casts a more optimistic light onto what information-theoretic security may offer in the presence of active adversaries.
\appendices
\modfirst{
\section{Proof of Lemma~\ref{lm:universal_list_decode}}
\label{sec:list-dec}

 We first define a one-shot code with list decoder and introduce a generic universal list decoder.
\begin{definition}
\label{def:one_shot_list_code}
For a channel $(\calX, W_{Y|X}, \calY)$, an $(M, \modfirst{\ell}, \epsilon)$ list code $\mathcal{C}$ is a pair of encoder/decoder $(f, \phi)$ with $f:\intseq{1}{M}\to\calX$ and $\mathbf{\phi} = (\phi_1, \cdots, \phi_{\modfirst{\ell}}):\calY\to\intseq{1}{M}^{\modfirst{\ell}}$ satisfying
\begin{align}
P_e(W_{Y|X}, f, \phi)\eqdef\frac{1}{M}\sum_{w=1}^M \sum_{y\in\calY}W_{Y|X}(y|f(w))\indic{\text{for all } i\in\intseq{1}{\modfirst{\ell}}~\phi_i(y) \neq w} \leq \epsilon.
\end{align}
Furthermore, for a given function $\nu:\calX\times\calY \to \mathbb{R}$, an encoder $f:\intseq{1}{M}\to\calX$, and  $\modfirst{\ell}\in\intseq{1}{M}$, we define a universal list decoder $\phi[\nu, f, \modfirst{\ell}]: \calY\to \intseq{1}{M}^\modfirst{\ell}$ as
\begin{align}
\phi[\nu, f, \modfirst{\ell}](y) \eqdef \mathop{\text {argmax}}_{(w_1, \cdots, w_{\modfirst{\ell}}):w_1<\cdots<w_{\modfirst{\ell}}} \sum_{i=1}^{\modfirst{\ell}} \nu(f(w_i), y).
\end{align}
In other words, $\phi[\nu, f, \modfirst{\ell}](y)$ outputs the $\modfirst{\ell}$ distinct indices in $\intseq{1}{M}$ with the largest $\nu(f(\cdot), y)$.
\end{definition}
Let $(\calX, W_{Y|X}, \calY)$ be a channel, and $P_X$ be a distribution over $\calX$. If $F:\intseq{1}{M}\to\calX$ is a random encoder such that $F(1), \cdots, F(M)$ are \ac{iid} according to $P_X$, the following lemma upper-bounds the expected value of the probability of error for the random list code.
\begin{lemma}
\label{lm:one_shot_list}
For any function $\nu:\calX\times\calY\to\mathbb{R}$,
\begin{align}
\E[F]{P_e(W_{Y|X}, F, \phi[\nu, F, \modfirst{\ell}])} \leq \sum_{x, y}P_X(x)W_{Y|X}(y|x) \min\pr{1, \pr{\frac{eMq(x, y)}{\modfirst{\ell}}}^{\modfirst{\ell}} }\label{eq:one-shot-list}
\end{align}
where $q(x, y) \eqdef \sum_{\tilde{x}}P_X(\tilde{x}) \indic{\nu(x, y) \leq \nu(\tilde{x}, y)}=\P[P_X]{\nu(X, y) \geq \nu(x, y)}$.
\end{lemma}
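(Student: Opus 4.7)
The plan is a direct one-shot random-coding argument: by symmetry over messages, reduce to analyzing the error event for the transmission of a single message, then bound the number of competing codewords that beat it in the $\nu$-score by a binomial tail estimate.

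First, since $F(1), \ldots, F(M)$ are i.i.d.\ under $P_X$ and the decoder $\phi[\nu, F, \ell]$ is symmetric in the message indices, the average error probability equals the error probability when the transmitted message is $W = 1$:
\[
\E[F]{P_e(W_{Y|X}, F, \phi[\nu, F, \ell])} = \P{1 \notin \phi[\nu, F, \ell](Y)},
\]
where $Y$ denotes the channel output produced by $F(1)$ through $W_{Y|X}$. I would then condition on $F(1) = x$ and $Y = y$, whose joint law is $P_X(x) W_{Y|X}(y|x)$. Under this conditioning, $F(2), \ldots, F(M)$ remain i.i.d.\ $P_X$ and independent of $(x, y)$, so message $1$ fails to land in the top-$\ell$ list precisely when at least $\ell$ of the $M-1$ competing codewords satisfy $\nu(F(w), y) \geq \nu(x, y)$. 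By the definition of $q(x, y)$, the count $N$ of such competitors is $\mathrm{Binomial}(M - 1, q(x, y))$.

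Next I would bound $\P{N \geq \ell}$ in two complementary ways. Trivially, $\P{N \geq \ell} \leq 1$. A union bound over size-$\ell$ subsets of $\intseq{2}{M}$ yields $\P{N \geq \ell} \leq \binom{M-1}{\ell} q(x,y)^{\ell} \leq \binom{M}{\ell} q(x,y)^{\ell} \leq (eM/\ell)^{\ell} q(x, y)^{\ell}$, using the standard estimate $\binom{M}{\ell} \leq (eM/\ell)^{\ell}$. Taking the pointwise minimum of the two bounds and averaging $(x, y)$ against $P_X(x) W_{Y|X}(y|x)$ produces exactly the right-hand side of \eqref{eq:one-shot-list}.

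The only delicate point is the tie-breaking convention in the argmax defining $\phi[\nu, F, \ell]$: since the lemma is an upper bound, I would resolve ties pessimistically against message $1$, which is already absorbed by the non-strict inequality $\nu(\tilde X, y) \geq \nu(x, y)$ in the definition of $q(x, y)$. I do not anticipate a substantive obstacle beyond this bookkeeping; the argument is essentially one-shot random coding combined with a combinatorial tail bound, and the min-of-two-bounds form is exactly what allows the estimate to remain useful both in the small-$q$ regime (where the polynomial bound dominates) and in the large-$q$ regime (where it trivializes to $1$).
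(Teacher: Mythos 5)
Your proposal is correct and takes essentially the same route as the paper's proof: reduce by symmetry to a single transmitted message, observe that the number of competing codewords with $\nu(\tilde{x},y)\geq \nu(x,y)$ is binomial with success probability $q(x,y)$, and bound its tail by $\min\pr{1,\pr{eMq(x,y)/\ell}^{\ell}}$ before averaging over $P_X(x)W_{Y|X}(y|x)$. The only cosmetic difference is that you obtain the tail estimate via a union bound over size-$\ell$ subsets together with $\binom{M}{\ell}\leq \pr{eM/\ell}^{\ell}$, whereas the paper follows the Chernoff-type steps of Merhav's list-decoding analysis; both yield the identical expression.
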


\begin{proof}
Using the definition of probability of error and  the symmetry of the messages, we have
\begin{align}
&\E[F]{P_e(W_{Y|X}, F, \phi[\nu, F, \modfirst{\ell}])} \nonumber\\
&= \sum_{x_1, \cdots, x_M, y } \prod_{w=1}^M P_X(x_w) \frac{1}{M}\sum_{w'=1}^M W_{Y|X}(y|x_{w'})\indic{\exists w''_1 < \cdots <w''_{\modfirst{\ell}}:\forall i\in\intseq{1}{\modfirst{\ell}}~ \nu(x_{w''_i}, y) \geq \nu(x_{w'}, y), w_i'' \neq w'}\displaybreak[0]\nonumber\\
&= \sum_{x, y}W_{Y|X}(y|x) P_X(x)  \sum_{x_2, \cdots, x_M } \prod_{w=2}^M P_X(x_w) \indic{\exists 1 < w''_1 < \cdots < w''_{\modfirst{\ell}}:\forall i\in\intseq{1}{\modfirst{\ell}}~ \nu(x_{w''_i}, y) \geq \nu(x, y)}\displaybreak[0]\nonumber\\
  &\leq  \sum_{x, y}W_{Y|X}(y|x) P_X(x)  \sum_{x_1, \cdots, x_M } \prod_{w=1}^M P_X(x_w) \indic{\exists w''_1 < \cdots <w''_{\modfirst{\ell}}:\forall i\in\intseq{1}{\modfirst{\ell}}~ \nu(x_{w''_i}, y) \geq \nu(x, y)}.\label{eq:list_error}
\end{align}
Hence, if we define $B(x, y)$ as a random variable distributed according to a Binomial$(M, q(x, y))$ distribution, we have
\begin{align}
 \sum_{x_1, \cdots, x_M } \prod_{w=1}^M P_X(x_w) \indic{\exists w''_1 < \cdots <w''_{\modfirst{\ell}}:\forall i\in\intseq{1}{\modfirst{\ell}}~ \nu(x_{w''_i}, y) \geq \nu(x, y)} =  \P{B(x, y) \geq \modfirst{\ell}}.
\end{align}
Following the same steps as in~\cite{merhav2014list}, we obtain
\begin{align}
 \P{B(x, y) \geq \modfirst{\ell}} &\leq 
 \min\pr{1, e^{-\modfirst{\ell}\pr{\ln \frac{\modfirst{\ell}}{Mq(x,y)} - 1}}}\\
 &= \min\pr{1, \pr{\frac{eM q(x,y)}{\modfirst{\ell}}}^{\modfirst{\ell}}},
 \label{eq:binom}
\end{align}
and the result follows by substituting \eqref{eq:binom} into \eqref{eq:list_error}.
\end{proof}
We are now ready to use the one-shot result Lemma~\ref{lm:one_shot_list} to prove that a random code has good performance for all states given that the list size is properly chosen. Fixing the function $\nu$, the decoder is universal and does not depend on the channel. In particular,  we use the standard choice of $\nu(\mathbf{x}, \mathbf{y}) \eqdef I(\mathbf{x} \wedge \mathbf{y})$ and follow the analysis of \cite{merhav2014list} to upper-bound the right-hand side of \eqref{eq:one-shot-list}. 
\begin{proof}[Proof of Lemma~\ref{lm:universal_list_decode}]
For $\nu(\mathbf{x}, \mathbf{y}) \eqdef I(\mathbf{x}\wedge\mathbf{y})$,  Lemma~\ref{lm:one_shot_list} implies that
\begin{align}
\P{\mathbf{A}^b \notin \calL^b} \leq \sum_{\mathbf{x}, \mathbf{y}}P_X^\pn(\mathbf{x})W_{\mathbf{Y}|\mathbf{X}\mathbf{S}}(\mathbf{y}|\mathbf{x},\mathbf{s}) \min\pr{1, \pr{\frac{e2^kq(\mathbf{x}, \mathbf{y})}{\modfirst{\ell}^b}}^{\modfirst{\ell^b}}},
\end{align}
for 
\begin{align}
q(\mathbf{x}, \mathbf{y}) \eqdef \sum_{\tilde{\mathbf{x}}}P_X^\pn(\tilde{\mathbf{x}}) \indic{I(\mathbf{x}\wedge \mathbf{y}) \leq I(\tilde{\mathbf{x}}\wedge \mathbf{y})}.
\end{align}
To upper-bound $q(\mathbf{x}, \mathbf{y})$, let $\mathbf{y} \in \calT_{Q_Y}$ and $\mathbf{x} \in \calT_{Q_{X|Y}}(\mathbf{y})$ for some $Q_Y$ and $Q_{X|Y}$. Then,
\begin{align}
q(\mathbf{x}, \mathbf{y}) 
&= \sum_{\widetilde{\mathbf{x}}}P_X^\pn(\widetilde{\mathbf{x}}) \indic{I(\mathbf{x}\wedge \mathbf{y}) \leq I(\widetilde{\mathbf{x}}\wedge \mathbf{y})}\displaybreak[0]\\
&= \sum_{\widetilde{Q}_{X|Y}\in\calP_n(\calX|\calY)}P_X^\pn(\calT_{\widetilde{Q}_{X|Y}}(\mathbf{y})) \indic{I(Q_Y, Q_{X|Y}) \leq I(Q_Y, \widetilde{Q}_{X|Y})}\displaybreak[0]\\
&\stackrel{(a)}{\leq}  \sum_{\widetilde{Q}_{X|Y}\in\calP_n(\calX|\calY)}2^{-n\D{\widetilde{Q}_{X|Y}}{P_X|Q_Y}} \indic{I(Q_Y, Q_{X|Y}) \leq I(Q_Y, \widetilde{Q}_{X|Y})}\displaybreak[0]\\
&\stackrel{(b)}{\leq} (n+1)^{\card{\calX}\card{\calY}}2^{-n\pr{\min_{\widetilde{Q}_{X|Y}:I(Q_Y, Q_{X|Y}) \leq I(Q_Y, \widetilde{Q}_{X|Y}) } \D{\widetilde{Q}_{X|Y}}{P_X|Q_Y}} }\displaybreak[0]\\
&= (n+1)^{\card{\calX}\card{\calY}}2^{-n\pr{\min_{\widetilde{Q}_{X|Y}:I(Q_Y, Q_{X|Y}) \leq I(Q_Y, \widetilde{Q}_{X|Y}) } I(Q_Y, \widetilde{Q}_{X|Y}) + \D{\widetilde{Q}_{X|Y} \circ Q_Y}{P_X} }}\displaybreak[0]\\
&\leq  (n+1)^{\card{\calX}\card{\calY}} 2^{-nI(Q_Y, Q_{X|Y})}\displaybreak[0]\\
&=  2^{-n\pr{I(\mathbf{x}\wedge \mathbf{y}) + O\pr{\frac{\log n}{n}}}}
\end{align}
where $(a)$ follows from~\cite[Equation 2.8]{csiszar2011information}, and $(b)$ follows from $\card{\calP_n(\calX|\calY)} \leq (n+1)^{\card{\calX}\card{\calY}}$. Hence, we obtain
\begin{multline}
\sum_{\mathbf{x}, \mathbf{y}}P_X^\pn(\mathbf{x})W_{\mathbf{Y}|\mathbf{X}\mathbf{S}}(\mathbf{y}|\mathbf{x},\mathbf{s})\min\pr{1, \pr{\frac{e2^kq(\mathbf{x}, \mathbf{y})}{\modfirst{\ell}^b}}^{\modfirst{\ell}^b}}\\
\begin{split}
&\leq \sum_{\mathbf{x}, \mathbf{y}}P_X^\pn(\mathbf{x})W_{\mathbf{Y}|\mathbf{X}\mathbf{S}}(\mathbf{y}|\mathbf{x},\mathbf{s}) 2^{-n \modfirst{\ell}^b\left[I(\mathbf{x}\wedge\mathbf{y}) - \frac{1}{n}\pr{k - \log \frac{e}{\modfirst{\ell}^b}} + O\pr{\frac{\log n}{n}}\right]^+}\\
&=  \sum_{V_{XY|S} \in \calP_n(\calX\times \calY |\calS)}(P_X^\pn \times W_{\mathbf{Y}|\mathbf{X}\mathbf{S}})(\calT_{V_{XY|S}}(\mathbf{s}))2^{-n \modfirst{\ell}^b\left[I(V_{XY|S}\circ P_S) - \frac{1}{n}\pr{k - \log \frac{e}{\modfirst{\ell}^b}} + O\pr{\frac{\log n}{n}}\right]^+}\\
&\leq   \sum_{V_{XY|S} \in \calP_n(\calX\times \calY |\calS)} 2^{-n\D{V_{XY|S}}{P_X\times W_{Y|XS}|P_S}} 2^{-nL\left[I(V_{XY|S}\circ P_S) -\frac{1}{n}\pr{k - \log \frac{e}{\modfirst{\ell}^b}}  + O\pr{\frac{\log n}{n}}\right]^+}\\
&\stackrel{(a)}{\leq}  2^{-n\pr{\min_{V_{XY|S}}\D{V_{XY|S}}{P_X\times W_{Y|XS}|P_S} +\modfirst{\ell}^b\left[I(V_{XY|S}\circ P_S) -\frac{1}{n}\pr{k - \log \frac{e}{\modfirst{\ell}^b}}  + O\pr{\frac{\log n}{n}}\right]^+}},
\end{split}
\end{multline}
where $(a)$ follows since $\log |\calP_n(\calX\times \calY |\calS)| = O(\log n)$.
This completes the proof of \eqref{eq:list_bound_avc}.

We now turn to the proof of \eqref{eq:avc_list_asymptotic}. To use \eqref{eq:list_bound_avc}, we consider two cases for $V_{XY|S}$. If $V_{XY|S}$ is such that  $I(V_{XY|S}\circ P_S) - \frac{1}{n}\pr{k - \log \frac{e}{\modfirst{\ell}^b}} > \frac{1}{2}\zeta$, since $\D{V_{XY|S}}{W\times P_X|P_S} \geq 0$, we obtain for any $0<\xi < \frac{\modfirst{\ell}^b}{2}\zeta $, 
\begin{align}
2^{-n \left[\D{V_{XY|S}}{W\times P_X|P_S} + \modfirst{\ell}^b\left[I(V_{XY|S}\circ P_S) - \frac{1}{n}\pr{k - \log \frac{e}{\modfirst{\ell}^b}} + O\pr{\frac{\log n}{n}}\right]^+\right]} \leq 2^{-\xi n},
\end{align}
Otherwise, we have $I(V_{XY|S}\circ P_S) -\frac{1}{n}\pr{k - \log \frac{e}{\modfirst{\ell}^b}}  \leq \frac{1}{2}\zeta$. Recall that $\frac{1}{n}\pr{k - \log \frac{e}{\modfirst{\ell}^b}}  = \avgI{X;Y} - \zeta$ by \eqref{eq:list-size}, and as a result,
\begin{align}
 I(V_{XY|S}\circ P_S) - \frac{\log e}{n} - \avgI{X;Y} + \zeta\leq \frac{1}{2}\zeta. 
\end{align}
Note that $\avgI{X;Y} = I((P_X \times W_{Y|XS}) \circ P_S)$. Thus, for large enough $n$,
\begin{align}
 I(V_{XY|S}\circ P_S) - I((W_{Y|XS}\times P_X)\circ P_S) \leq -\frac{1}{3} \zeta.
\end{align}
By the continuity of mutual information, we know that there exists $\epsilon > 0$ such that 
\begin{align}
\V{V_{XY|S} \circ P_S, (W_{Y|XS}\times P_X)  \circ P_S}\geq~\epsilon.
\end{align}
Moreover, this $\epsilon$ can be chosen independent of $P_S$. Applying Pinsker's inequality, we obtain 
\begin{align}
\D{V_{XY|S} \circ P_S}{ (W_{Y|XS}\times P_X)  \circ P_S} \geq \epsilon^2.
\end{align}
 Finally, by the convexity of KL-divergence, we conclude that
\begin{align}
\D{V_{XY|S}}{ W_{Y|XS}\times P_X|P_S} \geq  \D{V_{XY|S}\circ P_S}{ (W\times P_X)  \circ P_S} \geq \epsilon^2.
\end{align}
which implies that
\begin{align}
2^{-n \left[\D{V_{XY|S}}{W_{Y|XS}\times P_X|P_S} + \modfirst{\ell}^b\left[I(V\circ P_S) -\frac{1}{n}\pr{k - \log \frac{e}{\modfirst{\ell}^b}}  + O\pr{\frac{\log n}{n}}\right]^+\right]} \leq 2^{-\epsilon^2 n}.
\end{align}
Thus, equation \eqref{eq:avc_list_asymptotic} holds for any $0<\xi < \min\pr{\epsilon^2, \frac{\zeta}{2}}$.
\section{Proof of Lemma~\ref{lm:avc-code}}
\label{sec:layer-sec}
We first derive a super-exponential bound for the probability that a randomly chosen code is not secure for the adversary's channel; such bounds have already been used to prove the achievability for the wiretap channel Type {II}~\cite{Goldfeld2016a, nafea2016new}, and we provide here an alternative proof that relies on upper-bounding the mutual information by an average of KL-divergence terms. This approach simplifies the argument and  may be of independent interest. 

We first establish our results in a one-shot setup and then extend them to $n$ channel uses. Consider $K$ independent bits $\mathbf{W} = (W_1, \cdots, W_K)$ that are encoded through an encoder $f: \{0,1\}^K \to \calX$ and assume that the codeword $f(\mathbf{W})$ is transmitted over the channel $(\calX, W_{Z|X}, \calZ)$. For a fixed $f$ and $m$, $\avgI{W_1, \cdots, W_m; Z}$ is the information leaked about the first $m$ bits of the message using the encoder $f$. The following lemma establishes bounds for the leakage obtained with random codes.
\begin{lemma}
\label{lem:one_shot_secrecy}
Suppose $F:\{0,1\}^K\to \calX$ is a random encoder such that $ \{F(\mathbf{w}):\mathbf{w} \in \{0,1\}^K\}$ are \ac{iid} according to $P_X$. For   $P_Z \eqdef W_{Z|X} \circ P_X$, $\mu_Z \eqdef \min_{z:P_Z(z) > 0} P_Z(z)$, and any $\gamma$, we have
\begin{align}
\label{eq:one_e_i}
\E[F]{\avgI{W_1, \cdots, W_m; Z}} \leq  \log \left(\frac{1}{\mu_Z} +1\right)\P[P_X \times W_{Z|X}]{\log \frac{W_{Z|X}(Z|X)}{P_Z(Z)} \geq \gamma }+ 2^{-K + m + \gamma+1}.
\end{align}
Furthermore, for all $\eta > \E[F]{\avgI{W_1, \cdots, W_m; Z}}$, we have
\begin{align}
\P[F]{\avgI{W_1, \cdots, W_m; Z} \geq (1+\epsilon) \eta'} \leq 2^{-\frac{2^m\epsilon^2\eta}{2\ln 2 \log \frac{1}{\mu_Z}}}.
\label{eq:one_p_i}
\end{align}
\end{lemma}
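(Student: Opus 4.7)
Both bounds flow from the identity
\[
\avgI{W_1,\ldots,W_m;Z} \;=\; \E[W_1,\ldots,W_m]{\D{P_{Z\mid W_1,\ldots,W_m}}{P_Z}},
\]
where $P_{Z\mid w_1,\ldots,w_m}(z) = 2^{-(K-m)}\sum_{w_{m+1},\ldots,w_K} W_{Z|X}(z\mid F(w_1,\ldots,w_K))$ is the output mixture induced by the $2^{K-m}$ codewords that share the prefix $(w_1,\ldots,w_m)$. Exploiting the form of this mixture under the randomness of $F$ is the main engine behind both parts.

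For \eqref{eq:one_e_i} I would run a soft-covering / resolvability argument based on an information-density truncation at level $\gamma$. For each output $z$, decompose $P_{Z\mid w_1,\ldots,w_m}(z)$ into a ``typical'' part that aggregates only the codewords $x$ with $\log\pr{W_{Z|X}(z\mid x)/P_Z(z)} \leq \gamma$ and an ``atypical'' complement. The typical part is pointwise bounded by $2^{\gamma}P_Z(z)$, and its expectation over $F$ equals $P_Z(z)$ restricted to the typical set. Using $\log(1+u)\leq u/\ln 2$, Jensen's inequality, and the independence of the $2^{K-m}$ codewords one controls the typical contribution to the expected divergence by $2^{-K+m+\gamma+1}$. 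The atypical part has expected total mass at most $\P[P_X\times W_{Z|X}]{\log\pr{W_{Z|X}(Z|X)/P_Z(Z)}\geq \gamma}$, and within that event the divergence integrand is crudely bounded using $P_Z(z)\geq \mu_Z$ to yield the prefactor $\log(1/\mu_Z+1)$. Adding the two pieces gives \eqref{eq:one_e_i}.

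For \eqref{eq:one_p_i} the crucial structural observation is that for distinct tuples $(w_1,\ldots,w_m)$ the mixtures $P_{Z\mid w_1,\ldots,w_m}$ are assembled from disjoint batches of $2^{K-m}$ random codewords. Consequently, viewed as functions of $F$, the $2^m$ divergences $D_{w_1,\ldots,w_m}\eqdef \D{P_{Z\mid w_1,\ldots,w_m}}{P_Z}$ are mutually independent, and each is deterministically bounded by $\log(1/\mu_Z)$. Rewriting
\[
\avgI{W_1,\ldots,W_m;Z} \;=\; 2^{-m}\sum_{w_1,\ldots,w_m} D_{w_1,\ldots,w_m}
\]
realises the mutual information as the normalised sum of $2^m$ independent bounded nonnegative summands with total mean at most $2^m\eta$ by hypothesis on $\eta$. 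A multiplicative Chernoff/Bernstein bound for such sums then delivers, in natural-exponential form, $\exp\pr{-\epsilon^2\, 2^m\eta /(2\log(1/\mu_Z))}$, which is exactly the claimed bound after converting to base~$2$.

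The main obstacle is the resolvability step: the truncation level $\gamma$ has to be kept as a free parameter, and the typical and atypical contributions must be bounded through disjoint arguments, one using the exponential ``gap'' $K-m-\gamma$ and the other using the information-density tail probability at $\gamma$. Once \eqref{eq:one_e_i} is in place, \eqref{eq:one_p_i} is routine thanks to the independence of the $2^m$ conditional divergences; no additional decoupling or martingale machinery is needed.
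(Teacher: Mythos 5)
Your proposal follows essentially the same route as the paper: it bounds the leakage by the average over message prefixes of $\D{P_{Z\mid w_1,\ldots,w_m}}{P_Z}$, controls each term by a one-shot resolvability/soft-covering bound with truncation level $\gamma$ (the paper simply cites this step as a known lemma, which you sketch a reproof of), and obtains the high-probability statement exactly as the paper does, from the observation that the $2^m$ conditional divergences depend on disjoint batches of codewords, are each bounded by $\log\pr{1/\mu_Z}$, and therefore admit a Chernoff-type bound. One small correction: your opening relation is an inequality rather than an identity, since $P_Z = W_{Z|X}\circ P_X$ is not the code-induced output marginal, so one only has $\avgI{W_1,\ldots,W_m;Z}\leq 2^{-m}\sum_{w_1,\ldots,w_m}\D{P_{Z\mid w_1,\ldots,w_m}}{P_Z}$ — but since only upper bounds are needed (and this is precisely how the paper uses it), the argument is unaffected.
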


\begin{proof}
The first inequality in \eqref{eq:one_p_i} follows from standard channel resolvability results~\cite{hayashi2006general, bloch2013strong}, but for completeness, we provide the proof here using our notation. Let $\widehat{P}_{W_1\cdots W_m Z}$ denote the \ac{PMF} of $(W_1, \cdots, W_m, Z)$ for a particular realization of the encoder $F$.  Notice that
\begin{align}
\avgI{W_{1}, \cdots, W_{m};Z}
&= \D{\widehat{P}_{W_{1} \cdots W_{m}Z}}{\widehat{P}_{W_1 \cdots W_m}\times \widehat{P}_Z}\\
&\leq  \D{\widehat{P}_{W_{1} \cdots W_{m}Z}}{\widehat{P}_{W_1 \cdots W_m}\times P_Z}\\
&= \sum_{w_1, \cdots, w_m}\frac{1}{2^m} \D{\widehat{P}_{Z|W_1=w_1 \cdots W_m=w_m}}{P_Z}\label{eq:i_sum_d} .
\end{align}
Additionally, for a fixed $w_1, \cdots, w_m$,
\begin{align}
\widehat{P}_{Z|W_1 \cdots W_m}(z|w_1, \cdots, w_m) = \frac{1}{2^{K-m}}\sum_{w_{m+1},\cdots, w_K} W_{Z|X}(z|f(w_1, \cdots, w_K)),
\end{align}
so that applying the one-shot channel resolvability upper-bound~\cite[Lemma 3]{tahmasbi2017second}, we obtain 
\begin{align}
\E[F]{\D{\widehat{P}_{Z|W_1=w_1 \cdots W_m=w_m}}{P_Z}} &\leq  \log \left(\frac{1}{\mu_Z} +1\right)\P[P_X \times W_{Z|X}]{\log \frac{W_{Z|X}(Z|X)}{P_Z(Z)} \geq \gamma }+2^{-K + m + \gamma+1}.
\end{align}

To obtain \eqref{eq:one_p_i}, we show that the expression in \eqref{eq:i_sum_d} is sum of independent random variables so that we can use Chernoff bound. To this end, notice that  $\D{\widehat{P}_{Z|W_1=w_1, \cdots, W_m=w_m}}{P_Z}$ depends on the codewords corresponding to indices $\{(w_1, \cdots, w_m, w_{m+1}, \cdots, w_K):(w_{m+1}, \cdots, w_K)\in\{0, 1\}^{K-m}\}$, and for distinct $w_1, \cdots, w_m$, these sets are disjoint. Since the codewords are generated independently, 
\begin{align}
\left\{\D{\widehat{P}_{Z|W_1=w_1, \cdots, W_m=w_m}}{P_Z}: (w_1, \cdots, w_m)\in\{0,1\}^m\right\}
\end{align}
 are also independent. Furthermore, to find a uniform upper-bound for $\D{\widehat{P}_{Z|W_1=w_1, \cdots, W_m=w_m}}{P_Z}$, note that
\begin{align}
\D{\widehat{P}_{Z|W_1=w_1, \cdots, W_m=w_m}}{P_Z} &= \sum_{z} \widehat{P}_{Z|W_1, \cdots, W_m}(z|w_1, \cdots, w_m) \log \frac{\widehat{P}_{Z|W_1, \cdots, W_m}(z|w_1, \cdots, w_m)}{P_Z(z)} \\
&\leq \sum_{z} \widehat{P}_{Z|W_1, \cdots, W_m}(z|w_1, \cdots, w_m) \log \frac{1}{P_Z(z)} \\
&\leq \log \frac{1}{\mu_Z}.
\end{align}
Thus,~\cite[Lemma LD]{ahlswede1989identification} implies that
\begin{align}
\P[F]{\avgI{W_1, \cdots, W_m; Z} \geq  (1+\epsilon)\eta} 
&\leq \P[F]{\frac{1}{2^m}\sum_{w_1, \cdots, w_m} \D{\widehat{P}_{Z|W_1=w_1 \cdots W_m=w_m}}{P_Z} \geq (1+\epsilon)\eta}\\
&\leq \exp_2\pr{-\frac{2^m\epsilon^2\eta}{2\ln 2 \log \frac{1}{\mu_Z}}}.
\end{align}  
\end{proof}

We now apply Lemma~\ref{lem:one_shot_secrecy} to an \ac{AVC} $W_{Z|XS}$ to prove Lemma~\ref{lm:s_z_bound}. First note that for memory-less channels, the information density $\log \frac{W_{Z|X}(Z|X)}{P_Z(Z)}$ concentrates around its expectation, which is mutual information. Furthermore, since the number of codewords is increasing exponentially with the block-length, by Lemma~\ref{lem:one_shot_secrecy}, we obtain a doubly-exponential upper-bound on the probability that a random code is not secure for a specific state, and therefore, we can use union bound to a bound for the probability that a code is secure for all states whose number increases exponentially.

\begin{proof}[Proof of Lemma~\ref{lm:s_z_bound}]
For a fixed $\mathbf{s}^b \in \calS^n$, to apply Lemma~\ref{lem:one_shot_secrecy}, let $(\mathbf{X}, \mathbf{Z})$ be distributed according to $\prod_{i=1}^n \pr{P_X \times W_{Z|XS=s_i^b}}$, $P_{Z|S=s} \eqdef W_{Z|XS=s} \circ P_X$, $\mu_Z \eqdef \min_{z\in\calZ, s\in\calS P_{Z|S}(z|s) > 0 } P_{Z|S}(z|s)$, and $ \gamma \eqdef  \pr{\mathbb{I}_Z^\alpha + \frac{1}{2}\zeta} n$. Define
\begin{align}
\Lambda \eqdef \max_{s, x, z: W_{Z|XS}(z|xs)P_X(x) > 0} \left|\log \frac{W_{Z|XS}(z|xs)}{P_{Z|S}(z|s)} - I(P_X, W_{Z|XS=s})\right|,
\end{align}
and note that $\Lambda < \infty$. Therefore, by Hoeffding's inequality,
\begin{align}
\P{\log \prod_{i=1}^n\frac{W_{Z|XS}(Z_i|X_i,s_i)}{P_{Z|S}(Z_i|s_i)} \geq \gamma }
&= \P{\sum_{i=1}^n \log\frac{W_{Z|XS=s_i}(Z_i|X_i)}{P_{Z|S}(Z_i|s_i)} \geq \gamma}\displaybreak[0]\\
&\leq \exp_2\pr{-\frac{(\gamma - n\mathbb{I}_Z^\alpha)^2}{2n\Lambda^2}}\displaybreak[0]\\
&=\exp_2\pr{-\frac{n\zeta^2}{8\Lambda^2}}.
\end{align}
Further, by definition of $m^{b+1}$ in Eq.~\eqref{eq:mb-def} and $\gamma$,
\begin{align}
2^{- k + m(\mathbf{s}) + \gamma+1 }
&= 2^{-k + k - \left \lceil (\mathbb{I}_Z^\alpha + \zeta)n\right\rceil + (\mathbb{I}_Z^\alpha + \frac{1}{2}\zeta)n + 1}\\
&\leq 2^{-\frac{1}{2}\zeta n + 1}.
\end{align}
Accordingly, $\E[F^b]{\avgI{\overline{\mathbf{A}}^b; \mathbf{Z}^b | \mathbf{s}^b} }$ is upper-bounded by
\begin{align}
\ n\log\pr{\frac{1}{\mu_Z} + 1}\exp_2\pr{-\frac{n\zeta^2}{8\Lambda^2}} + 2^{-\frac{1}{2}\zeta n + 1}.
\end{align}
By choosing  $0<\xi < \min\left(\frac{1}{2}\zeta,{\frac{\zeta^2}{8\Lambda^2}}\right)$, for large enough $n$, $\E[F^b]{\avgI{\overline{\mathbf{A}}^b; \mathbf{Z}^b | \mathbf{s}^b} }\leq \frac{2}{3}2^{-\xi n}\eqdef \eta $. Thus, with $\epsilon = \frac{1}{2}$, \eqref{eq:one_p_i} yields that
\begin{align}
\P[F^b]{{\avgI{\overline{\mathbf{A}}^b; \mathbf{Z}^b | \mathbf{s}^b} }\geq \pr{1+\frac{1}{2}}\eta} 
&=\P[F^b]{ {\avgI{\overline{\mathbf{A}}^b; \mathbf{Z}^b | \mathbf{s}^b} })\geq 2^{-\xi n} } \\
&\leq \exp_2\pr{-\frac{2^{m^{b+1}}2^{-\xi n}}{12\ln( 2) n\log\frac{1}{\mu_Z}}}.
\end{align}
As a result, by the union bound,
\begin{align}
\label{eq:p_bound_i_multi_lemma}
\P[F^b]{\text{there exits } \mathbf{s}^b: {\avgI{\overline{\mathbf{A}}^b; \mathbf{Z}^b | \mathbf{s}^b} }\geq 2^{-\xi n} }
& \leq \sum_{\mathbf{s}} \P[F^b]{ {\avgI{\overline{\mathbf{A}}^b; \mathbf{Z}^b | \mathbf{s}^b} }\geq 2^{-\xi n}} \\
& \leq |\calS|^n \exp_2\pr{-\frac{2^{m^{b+1}}2^{-\xi n}}{12\ln( 2) n\log\frac{1}{\mu_Z}}}.
\end{align}
Notice that  $m^{b+1} \geq \zeta n$ since $k \geq \pr{\mathbb{I}_Z^{\max} + 2 \zeta}n+1$. Thus, for $0<\xi < \frac{1}{3} \zeta $, we obtain that for large $n$,
\begin{align}
|\calS|^n \exp_2\pr{-\frac{2^{m^{b+1}}2^{-\xi n}}{12\ln (2) n\log\frac{1}{\mu_Z}}} \leq 2^{-2^{\xi n}}.
\end{align}
\end{proof}

\section{Proof of Lemma~\ref{lm:estimation_performance}}
\label{sec:estimation-lem}
 We first show that for all $\epsilon>0$,
\begin{align}
\label{eq:estimation_error}
 \P{|\widehat{\alpha}^{b-1} - \overline{\alpha}^{b-1}| > \epsilon} \leq 2\exp\pr{-\frac{t(p_1-p_0)^2\epsilon^2}{2}} + 2 \exp\pr{-\frac{\epsilon^2t}{2}}.
\end{align}
To prove this, notice that
\begin{align}
\label{eq:mb_estimation_bayes}
 \P{|\widehat{\alpha}^{b-1} - \overline{\alpha}^{b-1}| > \epsilon}
 &= \sum_{\mathbf{j}}\P{\mathbf{J}^{b-1}=\mathbf{j}} \P{|\widehat{\alpha}^{b-1} - \overline{\alpha}^{b-1}| > \epsilon|\mathbf{J}^{b-1}=\mathbf{j}}.
\end{align}
To upper-bound the above summation, we split it into two terms. First, if we define 
\begin{align}
\calE \eqdef \left\{\mathbf{j} = (j_1, \cdots, j_t):~j_1< \cdots < j_t,~\left| \frac{1}{t}\sum_{i=1}^t s_{j_i}^{b-1}-\overline{\alpha}^{b-1}\right| > \frac{\epsilon}{2}\right\},
\end{align}
we have
\begin{align}
\label{eq:mb_first_trem}
\sum_{\mathbf{j}\in\calE}\P{\mathbf{J}^{b-1}=\mathbf{j}} \P{|\widehat{\alpha}^{b-1} - \overline{\alpha}^{b-1}| > \epsilon|\mathbf{J}^{b-1}=\mathbf{j}}&\leq \sum_{\mathbf{j}\in\calE}\P{\mathbf{J}^{b-1}=\mathbf{j}}\\
&= \P{\mathbf{J}^{b-1}\in \calE}.
\end{align}
To upper-bound $\P{\mathbf{J}^{b-1}\in \calE}$, we express this probability in terms of the \ac{CDF} of a random variable with hypergeometric distribution. Let $H$ shows the number of successes in $t$ draws \emph{without replacement} from a population of size $n'$ with exactly $\wt{\mathbf{s}^{b-1}}$ successes in the population. Then, we have 
\begin{align} 
\P{\mathbf{J}^{b-1}\in \calE} 
&= \P{\left|H - t\frac{\wt{\mathbf{s}^{b-1}}}{n'} \right| \geq t \frac{\epsilon}{2}}\\
&\stackrel{(a)}{\leq} 2\exp\pr{-\frac{\epsilon^2 t}{2}}
\end{align}
where $(a)$ follows from standard tail bounds for hypergeometric distribution (e.g., see~\cite{hoeffding1963probability}). We now fix a $\mathbf{j} = (j_1, \cdots, j_t) \notin \calE$ such that $j_1 < \cdots < j_t$ and  define
\begin{align}
T_i &\eqdef \frac{\indic{\overline{X}_{j_i}^{ b}=\overline{x}_0} - p_0 }{p_1 - p_0 }.
\end{align}
We know that 
\begin{align}
\E{T_i} &= s_{j_i}^{b-1}\\
\frac{ - p_0}{p_1-p_0 } &\leq T_i \leq \frac{1 - p_0 }{p_1 -p_0 }.
\end{align}
Hence, by Hoeffding's inequality, we obtain
\begin{align}
\P{|\widehat{\alpha}^{b-1} - \overline{\alpha}^{b-1}| > \epsilon |\mathbf{J}^{b-1}=\mathbf{j}}
&= \P{\left|\frac{1}{t}\sum_{i=1}^t T_i - \overline{\alpha}^{b-1}\right| \geq \epsilon }\displaybreak[0]\\
&= \P{\left|\frac{1}{t}\sum_{i=1}^t T_i -\frac{\sum_{i=1}^t s_{j_i}^{b-1}}{t}   - \overline{\alpha}^{b-1} + \frac{\sum_{i=1}^t s_{j_i}^{b-1}}{t} \right| \geq \epsilon }\displaybreak[0]\\
&\leq \P{\left|\frac{1}{t}\sum_{i=1}^t T_i -\frac{\sum_{i=1}^t s_{j_i}^{b-1}}{t}\right| \geq \epsilon - \left|   - \overline{\alpha}^{b-1} + \frac{\sum_{i=1}^t s_{j_i}^{b-1}}{t} \right|  }\displaybreak[0]\\
&\leq 2\exp\pr{-2t(p_1-p_0)^2\pr{\epsilon - \left|\overline{\alpha}^{b-1} - \frac{\sum_{i=1}^t s_{j_i}^{b-1}}{t} \right|}^2}\displaybreak[0]\\
\label{eq:mb_second_trem}
&\stackrel{(a)}{\leq}  2\exp\pr{-\frac{t(p_1-p_0)^2\epsilon^2}{2}},
\end{align}
where $(a)$ follows since $\mathbf{j}\notin \calE$. Therefore, combining \eqref{eq:mb_estimation_bayes}, \eqref{eq:mb_first_trem}, and \eqref{eq:mb_second_trem}, we obtain
\begin{align}
 \P{|\widehat{\alpha}^{b-1} - \overline{\alpha}^{b-1}| > \epsilon}
 &\leq  2\exp\pr{-\frac{t(p_1-p_0)^2\epsilon^2}{2}} + 2 \exp\pr{-\frac{\epsilon^2t}{2}}.
\end{align}

We now turn to the proof of a lower bound for $\P{\widehat{m}^b \leq m^b}$. Notice that for $b=1$, by definition $\widehat{m}^b = m^b = 0$, and we have $\P{\widehat{m}^b \leq m^b} = 1$. Therefore, we assume $b > 1$. Then, 
\begin{align}
\P{\widehat{m}^b \leq m^b} 
&= \P{ k - \left\lceil\pr{\mathbb{I}_Z^{\widehat{\alpha}^{b-1}}+2\zeta}n\right\rceil \leq  k - \left\lceil\pr{\mathbb{I}_Z^{{\alpha}^{b-1}}+\zeta}n\right\rceil}\\
&\geq \P{\pr{\mathbb{I}_Z^{{\alpha}^{b-1}}+\zeta}n+1 \leq \pr{\mathbb{I}_Z^{\widehat{\alpha}^{b-1}}+2\zeta}n }\\
&=\P{\mathbb{I}_Z^{{\alpha}^{b-1}}-\mathbb{I}_Z^{\widehat{\alpha}^{b-1}} \leq \zeta-\frac{1}{n} }.
\end{align}
Since $\mathbb{I}^\alpha_Z$ is uniformly continuous in $\alpha$, there exists $\zeta_2$, independent of $n$, such that if $|\alpha_1 - \alpha_2| \leq \zeta_2$, then $\mathbb{I}_Z^{\alpha_1}-\mathbb{I}_Z^{\alpha_2} \leq \zeta-\frac{1}{n}$. Hence, we have
\begin{align}
\P{\mathbb{I}_Z^{{\alpha}^{b-1}}-\mathbb{I}_Z^{\widehat{\alpha}^{b-1}} \leq \zeta-\frac{1}{n} }
&\geq \P{|{{\alpha}^{b-1}}-{\widehat{\alpha}^{b-1}}| \leq \zeta_2 }\\\displaybreak[0]
&=  \P{|{{\alpha}^{b-1}} - \overline{\alpha}^{b-1}-{\widehat{\alpha}^{b-1}} +  \overline{\alpha}^{b-1}| \leq \zeta_2 }\\\displaybreak[0]
&\geq \P{|{{\alpha}^{b-1}} - \overline{\alpha}^{b-1}| + |-{\widehat{\alpha}^{b-1}} +  \overline{\alpha}^{b-1}| \leq \zeta_2 }\\\displaybreak[0]
&= \P{\left | \frac{\wt{\mathbf{s}^{b-1}} - \sum_{i=1}^t s_{J^{b-1}_i}^{b-1}}{n}  - \frac{\wt{\mathbf{s}^{b-1}}}{n'} \right| + |-{\widehat{\alpha}^{b-1}} +  \overline{\alpha}^{b-1}| \leq \zeta_2 }\\
& = \P{\left | \frac{\wt{\mathbf{s}^{b-1}}t}{nn'} - \frac{ \sum_{i=1}^t s_{J^{b-1}_i}^{b-1}}{n}  \right| + |-{\widehat{\alpha}^{{b-1}}} +  \overline{\alpha}^{b-1}| \leq \zeta_2 }\\
&\geq \P{ |-{\widehat{\alpha}^{b-1}} +  \overline{\alpha}^{b-1}| \leq \zeta_2 - \frac{2t}{n} }\\
&\geq  1- 2\exp\pr{-\frac{t(p_1-p_0)^2\pr{\zeta_2 - \frac{2t}{n}}^2}{2}} + 2 \exp\pr{-\frac{\pr{\zeta_2 - \frac{2t}{n}}^2t}{2}}
\end{align}
Therefore, for $\frac{t}{n}$ small enough, we can find $\xi > 0$ such that $\P{\widehat{m}^b \leq m^b}\geq 1 - 2^{-\xi t}$. By same argument, we can show that $\P{\widehat{u}^b \geq u^b} \geq 1 - 2^{-\xi t}$ and $\P{\widehat{\ell}^b \geq \ell^b} \geq  1 - 2^{-\xi t}$, which completes the proof of lemma.

\end{proof}}
\bibliographystyle{IEEEtran}
\bibliography{active-wiretap}

\begin{thebibliography}{10}
\providecommand{\url}[1]{#1}
\csname url@samestyle\endcsname
\providecommand{\newblock}{\relax}
\providecommand{\bibinfo}[2]{#2}
\providecommand{\BIBentrySTDinterwordspacing}{\spaceskip=0pt\relax}
\providecommand{\BIBentryALTinterwordstretchfactor}{4}
\providecommand{\BIBentryALTinterwordspacing}{\spaceskip=\fontdimen2\font plus
\BIBentryALTinterwordstretchfactor\fontdimen3\font minus
  \fontdimen4\font\relax}
\providecommand{\BIBforeignlanguage}[2]{{%
\expandafter\ifx\csname l@#1\endcsname\relax
\typeout{** WARNING: IEEEtran.bst: No hyphenation pattern has been}%
\typeout{** loaded for the language `#1'. Using the pattern for}%
\typeout{** the default language instead.}%
\else
\language=\csname l@#1\endcsname
\fi
#2}}
\providecommand{\BIBdecl}{\relax}
\BIBdecl

\bibitem{tahmasbi2017learning}
M.~Tahmasbi, M.~R. Bloch, and A.~Yener, ``Learning adversary's actions for
  secret communication,'' in \emph{Proc. of IEEE International Symposium on
  Information Theory}, Aachen, Germany, Jun. 2017, pp. 2708--2712.

\bibitem{Wyner1975}
A.~D. Wyner, ``{The wire-tap channel},'' \emph{Bell Syst. Tech. Journal},
  vol.~54, pp. 1355--1387, 1975.

\bibitem{Csiszar1978}
I.~Csisz{\'{a}}r and J.~K{\"{o}}rner, ``{Broadcast Channels with Confidential
  Messages},'' \emph{IEEE Transactions on Information Theory}, vol.~24, no.~3,
  pp. 339--348, May 1978.

\bibitem{yener2015wireless}
A.~Yener and S.~Ulukus, ``Wireless physical-layer security: Lessons learned
  from information theory,'' \emph{Proceedings of the IEEE}, vol. 103, no.~10,
  pp. 1814--1825, Oct. 2015.

\bibitem{Bloch2008c}
M.~Bloch, J.~Barros, M.~R.~D. Rodrigues, and S.~W. McLaughlin, ``{W}ireless
  information-theoretic security,'' \emph{{IEEE} {T}ransactions on
  {I}nformation {T}heory}, vol.~54, no.~6, pp. 2515--2534, Jun. 2008.

\bibitem{Liang2008a}
Y.~Liang, H.~V. Poor, and S.~Shamai~(Shitz), ``Secure communication over fading
  channels,'' \emph{IEEE Transactions on Information Theory}, vol.~54, no.~6,
  pp. 2470--2492, Jun. 2008.

\bibitem{Tekin2008}
E.~Tekin and A.~Yener, ``The general {G}aussian multiple-access and two-way
  wiretap channels: Achievable rates and cooperative jamming,'' \emph{IEEE
  Transactions on Information Theory}, vol.~54, no.~6, pp. 2735--2751, Jun.
  2008.

\bibitem{he2014providing}
X.~He and A.~Yener, ``Providing secrecy with structured codes: Two-user
  {G}aussian channels,'' \emph{IEEE Transactions on Information Theory},
  vol.~60, no.~4, pp. 2121--2138, Apr. 2014.

\bibitem{He2008a}
------, ``The role of feedback in two-way secure communications,'' \emph{IEEE
  Transactions on Information Theory}, vol.~59, no.~12, pp. 8115--8130, Dec.
  2013.

\bibitem{Pierrot2011a}
A.~J. Pierrot and M.~R. Bloch, ``{S}trongly secure communications over the
  two-way wiretap channel,'' \emph{{IEEE} {T}ransactions on {I}nformation
  {F}orensics and {S}ecurity}, vol.~6, no.~3, pp. 595--605, Sep. 2011.

\bibitem{ElGamal2013}
A.~{El Gamal}, O.~O. Koyluoglu, M.~Youssef, and H.~E. Gamal, ``Achievable
  secrecy rate regions for the two-way wiretap channel,'' \emph{IEEE
  Transactions on Information Theory}, vol.~59, no.~12, pp. 8099--8114, Dec.
  2013.

\bibitem{He2010b}
X.~He and A.~Yener, ``Cooperation with an untrusted relay: A secrecy
  perspective,'' \emph{IEEE Transactions on Information Theory}, vol.~56,
  no.~8, pp. 3807--3827, Aug. 2010.

\bibitem{He2014a}
------, ``{MIMO} wiretap channels with unknown and varying eavesdropper channel
  states,'' \emph{{IEEE} {T}ransactions on {I}nformation {T}heory}, vol.~60,
  no.~11, pp. 6844--6869, Nov. 2014.

\bibitem{He2011}
------, ``Secrecy when the eavesdropper controls its channel states,'' in
  \emph{Proc. of IEEE International Symposium on Information Theory}, St.
  Petersburg, Russia, Aug. 2011, pp. 618--622.

\bibitem{MolavianJazi2009}
E.~MolavianJazi, M.~Bloch, and J.~N. Laneman, ``{A}rbitrary jamming can
  preclude secure communications,'' in \emph{Proc. 47th Annual Allerton
  Conference on Communication, Control, and Computing}, Monticello, IL, Sep.
  2009, pp. 1069--1075.

\bibitem{Schaefer2015}
R.~F. Schaefer, H.~Boche, and H.~V. Poor, ``Secure communication under channel
  uncertainty and adversarial attacks,'' \emph{Proceedings of the IEEE}, vol.
  103, no.~10, pp. 1796--1813, Oct. 2015.

\bibitem{Goldfeld2016a}
Z.~Goldfeld, P.~Cuff, and H.~H. Permuter, ``Arbitrarily varying wiretap
  channels with type constrained states,'' \emph{IEEE Transactions on
  Information Theory}, vol.~62, no.~12, pp. 7216--7244, Dec. 2016.

\bibitem{Chou2017}
R.~A. Chou and A.~Yener, ``The {G}aussian multiple access wiretap channel when
  the eavesdropper can arbitrarily jam,'' in \emph{Proc. of IEEE International
  Symposium on Information Theory}, Aachen, Germany, Jun. 2017, pp. 1958--1962.

\bibitem{ozarow1984wire}
L.~H. Ozarow and A.~D. Wyner, ``Wire-tap channel {II},'' \emph{AT\&T Bell
  Laboratories technical journal}, vol.~63, no.~10, pp. 2135--2157, 1984.

\bibitem{Nafea2015}
M.~Nafea and A.~Yener, ``Wiretap channel {II} with a noisy main channel,'' in
  \emph{Proc. of IEEE International Symposium on Information Theory}, Hong
  Kong, Jun. 2015, pp. 1159--1163.

\bibitem{Goldfeld2016c}
Z.~Goldfeld, P.~Cuff, and H.~H. Permuter, ``Semantic-security capacity for
  wiretap channels of type {II},'' \emph{IEEE Transactions on Information
  Theory}, vol.~62, no.~7, pp. 3863--3879, Jul. 2016.

\bibitem{nafea2016new}
M.~Nafea and A.~Yener, ``A new wiretap channel model and its strong secrecy
  capacity,'' \emph{IEEE Transactions on Information Theory}, vol.~64, no.~3,
  pp. 2077--2092, Mar. 2018.

\bibitem{nafea2016newmult}
------, ``A new multiple access wiretap channel model.'' in \emph{Proc. of IEEE
  Information Theory Workshop}, Cambridge, United Kingdom, Sep. 2016, pp.
  349--353.

\bibitem{nafea2017newmod}
------, ``New models for interference and broadcast channels with confidential
  messages,'' in \emph{Proc. of IEEE International Symposium on Information
  Theory}, Aachen, Germany, Jun. 2017, pp. 1808--1812.

\bibitem{Nafea2019}
------, ``Generalizing multiple access wiretap and wiretap {II} channel models:
  Achievable rates and cost of strong secrecy,'' \emph{IEEE Transactions on
  Information Theory}, vol.~PP, pp. 1--1, 2019.

\bibitem{shulman2003communication}
N.~Shulman, ``Communication over an unknown channel via common broadcasting,''
  Ph.D. dissertation, Tel Aviv University, 2003.

\bibitem{lomnitz2013universal}
Y.~Lomnitz and M.~Feder, ``Universal communication over arbitrarily varying
  channels,'' \emph{IEEE Transactions on Information Theory}, vol.~59, no.~6,
  pp. 3720--3752, Jun. 2013.

\bibitem{eswaran2010zero}
K.~Eswaran, A.~D. Sarwate, A.~Sahai, and M.~C. Gastpar, ``Zero-rate feedback
  can achieve the empirical capacity,'' \emph{IEEE Transactions on Information
  Theory}, vol.~56, no.~1, pp. 25--39, Jan. 2010.

\bibitem{zou2013layered}
S.~Zou, Y.~Liang, L.~Lai, and S.~Shamai, ``Layered decoding and secrecy over
  degraded broadcast channels,'' in \emph{Proc. of 14th Workshop on Signal
  Processing Advances in Wireless Communications}, Darmstadt, Germany, Jun.
  2013, pp. 679--683.

\bibitem{Gungor2013}
O.~Gungor, J.~Tan, C.~E. Koksal, H.~El-Gamal, and N.~B. Shroff, ``Secrecy
  outage capacity of fading channels,'' \emph{IEEE Transactions on Information
  Theory}, vol.~59, no.~9, pp. 5379--5397, Sep. 2013.

\bibitem{Bubeck2012}
S.~Bubeck and N.~Cesa-Bianchi, ``Regret analysis of stochastic and
  nonstochastic multi-armed bandit problems,'' \emph{Foundations and Trends in
  Machine Learning}, vol.~5, no.~1, pp. 1--122, Dec. 2012.

\bibitem{merhav1998universal}
N.~Merhav and M.~Feder, ``Universal prediction,'' \emph{IEEE Transactions on
  Information Theory}, vol.~44, no.~6, pp. 2124--2147, Oct. 1998.

\bibitem{Chen2008}
Y.~Chen and A.~J. Han~Vinck, ``Wiretap channel with side information,''
  \emph{{IEEE} {T}ransactions on {I}nformation {T}heory}, vol.~54, no.~1, pp.
  395--402, 2008.

\bibitem{Khisti2011}
A.~Khisti, S.~N. Diggavi, and G.~W. Wornell, ``Secret-key agreement with
  channel state information at the transmitter,'' \emph{{IEEE} {T}ransactions
  on {I}nformation {F}orensics and {S}ecurity}, vol.~6, no.~3, pp. 672--681,
  2011.

\bibitem{Chia2012}
Y.~K. Chia and A.~E. Gamal, ``{Wiretap channel with causal state
  information},'' \emph{IEEE Transactions on Information Theory}, vol.~58,
  no.~5, pp. 2838--2849, 2012.

\bibitem{Bjelakovic2012}
I.~Bjelakovi{\'{c}}, H.~Boche, and J.~Sommerfeld, \emph{Capacity results for
  arbitrarily varying wiretap channels}.\hskip 1em plus 0.5em minus 0.4em\relax
  Berlin, Heidelberg: Springer Berlin Heidelberg, 2013, pp. 123--144.

\bibitem{csiszar2011information}
I.~Csiszar and J.~K{\"o}rner, \emph{Information theory: coding theorems for
  discrete memoryless systems}.\hskip 1em plus 0.5em minus 0.4em\relax
  Cambridge University Press, 2011.

\bibitem{Ahlswede1978}
R.~Ahlswede, ``Elimination of correlation in random codes for arbitrarily
  varying channels,'' \emph{Zeitschrift f{\"u}r Wahrscheinlichkeitstheorie und
  Verwandte Gebiete}, vol.~44, no.~2, pp. 159--175, Jun 1978.

\bibitem{merhav2014list}
N.~Merhav, ``List decoding random coding exponents and expurgated exponents,''
  \emph{IEEE Transactions on Information Theory}, vol.~60, no.~11, pp.
  6749--6759, Nov. 2014.

\bibitem{hayashi2006general}
M.~Hayashi, ``General nonasymptotic and asymptotic formulas in channel
  resolvability and identification capacity and their application to the
  wiretap channel,'' \emph{IEEE Transactions on Information Theory}, vol.~52,
  no.~4, pp. 1562--1575, Apr. 2006.

\bibitem{bloch2013strong}
M.~R. Bloch and J.~N. Laneman, ``Strong secrecy from channel resolvability,''
  \emph{IEEE Transactions on Information Theory}, vol.~59, no.~12, pp.
  8077--8098, Dec. 2013.

\bibitem{tahmasbi2017second}
M.~Tahmasbi and M.~R. Bloch, ``First and second order asymptotics in covert
  communication,'' \emph{arXiv preprint arXiv:1703.01362}, 2017.

\bibitem{ahlswede1989identification}
R.~Ahlswede and G.~Dueck, ``Identification via channels,'' \emph{IEEE
  Transactions on Information Theory}, vol.~35, no.~1, pp. 15--29, Jan. 1989.

\bibitem{hoeffding1963probability}
W.~Hoeffding, ``Probability inequalities for sums of bounded random
  variables,'' \emph{Journal of the American statistical association}, vol.~58,
  no. 301, pp. 13--30, Mar. 1963.

\end{thebibliography}

\end{document}